\def\BibTeX{{\rm B\kern-.05em{\sc i\kern-.025em b}\kern-.08em
    T\kern-.1667em\lower.7ex\hbox{E}\kern-.125emX}}
\newtheorem{lemma}{Lemma}
\newtheorem{remark}{Remark}
\newcommand{\K}{\ensuremath{\mathcal K}}
\newcommand{\N}{\ensuremath{\mathcal N}}
\renewcommand{\L}{\ensuremath{\mathcal L}}
\newcommand{\e}{\ensuremath{\mathbf{e}}}
\newcommand{\W}{\ensuremath{\mathcal W}}
\newcommand{\Pow}{\ensuremath{\mathbf{P}}}
\renewcommand{\P}{\ensuremath{\mathrm{P}}}
\renewcommand{\r}{\ensuremath{\mathbf{r}}}
\newcommand{\C}{\ensuremath{\mathrm C}}
\newcommand{\Q}{\ensuremath{\mathbf{Q}}}
\renewcommand{\u}{\ensuremath{\mathbf{u}}}
\newcommand{\T}{\ensuremath{\mathbf{T}}}
\newcommand{\D}{\ensuremath{\mathcal D}}
\renewcommand{\H}{\ensuremath{\mathcal H}}
\newcommand{\I}{\ensuremath{\mathbf{I}}}
\newcommand{\q}{\ensuremath{\mathbf q}}
\renewcommand{\v}{\ensuremath{\mathbf v}}
\newcommand{\E}{\ensuremath{\mathbb E}}
\DeclareMathOperator{\mydiag}{\mathbf{diag}}  
\DeclareMathOperator{\mytrace}{\mathbf{tr}}  
\DeclareMathOperator{\myrank}{\mathbf{rank}}  
\renewcommand{\O}{\ensuremath{\mathcal O}}
\def \treq {\stackrel{\tiny \Delta}{=}}
\def\@seccntformat#1{\@ifundefined{#1@cntformat}%
	{\csname the#1\endcsname\quad}
	{\csname #1@cntformat\endcsname}
	}
\newcommand{\removelatexerror}{\let\@latex@error\@gobble}
\definecolor{orcidlogocol}{HTML}{A6CE39}
\tikzset{
  orcidlogo/.pic={
    \fill[orcidlogocol] svg{M256,128c0,70.7-57.3,128-128,128C57.3,256,0,198.7,0,128C0,57.3,57.3,0,128,0C198.7,0,256,57.3,256,128z};
    \fill[white] svg{M86.3,186.2H70.9V79.1h15.4v48.4V186.2z}
                 svg{M108.9,79.1h41.6c39.6,0,57,28.3,57,53.6c0,27.5-21.5,53.6-56.8,53.6h-41.8V79.1z M124.3,172.4h24.5c34.9,0,42.9-26.5,42.9-39.7c0-21.5-13.7-39.7-43.7-39.7h-23.7V172.4z}
                 svg{M88.7,56.8c0,5.5-4.5,10.1-10.1,10.1c-5.6,0-10.1-4.6-10.1-10.1c0-5.6,4.5-10.1,10.1-10.1C84.2,46.7,88.7,51.3,88.7,56.8z};
  }
}
\newcommand\orcidicon[1]{\href{https://orcid.org/#1}{\mbox{\scalerel*{
\begin{tikzpicture}[yscale=-1,transform shape]
\pic{orcidlogo};
\end{tikzpicture}
}{|}}}}
\begin{document}

\title{ Aerial Intelligent Reflecting Surface Enabled Terahertz Covert Communications in Beyond-5G Internet of Things}

\author{Authors}
\author{\vspace{-1mm}Milad Tatar Mamaghani\textsuperscript{\orcidicon{0000-0002-3953-7230}}\,,~\IEEEmembership{Graduate Student Member,~IEEE}, Yi Hong\textsuperscript{\orcidicon{0000-0002-1284-891X}}\,,~\IEEEmembership{Senior Member,~IEEE}%
 \thanks{
 \par Milad   Tatar   Mamaghani   and   Yi   Hong   are   with   the   Department of   Electrical   and   Computer   Systems   Engineering,   Faculty   of   Engineering,   Monash   University,   Melbourne,   VIC   3800,   Australia   (corresponding author e-mail:   milad.tatarmamaghani@monash.edu). 
 This work is supported by the Australian Research Council (ARC) through the ARC discovery project DP210100412.
}
}
\markboth{}{}
\maketitle
\begin{abstract}
Unmanned aerial vehicles (UAVs) are envisioned to be extensively employed for assisting wireless communications in the Internet of Things (IoT). 
On the other hand, terahertz (THz) enabled intelligent reflecting surface (IRS) is expected to be one of the core enabling technologies for forthcoming beyond-5G wireless communications that promise a broad range of data-demand applications. In this paper, we propose a UAV-mounted IRS (UIRS) communication system over THz bands for confidential data dissemination from an access point (AP) towards multiple ground user equipments (UEs) in IoT networks. Specifically, the AP intends to send data to the scheduled UE, while unscheduled UEs may behave as potential adversaries. \textcolor{black}{To protect information messages from the privacy preservation perspective}, we aim to devise an energy-efficient multi-UAV covert communication scheme, where the UIRS is for reliable data transmissions, and an extra UAV is utilized as an aerial cooperative jammer, opportunistically generating artificial noise (AN) to degrade unscheduled UEs detection, leading to communication covertness improvement.
This poses a novel max-min optimization problem in terms of minimum average energy efficiency (mAEE), aiming to improve covert throughput and reduce UAVs' propulsion energy consumption, subject to satisfying some practical constraints such as the covertness requirements for which we obtain analytical expressions. Since the optimization problem is non-convex, we tackle it via the block successive convex approximation (BSCA) approach to iteratively solve a sequence of approximated convex sub-problems, designing the binary user scheduling, AP's power allocation, maximum AN jamming power, IRS  beamforming,  and both UAVs' trajectory and velocity planning. Finally, we present a low-complex overall algorithm for system performance enhancement with complexity and convergence analysis. Numerical results are provided to verify the analysis and demonstrate significant outperformance of our design over other existing benchmark schemes concerning the mAEE performance.
\end{abstract}

\begin{IEEEkeywords}
Beyond-5G IoT networks, THz covert communications, aerial intelligent reflecting surface (AIRS), cooperative UAVs, trajectory design, resource allocation, convex optimization.
\end{IEEEkeywords}

\section{Introduction}
\lettrine[lines=2]{T}{he} 
wireless communication and networking architectures have been witnessed revolutionary progress over the past few years. Indeed, yesteryear's smartphone-centered networks have gradually enlarged to harmoniously integrate a heterogeneous combination of massive wireless-enabled device equipment ranging from smartphones, connected intelligent vehicles, and wearables, aiming at eventually realizing the truly connected \textit{Internet of Things (IoT)} systems in the form of \textit{Internet of Everything} \cite{Akyildiz2020}. 

This unprecedented proliferation of IoT devices will unquestionably drive exponential growth of wireless network traffic inasmuch as the next generation wireless systems need to offer not only larger system capacity with ultra-reliable and low-latency communication but more flexible networking capabilities to adaptively cater the requirements of the IoT's dynamics \cite{Xu2021b}. In particular, reliable data dissemination is one of the core pillars of an IoT-connected society, wherein the ground wireless access point (AP) typically needs to efficiently transmit information messages to IoT node(s) via multi-hop routing \cite{TatarMamaghani2021a}. Such static networking architecture may admit various shortcomings \cite{Li2020}. First, from the perspective of capital and operational expenditures, deploying and operating traditional terrestrial infrastructures in some areas such as mountainous terrain and marine regions can be costly and inefficient.
Additionally, the higher the number of intermediate nodes in multi-hop routing architectures, the more the network delay; thereby, they might be inappropriate for delay-sensitive IoT applications. Balancing the energy consumption of IoT devices is also imperative for prolonging the lifetime of IoT systems, particularly for IoT nodes having various distances to the AP. Indeed, the closer the IoT node to the AP, the quicker the energy depletion it may face due to heavier load bringing on an energy hole in the overall system. Last but not least, significant computation, synchronization, and control signaling are required for safeguarding such energy-hungry multi-hop IoT systems as they may be vulnerable to various adversary attacks due to the openness of wireless environments. To cope with the aforementioned challenges of IoT development, some promising solutions have recently been proposed in \cite{Jiang2021a, Wu2021}.

\subsection{UAV-IoT Communications}
Unmanned aerial vehicles (UAVs) have recently been identified as a promising technology for a myriad of civilian applications, so much so that the global market for the commercial UAV industry has been visioned to skyrocket some USD 55 billion by 2027 \cite{globalmarket_uav}. With on-demand swift deployment, low-cost operation and maintenance, flexible and controllable maneuverability, UAVs can be widely utilized in a broad range of scenarios such as goods shipment\footnote{Note that in the wake of the recent COVID-19 pandemic, some countries have practically deployed UAVs for lab sample and medical supplies delivery purposes to reduce transportation times and exposure rate \cite{covid19_uav2021}.}, real-time road traffic monitoring, precision agriculture,  remote sensing,  communication relaying, and wireless coverage \cite{Shakhatreh2019}.
Especially in the UAV-aided wireless communication paradigm, thanks to their flexible mobility, UAVs can establish strong line-of-sight (LoS) air-ground (AG) links towards the ground IoT devices, offering excellent wireless coverage and reduced energy consumption for such resource-constraint networks, and thus, overcoming the drawbacks of traditional fixed infrastructures.  To this end, proper path planning/deployment and resource management for the UAV-IoT networks are of significant importance to the extent that a great deal of research has been devoted to the design of such systems (see \cite{Xu2021b, Wan2020, mamaghani2020_tvt, Ding2020, TatarMamaghani2021, Lei2020} and references therein).  However, the majority of previous research efforts have considered utilizing the microwave spectrum bands of sub-6 GHz for UAV communications,  which has already been heavily occupied by traditional wireless systems leading to the spectrum crunch crisis \cite{chen2019survey}. Therefore, UAV-communication system designs on other frequency bands, such as the promising terahertz (THz) bands for the beyond-5G (B5G) UAV-IoT networks, are in demand.

\subsection{THz and IRS Technologies}
THz communication is celebrated to be one of the emerging technologies for B5G  wireless communications thanks to the abundance of unexplored available spectrum ($0.1-10\mathrm{THz}$) and the potentiality of fulfilling remarkable wireless capacity enhancement \cite{chen2019survey, Sarieddeen2020, Chaccour2021, Tekbiyik2020}. Recently, THz transmissions have been investigated for UAV communication applications  \cite{Xu2020,mamaghani2021terahertz}. \textcolor{black}{THz signals can pave the way for sharper directionality and may guarantee improved secure and reliable transmissions compared to traditional low radio frequency (RF) counterparts. However, such benefits come at the cost of some channel peculiarities, such as  highly nature of frequency selectivity, substantial path loss arising from distance-related attenuation as well as the water-vapor absorption phenomenon, which  generally  dependents  on  operational frequency,  distance,  altitude,  and relative  air  composition \cite{chen2019survey}}.

To compensate for their higher propagation attenuation and achieve sustainable capacity enhancement, utilizing \textit{intelligent reflecting surface (IRS)} \cite{Basharat2021, Wu2020a, Bjornson2020} has emerged as a possible solution for removing the barriers of relatively unreliable and costly conventional THz transmissions. Indeed, an IRS is a thin planar meta-surface composed of a large number of reconfigurable scatterers, each of which can independently collect the impinging RF signal, adjust its electromagnetic (EM) properties (e.g., the amplitude and phase shift) in real time under the control of a smart IRS controller, which can be implemented via a field-programmable gate array (FPGA), and then reflect it so as to obtain the desired realization. Plus, IRS can improve the spectral efficiency compared to the traditional half-duplex relaying \cite{TatarMamaghani2018, TatarMamaghani2019a, TatarMamaghani2021} and also does not incur additional cost for sophisticated self-interference cancellation algorithms conventionally utilized for full-duplex relaying \cite{Mamaghani2021}. An IRS passively beamforms signals without the need for any RF transceiver chains and accordingly offers the appealing feature of free of noise-corruption full-duplex relaying. Thus, several works have recently explored the benefits of deploying IRS from different perspectives. For example,  Wu and Zhang studied the joint passive and active beamforming design in an IRS-assisted MIMO system \cite{Wu2019a}.  Pan \textit{et al.} investigated the sum-rate maximization problem for an IRS-aided THz communication system including multiple users each demanding for a different rate, via jointly designing the IRS location and phase shift, sub-band allocation, and power control \cite{Pan2020}. Deshpande \textit{et al.} considered an energy-efficient design for terrestrial IRS-empowered UAV communications via both joint trajectory, transmission power, and the phase shift optimization of an IRS with a fixed location in \cite{Deshpande2021}. 
\textcolor{black}{The aforementioned works primarily focused on exploiting fixed IRS deployed on indoor walls or facades of buildings, which, in turn, poses fundamental limitations such as finding an appropriate installation place, $180\degree$ half-space reflection capability, and significant signal attenuation due to several reflections, particularly in complex urban environments \cite{Lu2020}. Nevertheless, being low profile and lightweight, an IRS can be integrated with aerial platforms such as UAVs to enable intelligent reflection from the sky \cite{long2020reflections}. Such an aerial IRS system can potentially offer $360\degree$ panoramic full-angle reflection, more flexible three-dimensional (3D) network design, and last but not least, stronger channels compared to traditional IRS deployments.}

\subsection{Secure and Covert Communications}
UAV communications are appealing in terms of capacity and coverage improvement thanks to the possibility of highly LoS air-ground (AG) links. Nonetheless, the open nature of such links exposes the security of UAV-aided wireless communications, which is of pivotal importance, at significant risk. This has recently gravitated the research community to incorporate the security paradigm for designing UAV communication systems by mainly utilizing the physical layer techniques \cite{Mamaghani2017, Mamaghani2019} to avoid extra signaling and overheads incurred by conventional upper-layer cryptography. The \textit{information-theoretic secrecy (ITS)} for safeguarding UAV communications has been extensively studied by joint design of trajectory and resource allocation in the literature \cite{Li2020, TatarMamaghani2021, Hong2019, Sun2019e, Zhong2019, Tang2019, Hongliang2018Sec}. However, preventing the content of information message from being deciphered by an eavesdropper for which the ITS aims might be inadequate when the privacy protection matters. As such, in some scenarios, the \textit{existence} of legitimate transmissions needs to be sheltered from a possible vigilant adversary (or the so-called warden), and communicating terminals may desire to transfer messages covertly, since the expose of legitimate transmissions might plausibly attract the warden's attention for launching possible hostile attacks \cite{Soltani2018Covert, Lu2020a, Forouzesh2021}.

\textcolor{black}{
Covert communications, a.k.a. low probability of detection (LPD) communications, have recently emerged  to address the ever-increasing desire for strong security and privacy in 5G-and-beyond wireless networks and IoT by hiding wireless transmissions \cite{Yan2019b}, and drawn significant interest amongst researchers who have established fundamental limits of the LPD communications by presenting the square law limit, which states that $\mathcal{O}(\sqrt{n})$ bits per $n$ channel uses can be reliably and covertly conveyed over the noisy channel between an intended source-destination pair without being detected by an adversary \cite{Bloch2016, Bash2013}. 
He \textit{et al.} explored covert communications considering distribution of  noise uncertainty in a statistical sense \cite{He2017a}. Adopting the technique of channel inversion power control via utilizing a full-duplex receiver for covert communications has been examined in \cite{Hu2019}.}
\textcolor{black}{ The authors in \cite{Kong2021} investigated a static IRS-assisted covert communication system and optimized the achievable rate for covert transmission. The problem of delay-sensitive covert communications over noisy channels with a finite block length has been explored in \cite{Yan2019}.}

\textcolor{black}{
Some recent works have viewed covert communications from the UAV perspective. For example,  performance analysis of terrestrial covert communication with the help of a UAV-aided artificial noise (AN) generation under different fading scenarios has been evaluated in \cite{Liang}. The authors in \cite{Chen2021} studied a four-node UAV-relaying covert communication scheme with finite blocklength to maximize the effective transmission bits between a legitimate source-destination pair against a flying warden.
In the presence of randomly distributed wardens, a covert communication scheme including a ground transmitter and a UAV receiver with the help of a multi-antenna terrestrial jammer adopting the zero-forcing technique was studied in \cite{Chen2021a}. Covert communications for UAV-aided data acquisition from multiple ground users were investigated by Zhou \textit{et al.} in terms of improving max-min average covert transmission rate \cite{Zhou2021d}. \textcolor{black}{Shihao \textit{et al.} explored covert communications from the UAV perspective by the joint optimization of UAV's flying location and transmit power for a three-node system model subject to covertness requirement in two-dimensional \cite{Yan2020} and three-dimensional \cite{Zhou2021e} deployment scenarios, and revealed that the latter could achieve better covertness performance than the former. 
It is worth mentioning that some critical limitations of UAV-assisted systems, such as flight power consumption, have not been taken into account in the aforementioned works \cite{Liang, Chen2021, Chen2021a, Zhou2021d, Yan2020, Zhou2021e}.
Indeed, energy efficiency, being an important performance index for realizing green and sustainable wireless networks and IoT systems, should be carefully considered for the energy-constrained UAV-empowered IoT networks.} Overall, the research endeavors about covert communication for dynamic UAV-aided systems are still in the stage of infancy \cite{Jiang2021b}, leaving many opportunities for future developments in various practical scenarios. }

\subsection{Our Contributions}
\textcolor{black}{Inspired by the aforementioned research, in this work, we consider a wireless communication system over THz bands, where an AP intends to communicate with multiple ground UEs in the presence of environmental blockages. We assume that the AP sends confidential data to \textcolor{black}{the scheduled UE per time slot, a.k.a. Bob},  via a UAV-mounted intelligent reflecting surface (UIRS) due to no direct path in-between, while \textcolor{black}{the unscheduled UEs in the same time slot, a.k.a. Willies}, may not be trustworthy. We devise a novel covert communication protocol to guarantee secure data transmission, and our detailed contributions are summarized below.
\begin{itemize}
    \item 
  We devise an energy-efficient multi-UAV secure covert communication scheme to protect information messages and the privacy of the scheduled UE. In particular, we employ one UIRS operating at THz bands for reliable data transmissions from an AP to the scheduled UE per time slot. At the same time,  a \textcolor{black}{UAV-mounted cooperative jammer (UCJ) is utilized} to opportunistically generate AN while benefiting the spacial diversity, aiming at degrading the detection performance of unscheduled UEs Willies. 
  \item
  We obtain exclusive expressions for the minimum detection performance of unscheduled UEs in terms of the missed detection (MD) rate and the false alarm (FA) rate as the most critical metrics for covertness evaluation. A tight lower bound on the average covert data rate from AP to Bob is also derived.
    \item
     To improve covert communication while reducing network power usage, we formulate an optimization problem in terms of a new measure: the minimum average energy efficiency (mAEE). Here the mAEE is defined as the minimum average-ratios between lower-bound covert throughput from AP to the scheduled UE set, and UAVs' propulsion power consumption. However, the optimization problem is nonconvex and challenging to solve optimally. 
    \item 
    To handle this nonconvex problem, we propose a computationally efficient algorithm by applying a  block coordinated successive convex approximation (BSCA) to iteratively solve a  sequence of approximated convex sub-problems such as user scheduling, network power allocation, IRS beamforming optimization, and joint UIRS and UCJ's trajectory and velocity optimization. We then propose a low complex overall algorithm for the system performance improvement with complexity and convergence analysis.
 \end{itemize}
}
The rest of the paper is organized as follows. Section~\ref{sec:sysmodel} introduces our multi-UAV covert communication system and its setting, describing covertness requirements and formulating the energy-efficient design in terms of an optimization problem. In Section~\ref{sec:solution}, we present an iterative low-complex algorithm to solve the optimization problem efficiently, followed by numerical results and discussions in Section \ref{sec:numerical}. Finally, we draw  conclusions in Section~\ref{sec:conclusion}.

{\em Notations}: Throughout this paper, superscripts $(\cdot)^T$ and $(\cdot)^\dagger$ denote transpose and Hermitian transpose operations. The operators $\E\{\cdot\}$ and $\Pr\{\cdot\}$ represent expectation and probability of an even, respectively; $\|\cdot\|$ denotes the Frobenius norm. Also, ${\cal CN}(0,\sigma^2)$ expresses the complex
Gaussian distribution with zero mean and variance $\sigma^2$.
The bold lower-case and upper-case letters denote a vector and matrix, respectively; the upper-case calligraphy letter indicates a set. Define $\mathbb{R}^+$ and $\mathbb{C}$ as the sets of nonnegative real and complex numbers, respectively; $\mathbb{S}^+$ as the set of positive semidefinite (PSD) matrices.

\section{Multi-UAV Covert Communication System model and problem formulation} \label{sec:sysmodel}
\begin{figure}[t]
\centering
\includegraphics[width= 0.9\columnwidth]{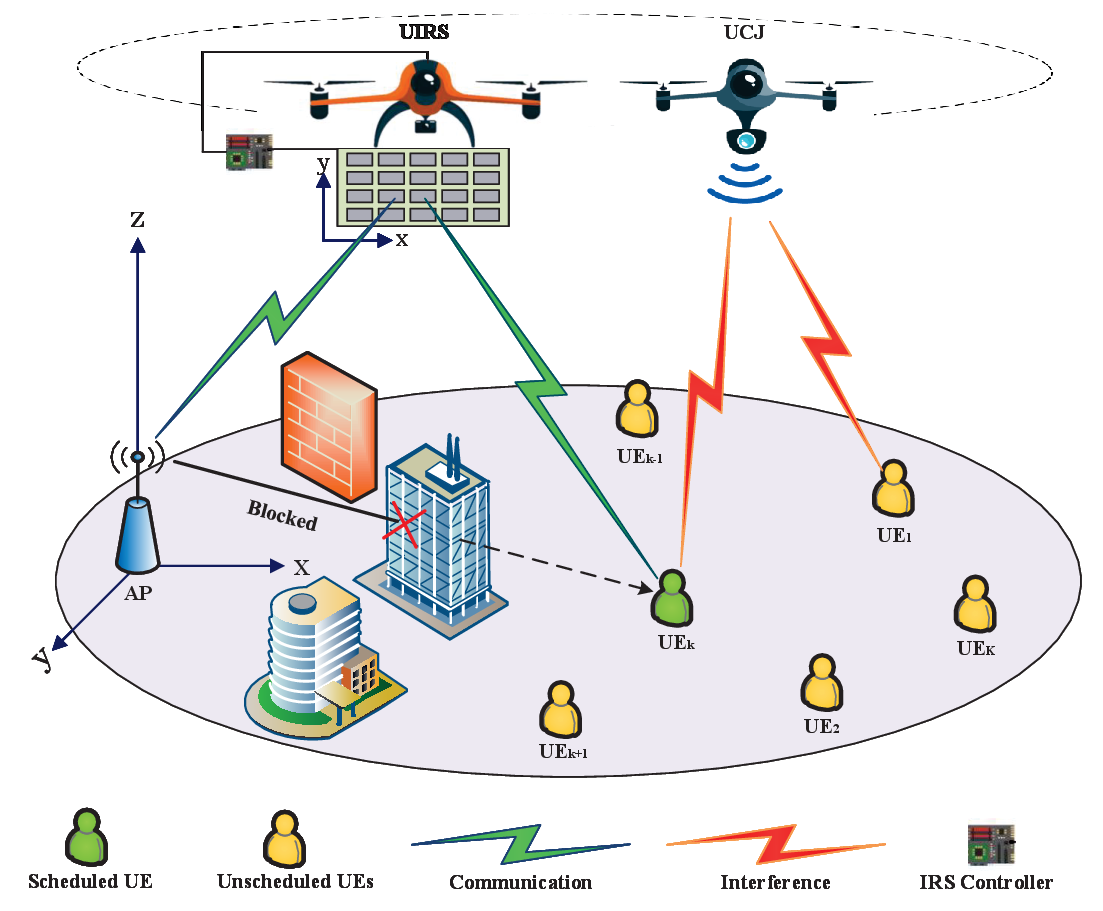}
\caption{System model of cooperative UIRS-UCJ aided THz covert communications in B5G-IoT networks for secure data dissemination.}
\label{fig1:sysmodel}
\end{figure}

\subsection{Multi-UAV Covert Communication System}
We consider a THz-supported wireless communication system as illustrated in Fig. \ref{fig1:sysmodel}, where a UIRS acts as a passive mobile relay to facilitate reliable end-to-end transmissions from an AP towards multiple terrestrial UEs. There is assumed to exist no direct link between  AP and UEs due to severe blockage \cite{Wu2021}. However, UIRS is likely to have strong LoS links with ground terminals due to relatively higher altitude, and mobility \cite{Zhang2019c}. We assume that only one UE (Bob) is scheduled at each time instant $t$, and AP intends to covertly transmit confidential information to Bob to keep the transmission hidden from the unscheduled UEs (Willies). Such a Bob-Willies scenario may arise in large-scale distributed IoT networks where it is difficult to guarantee perfect trustworthiness and transparency of all UEs; thereby, AP needs to adapt the communication protocol according to not only the legitimate UE's requirements, 
but the existence of potential adversaries, e.g., Willies, whom the network operator can identify. In order to assist covert communication, a UCJ is also employed for strategically generating AN, which has been found an effective method to combat warden Willies (see \cite{Soltani2018Covert} and references therein).

\begin{remark}
\textcolor{black}{It should be mentioned that the IRS controller, mounted on the UIRS in our work, acts as a gateway to communicate with other network components (e.g., AP and UEs) through dedicated wireless backhaul/control links. However, in traditional IRS systems, the exchange of control bits can also be accomplished using wired links or fiber channels.}
\end{remark}

\subsection{System Setting Assumptions}
We assume that AP and UCJ are equipped with a single transmit antenna, while all the terrestrial UEs are equipped with a receive antenna for data collection.
Without loss of generality, we consider a 3D Cartesian coordinate system to indicate the location information of each transceiver. We denote AP's location as $\q_a=[x_a,y_a,0]^\mathrm{T}$. We assume there are $K$ randomly distributed terrestrial UEs in the geographical region of interest with the fixed coordinates $\q_{k}=[x_k, y_k,0]^\mathrm{T}$ for $\forall k\in\K$, where $\K=\{1, 2,\cdots, K\}$. Further, due to UAV's limited on-board battery resource, we assume that UIRS and UCJ fly at the fixed altitudes $H_r$ and $H_j$ for a finite period ${T}$, where the altitudes are properly chosen to avoid possible collision with environmental obstacles. This fixed-altitude operation can avoid mechanical energy consumption caused by UAVs' rise and fall \cite{mamaghani2021terahertz,mamaghani2020_tvt}. 

To facilitate the UAVs' trajectory and velocity design, we adopt the time-slotted approach such that the flight duration ${T}$ is equally discretized into $N$ sufficiently small time slots $\delta_t \treq \frac{T}{N}$, where $\delta_t$ should be selected properly to balance between computational complexity and approximation accuracy. Therefore, the UIRS and UCJ's continuous trajectory and velocity sets, denoted as $\{\q_r(t), \v_r(t) \treq\diff[1]{\q_r(t)}{t} \} $ and $\{\q_j(t),  \v_j(t) \treq\diff[1]{\q_j(t)}{t}\}$, for $0 \leq t \leq T$, can be discretized by replacing $t$ by $n\delta_t$, yielding the discrete sets as 
$\{\q_r[n]=[x_r[n],y_r[n], H_r]^\mathrm{T},~\v_r[n]\}$ and $\{\q_j[n]=[x_j[n],y_j[n], H_j]^\mathrm{T},~\v_j[n]\}$, respectively.
 Moreover, we assume that each UE  solely knows the channel distribution information (CDI) between itself and other UEs, while being aware of the channel between itself and the UAVs. In addition, the location information of all UEs is known to the UAVs, since all the UEs are part of the legitimate network serviced by the UAVs in different time slots. 

\subsection{UAVs' Flight Power Model and Mission Requirements}
We consider network power consumption is dominated by the mechanical power consumption of the energy-limited rotary-wing UIRS and UCJ in terms of propulsion for aerial operation, which can be mathematically expressed as \cite{mamaghani2021terahertz}
\begin{align}\label{flightpow_uirs}
    P_{f,r}[n] &=P_o\left(1+c_0\|\v_r[n]\|^2\right) + c_1 \|\v_r[n]\|^3  \nonumber\\
    &\hspace{-10mm}+  P_i\left(\sqrt{1+c^2_2\|\v_r[n]\|^4} -c_2\v_r[n]\|^2\right)^{\frac{1}{2}},\forall n \in \N 
\end{align}
and
\begin{align}\label{flightpow_ucj}
    P_{f,j}[n] &=P_o\left(1+c_0\|\v_j[n]\|^2\right) + c_1 \|\v_j[n]\|^3  \nonumber\\
    &\hspace{-10mm}+  P_i\left(\sqrt{1+c^2_2\|\v_j[n]\|^4} -c_2\v_j[n]\|^2\right)^{\frac{1}{2}},\forall n \in \N 
\end{align}
 wherein $\{\v_r[n], \v_j[n], \forall n\}$ are UAVs' instantaneous velocity in time slot $n$, $P_o$ and $P_i$ are the UAVs' \textit{blade profile power} and \textit{induced power} in hovering mode, respectively, and   $c_0$, $c_1$, and $c_2$ are  some mechanical and environmental-related constants \cite{Zeng2019b}. When a UAV is aloft, i.e., $\|\v_{r(j)}[n]\| = 0$, its mechanical power consumption is $\bar{P}_f = P_i + P_o$ which is not necessarily the minimum power consumption and thereby hovering at a specific point may not be an energy efficient approach for UAV deployment.

\textcolor{black}{Here, we consider that UAVs are deployed to periodically fly over the sky providing covert communications to the UEs; thereby, flight constraints  to the UIRS can be imposed by \begin{subequations}\label{uirs_flight_conds}
\begin{align}
\C1:&\quad\q_r[1] = \q_r[N] = \q^{I}_{r},\label{uirs_flight_cond1}\\
&\quad\|\q_r[n] - \q_a\| \leq \sqrt{R^2_o+H^2_r},~\forall n \in \N\label{uirs_flight_cond2}\\
&\quad \q_r[n+1] = \q_r[n] +\v_r[n]\delta_t,~\forall n \in \N \setminus \{N\} \label{uirs_flight_cond3}\\
 &\quad\|\v_r[n]\|\leq v^{max}_r,~\forall n \in \N\label{uirs_flight_cond4}\\
&\quad \|\v_r[n+1] - \v_r[n]\| \leq a^{max}_r,~\forall n \in \N \setminus \{N\}\label{uirs_flight_cond5}
\end{align}
\end{subequations}
where \eqref{uirs_flight_cond1} is to ensure a periodic flight on the grounds that the UIRS has to return to the initial location by the end of the last time slot, \eqref{uirs_flight_cond2} restricts UIRS's flying region within the \textit{permitted zone}; the horizontal projection of which is assumed to be a circular region with radius $R_o$ in meter centered at AP's location}. Plus, \eqref{uirs_flight_cond3}, \eqref{uirs_flight_cond4}, and \eqref{uirs_flight_cond5} represent UIRS's mobility constraints from practical perspective. Similarly, UCJ's flight constraints can be given by
\begin{subequations}\label{ucj_flight_conds}
\begin{align}
\C2:&\quad\q_j[1] = \q_j[N] = \q^{I}_{j},\\
&\quad\|\q_j[n] - \q_a\| \leq  \sqrt{R^2_o+H^2_j},~\forall n \in \N\\
&\quad\q_j[n+1] = \q_j[n] +\v_j[n]\delta_t,~\forall n \in \N \setminus \{N\}\\
& \quad\|\v_j[n]\|\leq v^{max}_j,~\forall n \in \N\\
&\quad \|\v_j[n+1] - \v_j[n]\| \leq a^{max}_j,~\forall n \in \N \setminus \{N\}
\end{align}
\end{subequations}
where $\q^{I}_{r}$ and  $\q^{I}_{j}$ are UAVs' predetermined stations per flight, $\{v^{max}_r, v^{max}_j\}$ and $\{a^{max}_r, a^{max}_j\}$ are the UAVs' instantaneous maximum speeds and accelerations, respectively. To avoid possible collision in the multi-UAV system, we need to consider safety distance between the UAVs, represented as
\begin{align}\label{safety_cond}
\C3:\quad\|\q_r[n] - \q_j[n]\| \geq D_s,~\forall n\in \N
\end{align}
where $D_s$ denotes the minimum required distance between the two UAVs throughout the periodic mission. 

\subsection{Transmission Strategy}
We assume that direct links between AP and UEs are absent due to severe blockage or considerable distance, necessitating the significance of aerial platforms such as UIRS to establish a reliable wireless link for data dissemination. \textcolor{black}{To support UIRS-assisted downlink covert data transmission service to all the UEs, we employ a time division multiple access (TDMA) protocol as in \cite{Zeng2019b}, wherein the mission time $T$ is divided into $N$ time slots, and at most one UE is scheduled per time slot for intended data transmission, while the unscheduled UEs play as adversaries attempting to detect the presence of communication. This would enable us to fully exploit the time-varying channels with the flexible trajectory design of the considered multi-UAV system.} Thus, by letting $\alpha_k[n]$ be a binary user scheduling variable for UE $k$ in time slot $n$, we have the user scheduling constraint as 
\begin{subequations}\label{usrsch_conds}
\begin{align}
\C4:&\quad\alpha_k[n] \in \{0, 1\},~\forall k\in\K,~\forall n\in \N \label{usrsch_cond1}\\
&\quad\sum^{K}_{k=1} \alpha_k[n] \leq 1,~\forall n\in \N \label{usrsch_cond2}
\end{align}
\end{subequations}
where $\alpha_k[n]=1$ if UE $k$ is scheduled in time slot $n$, and zero otherwise.

\subsection{Channel Modeling}
\textcolor{black}{Since our work examines transmissions over \textcolor{black}{LoS-dominant} THz frequencies, the THz channel gain, encapsulating both large-scale attenuation absorption losses\footnote{\textcolor{black}{
Note that more practical channel modelling and system design for THz propagation is yet to be fully understood, and indeed, requires sophisticated site measurement campaigns and dedicated research efforts, which we leave as a future work \cite{Tekbiyik2020}.}}, for direct links from UCJ to UE $k$ in time slot $n$ can be denoted, similar in \cite{Wang2020, mamaghani2021terahertz, Xu2020, Boulogeorgos2018}, as
\begin{align}
    h_{jk}[n] &= \left(\frac{C}{4\pi f_c \|\q_j[n] - \q_k \|^{\frac{\rho}{2}}}\right)
    \exp\left(\frac{-j2\pi \|\q_j[n] - \q_k \|}{\lambda_c}\right) \nonumber\\
    &\hspace{-5mm}\times\exp\left(-\frac{\kappa(f_c, \mu) \|\q_j[n] - \q_k \|}{2}\right),~\forall k\in\K, n\in\N
\end{align}
wherein $\lambda_c \treq \frac{C}{f_c}$ is the transmission wavelength in meter, $C\approx 3\times 10^8~\SI{}{\meter\per\second} $ is the speed of light, and $f_c$ specifies the carrier frequency, \textcolor{black}{$\rho$ determines the large-scale pass-loss exponent which usually satisfies $2\leq \rho\leq 4$, ranging between free space and obstructed propagation environments \cite{Goldsmith2005b}}. Furthermore, $\kappa(f_c, \mu)$ represents the overall molecular absorption coefficient of THz channels as a function of $f_c$ and the volume of the mixing ratio of water vapor $\mu$, describing the relative absorbing area of the molecules in the wireless medium per unit volume. It should be stressed that the main cause of absorption loss in THz frequency ranges is the water vapor molecules that cause a discrete but deterministic loss to the signals in the frequency domain \cite{Tekbiyik2020, Boulogeorgos2018}. }

\begin{figure}[t]
\centering
\includegraphics[width= \columnwidth]{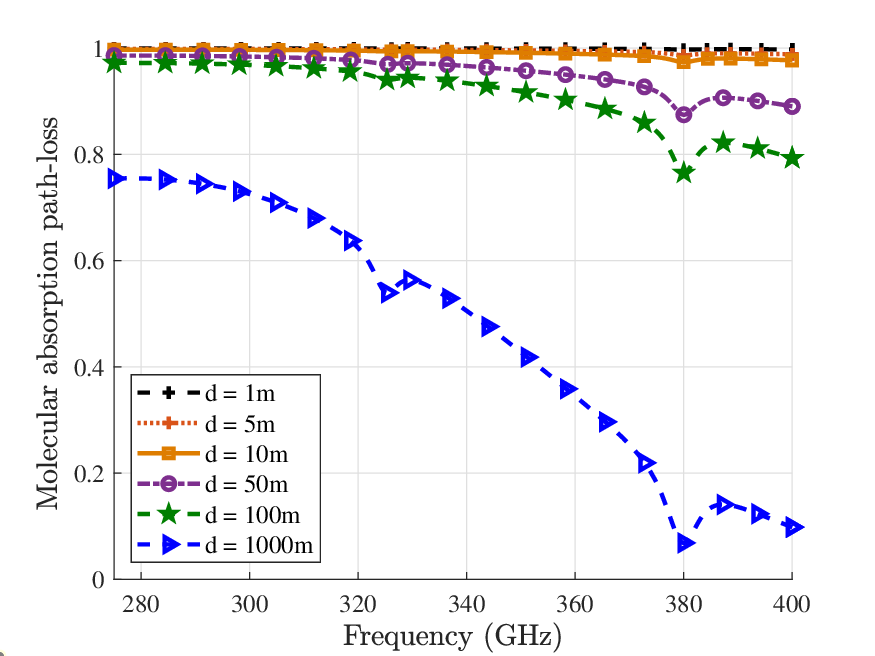}
\caption{\small Illustration of how THz molecular absorption path-loss varies with carrier frequency and distance ($d$) for standard atmospheric condition.}
\label{remark:fig1}
\end{figure}

\begin{remark}
The molecular absorption loss in $\SI{275}-\SI{400}{\giga\hertz}$ can be modelled, according to \cite{Boulogeorgos2018}, as $L^{-1}_a(f_c,d,\mu) = \exp(-\kappa(f_c, \mu)d)$ where $d$ is the distance, and 
\begin{align}\label{kappa_eq}
   \kappa(f_c, \mu) &= \frac{0.2205\mu(0.133\mu\hspace{-0.5mm}+\hspace{-0.5mm}0.0294)}{\left(0.4093\mu\hspace{-0.5mm}+\hspace{-0.5mm}0.0925\right)^2 \hspace{-0.5mm}+\hspace{-0.5mm} \left(\frac{f_c}{100C}\hspace{-0.5mm}-\hspace{-0.5mm}10.835\right)^2}\nonumber\\
   &\hspace{-12mm}\hspace{-0.5mm}+\hspace{-0.5mm} \frac{2.014\mu(0.1702\mu\hspace{-0.5mm}+\hspace{-0.5mm}0.0303)}{\left(0.537\mu\hspace{-0.5mm}+\hspace{-0.5mm}0.0956\right)^2 \hspace{-0.5mm}+\hspace{-0.5mm} \left(\frac{f_c}{100C}\hspace{-0.5mm}-\hspace{-0.5mm}12.664\right)^2}\hspace{-0.5mm}+\hspace{-0.5mm} {5.54\times10^{\hspace{-0.5mm}-\hspace{-0.5mm}37}} f^3_c\nonumber\\
   &\hspace{-12mm} {-3.94\times10^{-25}} f^2_c \hspace{-0.5mm}+\hspace{-0.5mm} {9.06\times10^{-14}}f_c \hspace{-0.5mm}-\hspace{-0.5mm} {6.36\times10^{-3}},
\end{align}
in which  $\mu$ can be evaluated as
\[
    \mu = 6.1121({3.46\hspace{-0.5mm}\times\hspace{-0.5mm}10^{-8}} P \hspace{-0.5mm}+\hspace{-0.5mm} 1.0007)\frac{\phi}{P}\exp\left(\frac{17.502 T}{240.97 \hspace{-0.5mm}+\hspace{-0.5mm} T}\right)
\]
wherein $\phi$ stands for the relative humidity in percentage, $P$ denotes the pressure in $\SI{}{\pascal}$, and $T$ is measured in $\SI{}{\degreeCelsius}$. The THz link molecular absorption path-loss model given above and illustrated in Fig. \ref{remark:fig1} was shown to have high accuracy for up to $1$ km links in standard atmospheric conditions, i.e., the temperature of \SI{296}{\kelvin} and pressure of \SI{101325}{\pascal}. However, it is worth mentioning that nonstandard atmospheric conditions can also be described via this model.
\end{remark}

On the other hand, the cascaded channel gain of the AP-UIRS-UE $k\in\K$ in time slot $n\in\N$ can be represented, in the far field scenario, as \cite{Tang2021, Pan2021}
\begin{align}
    \tilde{h}_{ark}[n]&= \left(\frac{C}{8\pi\sqrt{\pi}f_c \|\q_r[n] - \q_a \|^{\frac{\rho}{2}}\|\q_r[n] - \q_k \|^{\frac{\rho}{2}}}\right)\nonumber\\
    &\hspace{-12mm}\times\exp\left(-\frac{j2\pi (\|\q_r[n] - \q_a \| + \|\q_r[n] - \q_k \|)}{\lambda_c}\right)\nonumber\\
    &\hspace{-12mm}\times\exp\left(-\frac{\kappa(\|\q_r[n] - \q_a \| + \|\q_r[n] - \q_k \|)}{2}\right),
\end{align}
where $\kappa \treq \kappa(f_c, \mu)$ for notation simplicity.
Consider there is an IRS as a uniform planar array (UPA) deployed at UIRS parallel to the ground. Let $L_x$ and $L_y$ be  the number of reflecting elements alongside the $\mathrm{x}$ and $\mathrm{y}$-axes of the IRS (see Fig. \ref{fig1:sysmodel}), respectively, so the total number of reflecting elements is $L = L_xL_y$. 
Here, we assume that the 3D coordinate of the first element of the IRS (the IRS element shown at the origin in Fig. \ref{fig1:sysmodel}) equals to the instantaneous location of UIRS, i.e., $\q_r[n]$. Therefore, the transmission vector from AP towards the first element of the IRS is $(\q_r[n]-\q_a)$, and also, the difference vector from the IRS can be represented as $\Delta \r_{l_x,l_y} = [(l_x-1)\delta_x, (l_y-1)\delta_y, 0]^\mathrm{T}$, where $l_x$ and $l_y$ represent the $l_x$-th row and $l_y$-th column of the IRS, $\delta_x$ and $\delta_y$ denote the element separation alongside $\mathrm{x}$ and $\mathrm{y}$ axes. Accordingly, the relative phase difference between the signal received by the first element and the $(l_x,l_y)$-th element of the IRS in time slot $n$ can be represented by
\begin{align}
    \theta_{l}[n] = \frac{2\pi(\q_r[n]-\q_a)^\mathrm{T} \Delta r_{l_x,l_y}}{\lambda_c\|\q_r[n] - \q_a\|},~\forall n\in \N
\end{align}
where the subscript $l = (l_x-1)\times L_y + l_y$. Henceforth, the received array vector from AP to UIRS is given by
\begin{align}
    \e_a[n] = [\mathrm{e}^{-j\theta_{l}[n]}, \cdots, \mathrm{e}^{-j\theta_{L}[n]}]^\mathrm{T},~\forall n\in \N
\end{align}
Similarly, the relative phase difference between the first and $(l_x, l_y)$-th element of the IRS's reflected beams towards UE $k$ is given by
\begin{align}
    \beta^{k}_{l}[n] = \frac{2\pi(\q_k-\q_r[n])^\mathrm{T} \Delta r_{l_x,l_y}}{\lambda_c\|\q_k -\q_r[n]\|},~\forall k\in \K,~n\in \N
\end{align}
Therefore, the transmit array beam from UIRS to the $k$-th UE in time slot $n$ can be represented as
\begin{align}
    \e_k[n] = [\mathrm{e}^{-j\beta^{k}_{1}[n]}, \cdots, \mathrm{e}^{-j\beta^{k}_{L}[n]}]^\mathrm{T},~\forall k\in \K,~n\in \N
\end{align}
Hence, we can express the channel gain of AP-UIRS-UE $k$ as
\begin{align}\label{ap-uirs-ue_ch_gain}
    h_{ark}[n] =   \e^\dagger_k[n] \Phi[n] \e_a[n] \tilde{h}_{ark}[n],~\forall k\in \K,~n\in \N
\end{align}
where $\Phi[n]$ is an $L$-by-$L$ beamforming matrix of the IRS defined as
\[
    \Phi[n] \treq \mydiag \left(\rho_1[n]\mathrm{e}^{j\phi_1[n]}, \cdots, \rho_L[n]\mathrm{e}^{j\phi_L[n]} \right)
\]
each element of its main diagonal represents both the amplitude and the phase shift of the $(l_x,l_y)$-th element of the IRS in time slot $n$ with the constraints given by
\begin{subequations}\label{beamforming_conds}
\begin{align}
\C5:&\quad0 \leq \rho_l[n] \leq 1,~\forall l\in\L, n\in\N \label{beamforming_cond1}\\
&\quad0 \leq \phi_l[n] \leq 2\pi,~\forall l\in \L, n\in\N \label{beamforming_cond2}
\end{align}
\end{subequations}
where $\L=\{1, 2, \cdots, L\}$.

\subsection{Signals Representation}

\textcolor{black}{In the $n$-th time slot, AP communicates with the designated UE by mapping the intended message to the sequence $\mathbf{x}_a[n]=[x^1_a[n], x^2_a[n],\cdots, x^\aleph_a[n]]$, where $\aleph$ is the total number of channel uses per time slot, and then conducting transmission via the UIRS. It is also assumed that the transmission of a message to any UE is completed within one time slot, the period in which channel properties remain constant. Also, slot boundaries are shared and synchronized amongst the communication nodes. In addition, the average power per symbol in $\mathbf{x}_a[n], \forall n$ is normalized to unity. Further, in this work, Gaussian signalling\footnote{\textcolor{black}{It has been proved that Gaussian signalling under the AWGN channels is optimal for covert communications in terms of maximizing the mutual information of transmitted and received signals; however, it may not be optimal if different covertness constraint is adopted \cite{Yan2019a}.}}, wherein the transmitted signal follows a normal distribution, is employed by the AP with zero-mean and variance, a.k.a. transmit power, $p_a[n], \forall n$. Thus, the signal vector received by the $k$-th UE  can be represented by
\begin{align}\label{yk}
   \mathbf{y}_k[n] &= \sqrt{p_a[n]}h_{ark}[n] \mathbf{x}_a[n]   \nonumber\\
   &+ \sqrt{p_j[n]}h_{jk}[n] \mathbf{x}_j[n] + \pmb{\delta}_k[n],~\forall k\in \K,~n\in\N
\end{align}
where $\pmb{\delta}_k[n]\in\mathcal{CN}(\mathbf{0},\sigma^2_k\mathbf{I}_\aleph)$ is the UE $k$'s receiver noise vector, modelled as the additive white Gaussian noise (AWGN), whose elements follow zero mean and variance $\sigma^2_k$; $\mathbf{x}_j[n]$ represents the AN vector transmitted by UCJ following $\E\{\|\mathbf{x}_j[n]\|^2\}=1$.}
Further, $p_j[n]$  denotes the UCJ's transmit power in time slot $n$. In this work, we assume that $p_j[n], \forall n$ is a random variable\footnote{\textcolor{black}{In this work, we consider an uninformed jammer case \cite{Sobers2017}, wherein AP and UCJ are not closely coordinated. Otherwise, AP can generate codeword symbols independently from the Gaussian jamming distribution, providing it solely to the scheduled UE as a shared secret. Then, while AP starts to transmit a signal to the intended UE, UCJ reduces its jamming transmission and then turns it back up once the AP's transmission is done. By doing such, Willies are unable to determine any change has occurred during the course of transmission.}} with a uniform distribution in the interval $[0, \hat{p}_j[n]],~\forall n$, wherein $\hat{p}_j[n]$ denotes the peak AN transmission power by UCJ in time slot $n$, with the following   probability density function (pdf) and cumulative distribution function (cdf) as
\begin{align}\label{pdf_pj}
    f_{p_j[n]}(x) = \begin{cases} \frac{1}{\hat{p}_j[n]},& 0 \leq x\leq \hat{p}_j[n]\\
                0, & \mathrm{o.w.}
    \end{cases}
\end{align}
\begin{align}\label{cdf_pj}
    F_{p_j[n]}(x)= \begin{cases} 
                0,&  x \leq 0 \\
                \frac{x}{\hat{p}_j[n]},& 0 \leq x\leq \hat{p}_j[n]\\
                1, &  x \geq \hat{p}_j[n] 
    \end{cases}
\end{align}
with expected value $\E\{p_j[n]\} = \frac{\hat{p}_j[n]}{2}$.

It is worth pointing out that AP's transmit power and the peak AN power of UCJ are generally subject to maximum instantaneous and total network power budget constraints represented as
\begin{subequations}\label{pow_conds}
\begin{align}
\C6:&\quad0 \leq p_a[n] \leq p^{max}_a,~\forall n\in\N\\
&\quad 0 \leq \hat{p}_j[n] \leq p^{max}_j,~\forall n \in \N \label{pow_cond1}\\
&\quad\sum_{n=1}^{N}p_a[n] + \hat{p}_j[n] \leq {p}^{tot},\label{pow_cond2}
\end{align}
\end{subequations}

\begin{remark}
\textcolor{black}{By introducing randomness in the UCJ's AN transmissions, we can create ambiguity in the received power at Willies (unscheduled UEs) to assist covert transmissions to Bob (the scheduled UE). 
From a conservative point of view, we assume that Willies know the distribution information of the UCJ's AN transmit powers as well as their AWGN noise variances, which is the worst-case scenario since it becomes much easier for them to make a decision less erroneous than the cases without such information.}
\end{remark}

Following \eqref{yk}, the average channel capacity from AP to UE $k$ taking over the randomness nature of AN transmission powers by UCJ can be obtained as
\begin{align}\label{rk_lb}
    \bar{R}_{k}[n] &= \E_{p_j[n]}\left\{\mathrm{W}\log_2\left(1+\frac{p_a[n] g_{ark}[n]}{p_j[n]g_{jk} + \sigma^2_k[n]}\right)\right\},\nonumber\\
    &\hspace{-10mm}\stackrel{(a)}{\geq} \mathrm{W}\log_2\left(1+\frac{p_a[n] g_{ark}[n]}{\frac{1}{2}\hat{p}_j[n]g_{jk}[n] + \sigma^2_k[n]}\right) \treq \bar{R}^{lb}_{k}[n],
\end{align}
wherein $\mathrm{W}$ is the allocated transmission bandwidth\footnote{\textcolor{black}{Notice that to confront the high frequency-selectivity nature in the THz band, the total bandwidth of the THz frequencies is generally divided into several sub-bands \cite{mamaghani2021terahertz, Pan2021}. Here, we consider only one sub-band, equally shared amongst communication nodes with the associated carrier frequency.}} in Hz, $g_{ark} \treq \|{h_{ark}[n]}\|^2$, $g_{jk}[n] \treq \|{h_{jk}[n]}\|^2$. Further, $(a)$ follows from applying Jensen's inequality theorem wherein given $X$ be an arbitrary random variable then according to \cite{Boyd2004} we have 
\[\E\{f(X)\} \geq f(\E\{X\}) \iff f(X)~\text{is convex} \]
As a result, owing to the convexity of the function $\log(1+\frac{a}{bx+1})$ with respect to (w.r.t) the variable $x$ in the domain of $x\geq -\frac{2+a}{2b}$ for $a, b\geq 0$ \cite{Mamaghani2021}, the tight lower-bound expression $\bar{R}^{lb}_{k}[n]$ can be obtained.

\subsection{Covert Communication Requirement and Analysis}
For the non-colluding Willies, i.e., each of the unscheduled UEs (a.k.a Willies) $\W \treq \K\setminus\{k\}$ independently attempts for conducting malicious activity in terms of signal transmission detection based on their own observations of the given block of transmissions\footnote{\textcolor{black}{Note that here channel properties varies independently from one block of transmission (time slot) to the others; therefore, the knowledge of previous blocks cannot help Willies improve their detection performance.}}. Thus, each of Willies encounters a  binary hypothesis testing problem to independently decide whether AP has transmitted information signal towards the scheduling UE $k$. This non-colluding scenario is valid as the UEs are randomly distributed in the region, and each potentially serves as a scheduled UE in some specific time slots during the mission. Therefore, the received signal vector at the $m$-th Willie in time slot $n$ can be represented as
\begin{align}\label{recsignWillie}
    \mathbf{y}_m[n] =\begin{cases}
      \sqrt{p_j[n]}h_{jm}[n] \mathbf{x}_j[n] + \pmb{\delta}_m[n],& \H_0 \\
        \begin{aligned}
        &{\sqrt{p_a[n]}}h_{arm}[n] \mathbf{x}_a[n]\\
            &+{\sqrt{p_j[n]}}h_{jm}[n] \mathbf{x}_j[n] + \pmb{\delta}_m[n],
        \end{aligned} & \H_1
    \end{cases}
\end{align}
where $\pmb{\delta}_m[n]\sim \mathcal{CN}(\mathbf{0}, {\sigma}^2_m[n]\mathbf{I}_\aleph)$ is the AWGN vector at the receiver of Willie $m \in \W$; $\mathcal{H}_0$ is the null hypothesis stating that AP has not transmitted information signal, whereas $\mathcal{H}_1$ is the alternative hypothesis. 

\textcolor{black}{In this work, considering somewhat worst-case scenario, we assume that Willies have complete statistical
knowledge of their observations. As such,  the parameters for AP’s random codeword generation, UCJ's random AN jamming, the noise variance vector $\pmb{\sigma}^2_m[n],~\forall n$, and the location information of both the UAVs and the AP are known to all Willies.
Accordingly, Willie $m$ has to decide between the hypotheses $\mathcal{H}_0$ and $\mathcal{H}_1$ considering AP's transmissions towards the intended UE. Therefore, one can apply the Neyman-Pearson (NP) criterion in order to obtain the optimal test for Willies to minimize their detection error rate via utilizing the \textit{likelihood ratio test} (LRT) given as 
\begin{align}
    \Lambda(\mathbf{y}_m) = \frac{f_{{\mathbf{y}_m} | {p_j[n],  \mathcal{H}_1}}(\mathbf{y}_m | p_j[n], \mathcal{H}_1)}{f_{{\mathbf{y}_m}|{p_j[n], \mathcal{H}_0}}(\mathbf{y}_m|p_j[n], \mathcal{H}_0)}
    \underset{\D_0}{\overset{\D_1}{\gtrless}} \gamma,
\end{align}
where $\gamma \treq \frac{\Pr\{\mathcal{H}_1\}}{\Pr\{\mathcal{H}_0\}}$, and following the assumption of equal \textit{a priori} probability of each  hypothesis being true, we have $\gamma=1$.
Further, $\D_0$ and $\D_1$ denote the binary decisions in favor of hypothesis $\mathcal{H}_0$ and $\mathcal{H}_1$, respectively, $f_{{\mathbf{y}_m}|{p_j[n], \mathcal{H}_1}}(\mathbf{y}_m|p_j[n], \mathcal{H}_1)$ and $f_{{\mathbf{y}_m}|{p_j[n], \mathcal{H}_0}}(\mathbf{y}_m|p_j[n], \mathcal{H}_0)$ are the likelihood functions of the $m$th Willie's observation vector.}

\textcolor{black}{It has been shown, using the concepts of stochastic ordering \cite{Moshe}, that detection using a radiometer is indeed the optimal decision rule for Willies in the considered system model that minimizes their detection error \cite{Yan2019, Shahzad2017, Zhoua}. Therefore, we assume that Willies each use a radiometer for signal detection, conducting a threshold test on the average power received, similar to \cite{Hu2020, Sobers2017, Zhou2021d, Zhou2021e}.
Accordingly, the adopted detector's decision rule at Willie $m$ in time slot $n$  can be rewritten as
\begin{align}\label{optimaldecision}
    \Xi[n] \treq \frac{\Gamma_m[n]}{\aleph}  \underset{\D_0}{\overset{\D_1}{\gtrless}} \varrho_m[n],
\end{align}
where $\Gamma_m[n]  \treq \sum^{\aleph}_{j=1}\|y^{(j)}_m[n]\|^2,~\forall m, n$ is the total received power at Willie $m$ in time slot $n$, and $\varrho_m[n]$ represents the corresponding detection threshold, which will be optimized later for minimizing the total detection error probability.
} Here we adopt the widely used infinite blocklength assumption in covert communications, i.e., $\aleph \longrightarrow \infty$, implying that each Willie can observe an infinite number of samples, which is, in fact, an upper bound on the number of received samples in practice. Nonetheless, owing to the fact that the number of symbols transmitted per time slot gets increased with the communication bandwidth and thanks to the abundance of available bandwidth in THz frequencies, this assumption of approximately infinite channel uses per slot can be well justified. Therefore, the expression $\Xi[n],~\forall n$  defined in \eqref{optimaldecision} can be simplified as
\begin{align}
    \Xi[n] = \begin{cases}
    p_j[n]g_{jm}[n] + \sigma^2_m[n], & \H_0\\
    p_a[n]g_{arm}[n] + p_j[n]g_{jm}[n] + \sigma^2_m[n], & \H_1\\
    \end{cases}
\end{align}

\textcolor{black}{Having established the best strategy for Willies to detect confidential communication, we now analytically obtain the best setting for the radiometer's threshold adopted by Willies, whose ultimate goal is to detect their observations produced by which one of the hypotheses $\mathcal{H}_0$ and $\mathcal{H}_1$. In light of this, we adopt the total detection error probability comprised of the MD and FA probabilities to measure Willies' detection performance of AP's transmissions.
}

\subsubsection*{False Alarm Rate}
If Willie $m$ decides that AP has sent data to the scheduled UE $k$ in the $n$-th time slot, while AP has not sent any data, i.e., $\mathcal{H}_0$ is true, we have \textit{False Alarm} (FA) occurrence  with probability of \begin{align}\label{prob_fa}
    P^{FA}_m[n] &\treq \Pr\{\D_1| \H_0\} \nonumber\\
           &= \Pr\left\{p_j[n]g_{jm}[n] + \sigma^2_m[n] \geq \varrho_m[n] \right\},
\end{align}

\subsubsection*{Missed Detection Rate}
If Willie $m$ decides that AP has not transmitted data to the scheduled UE $k$, while $\mathcal{H}_1$ is true, then we say that a \textit{Missed Detection} (MD) incident has occurred with the probability given by
\begin{align}\label{prob_md}
        \mathrm{P}^{MD}_{k,m}[n] &= \Pr\{\D_0| \H_1\} \nonumber\\
        &\hspace{-10mm}=  \Pr\left\{p_a[n]g_{arm}[n] \hspace{-1mm}+\hspace{-1mm} p_j[n]g_{jm}[n] \hspace{-1mm}+\hspace{-1mm} \sigma^2_m[n] \hspace{-1mm} \leq\hspace{-1mm} \varrho_m[n]\right\},
\end{align}
Thus, the total detection error rate at Willie $m$ given $k$-th Bob in time slot $n$ is expressed by
\begin{align}\label{det_err}
    \zeta_{k,m}[n] =   P^{FA}_m[n] +  \mathrm{P}^{MD}_{k,m}[n],~\forall n, k, m
\end{align}

In general, Willie $m$ attempts to minimize the detection error rate in \eqref{det_err}, while the UIRS-UCJ aims at ensuring this minimum error detection, denoted by $\zeta^\star_{m,k}[n]$, and obtained by solving
\begin{align}
    \zeta^\star_{m,k}[n] = \underset{\varrho_m[n]}{\mathrm{minimize}}~~\zeta_{k,m}[n],~\forall n, k, m
\end{align}
being no more than some specific value at every Willie in time slot $n$, i.e., $\zeta^\star_{m,k}[n] \geq 1- \varepsilon$. It is worth mentioning that  $\varepsilon$ is typically a small, non-negative constant denoting the covertness requirement for data dissemination in the system. In the following, we first derive analytical expressions for the FA and MD probabilities, based on which we then obtain optimal detection threshold from Willies' perspective.

Exploiting the distribution of random variables $p_j[n],~\forall n\in\N$, given by \eqref{pdf_pj} and \eqref{cdf_pj},  we can rewrite \eqref{prob_fa} as
\begin{align}\label{fa_analytical}
    P^{FA}_m[n] &= 1- F_{p_j[n]}\left(\frac{\varrho_m[n] - \sigma^2_m[n]}{g_{jm}[n]}\right)\nonumber\\
    &=\begin{cases}
    1,&  \varrho_m[n] \leq \sigma^2_m[n] \\
    1- \frac{\varrho_m[n] - \sigma^2_m[n]}{\hat{p}_j[n]g_{jm}[n]},&    \sigma^2_m[n] < \varrho_m[n] \leq \chi_2\\
    0,&  \varrho_m[n] >  \chi_2
    \end{cases}
\end{align}
wherein $\chi_2 \treq \hat{p}_j[n]g_{jm}[n]+\sigma^2_m[n]$. Similarly, we can analytically calculate \eqref{prob_md} as
\begin{align}\label{md_analytical}
    P^{MD}_{k,m}[n] &= F_{p_j[n]}\left(\frac{\varrho_m[n] - p_a[n]g_{arm}[n] - \sigma^2_m[n]}{g_{jm}[n]}\right)\nonumber\\
    &=\begin{cases}
    0,&  \varrho_m[n] \leq \chi_1 \\
    \frac{\varrho_m[n] - p_a[n]g_{arm}[n] - \sigma^2_m[n]}{\hat{p}_j[n]g_{jm}[n]},&    \chi_1  < \varrho_m[n] \leq  \chi_3\\
    1,&  \varrho_m[n] >  \chi_3
    \end{cases}
\end{align}
wherein $\chi_1 \treq p_a[n]g_{arm}[n] + \sigma^2_m[n]$ and $\chi_3 \treq p_a[n]g_{arm}[n] + \hat{p}_j[n]g_{jm}[n]+\sigma^2_m[n]$. We note that if $\hat{p}_j[n]g_{jm}[n] \leq p_a[n]g_{arm}[n]$, then by setting $\varrho_m[n] = p_a[n]g_{arm}[n]+\sigma^2_m[n]$, Willie $m$ can achieve zero detection error rate. Hence, we consider the non-trivial case wherein  $\chi_1 \leq \chi_2 \leq \chi_3$ which is equivalent with  
\[
\hat{p}_j[n]g_{jm}[n] \geq p_a[n]g_{arm}[n],~\forall n\in \N,~m\in\W
\]
Hence, the following analyses are based on the aforementioned critical assumption for covertness requirement. Now, we can compute $\zeta_{k,m}[n]$ given by \eqref{det_err}, as
\begin{align}
     \zeta_{k,m}[n] = \begin{cases}
      1,&  \varrho_m[n] \leq \sigma^2_m[n] \\
      1- \frac{\varrho_m[n] - \sigma^2_m[n]}{\hat{p}_j[n]g_{jm}[n]},& \sigma^2_m[n] \leq \varrho_m[n] < \chi_1\\
      1- \frac{ p_a[n]g_{arm}[n] }{\hat{p}_j[n]g_{jm}[n]},& \chi_1 \leq \varrho_m[n] < \chi_2\\
      \frac{\varrho_m[n] - p_a[n]g_{arm}[n] - \sigma^2_m[n]}{\hat{p}_j[n]g_{jm}[n]},& \chi_2 \leq \varrho_m[n] < \chi_3\\
      1,& \varrho_m[n] \geq \chi_3
  \end{cases}
\end{align}
It is evident that Willies will not set the decision threshold lower than noise variance $\sigma^2_m[n]$ or higher than  $\chi_3$ to ensure that the resulting detection error probability will be less than $1$; otherwise, the detection performance would be the same as of a random guess. Further, we can see that $\zeta_{k,m}[n]$ is a decreasing function in the range $\sigma^2_m[n] \leq \varrho_m[n] < \chi_1$ and increasing function in the range $\chi_2 \leq \varrho_m[n] < \chi_3$ w.r.t  $\varrho_m[n]$ and also behaves as a constant function when $\chi_1 \leq \varrho_m[n] < \chi_2$. Therefore, considering that  $\zeta_{k,m}[n]$ is a continuous function of $\varrho_m[n]$, the optimal decision threshold to be set by Willie $m$ should be in the range $\chi_1 \leq \varrho_m[n] \leq \chi_2$, resulting in the minimum detection error rate 
\begin{align}\label{zeta_min}
   \zeta^\star_{m,k}[n] = 1\hspace{-1mm}-\hspace{-1mm} \frac{ p_a[n]g_{arm}[n] }{\hat{p}_j[n]g_{jm}[n]},~\forall n\in\N,~k\in\K,m\in \W
\end{align}
Then, the covert communication constraint can be stated as
 \begin{align}\label{covert_cond}
\hspace{-3mm}\C7:~\sum^{K}_{k=1} \alpha_k[n] \min_{m \in \W} \zeta^\star_{m,k}[n] \geq 1-\varepsilon,~ \forall n\in\N
 \end{align}
 wherein $\zeta^\star_{m,k}[n]$ is in \eqref{zeta_min}.

 \begin{remark}
\textcolor{black}{It is worth pointing out that, as per the minimum detection error rate obtained in \eqref{zeta_min}, $\zeta^\star_{m,k}[n]$ decreases with the AP's transmit power as well as the downlink channel quality but increases when UCJ's maximum AN transmission power gets increased or the quality of the interference links improves. But on the other hand, the aforementioned parameters have reverse impacts on the covert throughput metric given by \eqref{rk_lb}, reinforcing the inherent trade-off between the covertness requirement and the achievable transmission rate of the considered system. Therefore, this requires us to carefully design the UAVs' trajectory and the communication resources to effectively balance transmission quality and communication covertness.}
 \end{remark}

\section{Problem Formulation and Proposed Low-Complex Solution}\label{sec:solution}
To devise an energy-efficient UIRS-assisted covert communication system, we first formulate the optimization problem aiming to improve the minimum average energy efficiency (mAEE) of the network. Here the mAEE is defined as the minimum average-ratios of lower-bound throughput given by \eqref{rk_lb} to the  UAVs' total propulsion power consumption\footnote{\textcolor{black}{From the practical perspective, mechanical power consumption of the UAVs for hovering or supporting their mobility is the dominant power consumption of the network compared to the communication-related power consumption required for transmissions, signal processing, circuitry, and so forth (e.g., in the order of hundreds of watts versus a few watts \cite{Zeng2017}).}} as 
\begin{align}\label{opt_prob}
(\P):& \stackrel{}{\underset{\pmb{\alpha}, \mathbf{P}, \mathbf{\Phi},\mathbf{Q}_r, \mathbf{Q}_j}{\mathrm{maximize}}~~\underset{k\in\K}{\min}~\frac{1}{N}\sum^{N}_{n=1}\frac{\alpha_k[n] \bar{R}^{lb}_{k}[n]}{P_{f,r}[n] + P_{f,j}[n]}  } \nonumber\\
&~~\mathrm{s.t.}~~~~~ \C1-\C7,
\end{align}

Note that problem $(\P)$ is a mixed-integer fractional nonconvex non-linear programming, which is challenging to solve optimally. Indeed, the major challenge in solving $(\P)$ arises from the binary user scheduling constraint $\C4$, nonconvex constraints $\C3$ and $\C7$, and the highly coupled optimization variables in the fractional-form objective function.  To embark on the non-convexity and make the problem tractable, we propose a computationally efficient algorithm by applying a block coordinated successive convex approximation (BSCA) to iteratively solve a sequence of approximated convex sub-problems by employing several techniques. Specifically, we split problem $(\P)$ into the following sub-problems with different blocks of variables: $i$) \textit{user scheduling sub-problem} to optimize $\pmb{\alpha}$, $ii$) network transmission power sub-problem to optimize $\Pow=\{\Pow_a$, $\hat{\Pow}_j\}$, $iii$) beamforming matrices sub-problem to improve $\mathbf{\Phi}=\{\Phi[n], \forall n\in \N\}$, $iv$) UIRS's joint trajectory and velocity sub-problem to improve $\mathbf{Q}_r=\{\q_r[n], \v_r[n],~\forall n\in\N\}$ and $v$) UCJ's joint trajectory and velocity sub-problem to improve $\mathbf{Q}_j=\{\q_j[n], \v_j[n],~\forall n\in\N\}$. Next, we solve each of them while keeping the other blocks fixed, then propose an overall low-complex algorithm to iteratively attain the approximate solution of \eqref{opt_prob}.

\subsection{Sub-problem I: User Scheduling Optimization}\label{P1_solution}
By keeping the optimization blocks $\mathbf{P}, \mathbf{\Phi}, \mathbf{Q}_r, \mathbf{Q}_j$ fixed and relaxing the binary constraint \eqref{usrsch_conds} into a continuous constraints, we can rewrite $(\P1)$ equivalently as
\begin{subequations}\label{usrsch_subprob}
\begin{align}
(\P1):& \stackrel{}{\underset{\pmb{\alpha}}{\mathrm{maximize}}~~\underset{k\in\K}{\min}~\frac{1}{N}\sum^{N}_{n=1} A_{k,n}\alpha_k[n]}   \nonumber\\
&~~\mathrm{s.t.}~~~~~ \sum^{K}_{k=1} B_{k,n}\alpha_k[n] \geq 1-\varepsilon,~\forall n\in \N \label{P1_cond1}\\
&\quad\quad\quad 0 \leq \alpha_k[n] \leq  1,~\forall k\in\K, n\in\N \label{P1_cond2}\\
&\quad\quad\quad \sum^{N}_{n=1}\sum^{K}_{k=1}\left(\alpha_k[n] - \alpha^2_k[n]\right) \leq  0, \label{P1_cond3}\\
&\quad\quad\quad \sum^{K}_{k=1}\alpha_k[n] \leq  1,~\forall n\in \N \label{P1_cond4}
\end{align}
\end{subequations}
wherein 
\[A_{k,n} = \frac{\bar{R}^{lb}_{k}[n]}{P_{f,r}[n] + P_{f,j}[n]},~~~\text{and}~~~B_{k,n}=\underset{m\in\W}{\min } \zeta^\star_{m,k}[n]\quad\forall k, n\]
We note that considering constraints \eqref{P1_cond2} and \eqref{P1_cond3} jointly ensures that $\alpha_k[n]=0$ or $\alpha_k[n]=1$ must hold, similar to the original binary constraint \eqref{usrsch_conds}, i.e., $\alpha_k[n] \in \{0, 1\},~\forall k,n$. Following the above transformation, though the NP-hard mixed-integer user scheduling optimization problem is equivalently converted into a continuous optimization problem, it is still challenging to solve due to nonconvex linear-minus-quadratic constraint \eqref{P1_cond3}. Here, we apply the first order restrictive approximation \cite{Boyd2004} to obtain a global upper-bound at the given local point $\alpha^{lo}_{k}[n], \forall n, k$ and rewrite it as
\begin{align}\label{usrschi_2_cvx}
    \sum^{N}_{n=1}\sum^{K}_{k=1}\left[(1- 2\alpha^{lo}_{k}[n]) \alpha_k[n] +(\alpha^{lo}_{k}[n])^2\right] \leq  0,
\end{align}

However that replacing \eqref{P1_cond3} by the convex constraint obtained in \eqref{usrschi_2_cvx} converts problem $(\P1)$ into a convex optimization problem which can be sequentially solved via SCA method, due to co-existence of constraints \eqref{P1_cond2} and \eqref{P1_cond3}, it is, in general, difficult to attain a feasible solution. Therefore, to tackle this issue, we apply the penalty-SCA (PSCA) technique wherein we bring the constraint \eqref{usrschi_2_cvx} into the objective function via adding a penalty term. Although this violates the binary constraint but makes the problem feasible, enabling us to iteratively employ the SCA technique to solve the resulting optimization problem. To proceed, we rewrite the problem $(\P1)$ approximately as
\begin{subequations}\label{usrsch_subprob_cvx}
\begin{align}
(\P1.1):& \stackrel{}{\underset{\pmb{\alpha}, \eta}{\mathrm{maximize}}~~~\psi - \mu \eta}   \nonumber\\
\mathrm{s.t.}&\quad \eqref{P1_cond1}, \eqref{P1_cond2}, \eqref{P1_cond4}\\
&\hspace{-10mm} \frac{1}{N}\sum^{N}_{n=1} A_{k,n}\alpha_k[n] \geq \psi,~\forall k\in\K\\
&\hspace{-10mm}   \sum^{N}_{n=1}\sum^{K}_{k=1}\left[(1- 2\alpha^{lo}_{k}[n]) \alpha_k[n] +(\alpha^{lo}_{k}[n])^2\right] \leq  \eta \label{39b},
\end{align}
\end{subequations}
where $\eta$ is a non-negative slack variable, and $\mu$ is the given penalty parameter. It is worth stressing that feasible set of problem $(\P1.1)$ is larger than that of $(\P1)$; therefore, by choosing a small initial value for $\mu$ we can make problem $(\P1.1)$ feasible, and then by gradually increasing $\mu$, or in some sense, forcing $\eta$ to approach zero via maximizing the given objective function, we can reach the optimal solution of $(\P1.1)$ which can serve as the optimal solution for $(\P1)$ once $\eta=0$ is attained. Note that problem $(\P1.1)$ is indeed a linear programming (LP) that can be readily handled in polynomial time order via a convex optimization toolbox, e.g., CVX \cite{CVXResearch2012}.

\subsection{Sub-problem II: Joint Power Allocation Optimization}\label{P3_solution}
In this subsection, we tackle the optimization of network power allocations. To this end, the corresponding sub-problem for jointly optimizing AP's transmission power $\Pow_a$ as well as maximum AN transmissions of UCJ $\Pow_j$ can be given as
\begin{subequations}
\begin{align}
(\P2):& \stackrel{}{\underset{\Pow_a, \hat{\Pow}_j}{\mathrm{maximize}}~~\underset{k\in\K}{\min}~\frac{1}{N}\sum^{N}_{n=1}A_{k,n}\ln\left(1+\frac{B_{k,n} p_a[n]}{C_{k,n} \hat{p}_j[n] + 1}\right)}   \nonumber\\
\mathrm{s.t.}&\quad \eqref{pow_conds}, \label{P2_cond1}\\
& \hspace{-7mm}\sum^{K}_{k=1} \alpha_k[n] \min_{m \in \W} \left(1 \hspace{-1mm}-\hspace{-1mm} D_{n,k,m} \frac{p_a[n]}{p_j[n]}\right) \hspace{-1mm}\geq\hspace{-1mm} 1\hspace{-1mm}-\hspace{-1mm}\varepsilon,~\forall n\in\N  \label{P2_cond2}
\end{align}
\end{subequations}

where, for $~\forall n \in \N, \forall k \in \K, \forall m  \in \W$,
\[
A_{k,n} \hspace{-1mm}=\hspace{-1mm} \frac{\mathrm{W}\alpha_k[n]}{N\ln(2)(P_{f,r}[n] + P_{f,j}[n])},
~B_{k,n} \hspace{-1mm}= \hspace{-1mm}\frac{g_{ark}[n]}{\sigma^2_k[n]},\]
\[C_{k,n}\hspace{-1mm} = \hspace{-1mm}\frac{g_{jk}[n]}{2\sigma^2_k[n]},~D_{n,k,m} \hspace{-1mm} =\hspace{-1mm} \frac{g_{arm}[n]}{g_{jm}[n]}\]
Sub-problem $(\P2)$ is nonconvex due to nonconvex objective function and constraint \eqref{P2_cond2}. First, we tackle the nonconvex constraint \eqref{P2_cond2} by introducing non-negative slack variables $\mathbf{S}=\{s_k[n], \forall k \in \K, \forall n \in \N\}$, we rewrite $(\P2)$ as
\begin{align}
(\P2.1):& \stackrel{}{\underset{\Pow_a, \hat{\Pow}_j, \mathbf{S}}{\mathrm{maximize}}~~\underset{k\in\K}{\min}~\sum^{N}_{n=1}A_{k,n}\ln\left(1+\frac{B_{k,n} p_a[n]}{C_{k,n} \hat{p}_j[n] + 1}\right)}   \nonumber\\
\mathrm{s.t.}&\quad\eqref{P2_cond1}, \sum^{K}_{k=1} \alpha_k[n] s_k[n] \geq 1-\varepsilon,~ \forall n\in\N \label{P21_cond1}\\
&
\begin{aligned}
\ln(\hat{p}_j[n]) &+ \ln(1-s_k[n]) \geq \ln(D_{n,k,m} p_a[n]),\label{P21_cond2}\\
&\forall n\in \N, k\in\K, m\in\W
\end{aligned}
\end{align}
The obtained formulation $(\P2.1)$ is in good shape but still nonconvex due to the nonconvex objective function and \eqref{P21_cond2}. Since the summation terms of the objective function of problem $(\P2.1)$ are in the form of \textit{concave-minus-concave} owing to the fact that the logarithm of any non-negative affine function is concave, we substitute the objective function with a non-negative slack variable $\eta$, then replacing the sum of logarithmic functions with their corresponding global concave lower bound, as well as replacing \eqref{P21_cond2} with the approximate convex constraint using the  first-order restrictive law of Taylor approximation at the given local point $\hat{\Pow}^{lo}_j=\{\hat{p}^{lo}_j[n],~\forall n\in\N\}$, we can rewrite $(\P2.1)$ as the following convex reformulation 
\begin{align}\label{jointpow_cvx}
(\P2.2):& \stackrel{}{\underset{\Pow_a, \hat{\Pow}_j, \mathbf{S}, \eta}{\mathrm{maximize}}~~\eta}  \nonumber\\
\mathrm{s.t.}&\quad 
\begin{aligned}
\frac{1}{N}\sum^{N}_{n=1}&A_{k,n}\bigg[\ln\left(1+B_{k,n} p_a[n] + C_{k,n} \hat{p}_j[n]\right)\\
&\hspace{-10mm}-f_{2}(\hat{p}_j[n],\hat{p}^{lo}_j[n]) \bigg] \geq \eta,~\forall k\in \K\
\end{aligned}\\
&\begin{aligned}
\eqref{P21_cond1},~ &\ln(\hat{p}_j[n]) + \ln(1-s_k[n])\geq g_2(p_a[n],p^{lo}_a[n]),\\
&~\forall n\in \N, k\in\K, m\in\W 
\end{aligned}
\end{align}
where 
\begin{align}
    f_2(\hat{p}_j[n],\hat{p}^{lo}_j[n]) &= \ln(1+C_{k,n} \hat{p}^{lo}_j[n]) \nonumber\\
    &\hspace{-10mm}+ \frac{C_{k,n}(\hat{p}_j[n] - \hat{p}^{lo}_j[n])}{1+C_{k,n}\hat{p}^{lo}_j[n] },~\forall n\in \N, k\in\K
\end{align} 
\begin{align}
  g_2(p_a[n],p^{lo}_a[n]) &= \ln(D_{n,k,m} p^{lo}_a[n]) \nonumber\\
  &\hspace{-10mm}+ \frac{p_a[n]-p^{lo}_a[n]}{p^{lo}_a[n]},~\forall n\in \N, k\in\K, m\in\W
\end{align}

Since problem $(\P2.2)$ is a conic convex optimization problem, it can be efficiently solved using CVX. 

Note that, with more relaxations, a joint design of user scheduling and network power allocation can be presented at the cost of relatively higher complexity  than individual design as given in Appendix \ref{Appendix A}.

\subsection{Sub-problem III: Beamforming optimization}\label{P2_solution}

To optimize beamforming matrices, we can write the corresponding subproblem to optimize $\pmb{\Phi}$, given the other variables fixed, as
\begin{subequations}\label{beamforming_subprob_nth}
\begin{align}
\scriptsize
(\P3)\hspace{-1mm}:& \stackrel{} {\underset{\pmb{\Phi}}
{\mathrm{maximize}}~\underset{k\in\K}{\min}~\frac{\sum_{n}\hspace{-1mm} A_{k,n}\hspace{-1mm}\ln\hspace{-1mm}\left(1\hspace{-1mm} +\hspace{-1mm} B_{k,n}\|\e^\dagger_{rk}[n]\Phi[n]\e_{ar}[n]\|^2 \right)}{N}} \nonumber\\
\mathrm{s.t.}&\quad
\begin{aligned}
\sum^K_{k=1}\alpha_k[n]&\underset{m\in\W}{\min}~\left[1-C_{m,k,n} \|\e^\dagger_{rm}[n]\Phi[n]\e_{ar}[n]\|^2 \right]\\
&\geq 1 - \varepsilon,~\forall n\in\N
\end{aligned}\\
&0 \leq \rho_n[l] \leq 1,~\forall l\in\L,~n\in\N \\
&0 \leq \phi_n[l] \leq 2\pi,~\forall l\in\L,~n\in\N 
\end{align}
\end{subequations}
where  $\pmb{\rho}=\{\rho_n[l],~\forall l\in\L,~n\in\N\}$ and $\pmb{\phi}=\{\phi_n[l],~\forall l\in\L,~n\in\N\}$, and for $\forall n\in\N,~k\in\K,~m\in\W$
\[
A_{k,n} = \frac{\alpha_k[n]\mathrm{W}}{P_{f,r}[n] + P_{f,j}[n]},~~~B_{k,n} = \frac{p_a[n]  \|\tilde{h}_{ark}[n]\|^2}{\frac{1}{2}\hat{p}_j[n]g_{jk} + \sigma^2_k[n]},
\]
\[
C_{m,k,n} = \frac{p_a[n]\|\tilde{h}_{arm}[n]\|^2}{\hat{p}_j[n]g_{jm}},
\]

Now, exploiting the change of variable technique and defining the slack vector $\u[n]=[u_1[n], u_2[n], \cdots, u_L[n]]^\mathrm{T}$ with $u_l[n]=\rho_l[n]\exp(-j\phi_l[n]),~\forall l\in\L,~n\in\N$, and taking the auxiliary column vectors $\mathbf{a}_k[n]=\mydiag\left(\e^\dagger_{rk}[n] \right)\e_{ar}[n],~\forall k\in\K,n\in\N$, $\mathbf{b}_m[n]=\mydiag\left(\e^\dagger_{rm}[n] \right)\e_{ar}[n]$, we can reformulate $(\P3)$, introducing the slack variables $\pmb{\eta} = \{\eta_k[n],~\forall n\in\N,~k\in\K\}$ as 
\begin{subequations}\label{P3_subproblem1}
\begin{align}
(\P3.1):& \stackrel{}{\underset{\pmb{\eta},~ \u}{\mathrm{maximize}}~\underset{k\in\K}{\min}~\frac{1}{N}\sum^{N}_{n=1}A_{k,n}\ln\left(1+B_{k,n} \eta_k[n]\right) }\nonumber\\
\mathrm{s.t.}&\quad \u^\dagger[n]\mathbf{A}_k[n]\u[n] \geq \eta_k[n],~\forall n\in\N,~k\in \K\label{P31_cond1}\\
&\begin{aligned}\\
\sum^K_{k=1}\alpha_k[n]&\underset{m\in\W}{\min}~\left[1-C_{m,k,n} \u^\dagger[n]\mathbf{B}_m[n]\u[n] \right]\\
&\geq 1 - \varepsilon,~\forall n\in\N
\end{aligned}\\
& \|\u[n]\| \leq 1,~\forall n\in\N\label{P31_cond_magnitude}
\end{align}
\end{subequations}
where $\mathbf{A}_k[n] = \mathbf{a}_k[n]\mathbf{a}^\dagger_k[n]$, $\mathbf{B}_m[n] = C_{m,k,n}\mathbf{b}_m[n]\mathbf{b}^\dagger_m[n]$. 

\renewcommand{\W}{\ensuremath{\mathbf{W}}}
Note that the objective function is in the form of the non-negative combination of logarithms, each of which is non-decreasing w.r.t $\eta_k[n]$. Thus the inequality \eqref{P31_cond1} must be met with equality at the optimal point; otherwise, the objective value can be further improved, violating the optimality. Problem $(\P3.1)$ is a quadratically constrained nonlinear program (\textbf{QCNLP}) which is NP-hard, so is its relaxed version: quadratically constrained quadratic programming (\textbf{QCQP}). Therefore, to facilitate the development of problem, we define $N$ matrices of size $L$-by-$L$, i.e., $\mathbb{W}\treq\{\W[n]\treq\{\u[n] \u^\dagger[n],~\forall n\in\N\}$, which ensures $\mathbb{W}$ to be rank one Hermitian positive semidefinite (PSD) matrix set. Then we recast $(\P3.1)$ as  
\begin{subequations}\label{P3_subproblem2}
\begin{align}
(\P3.2):& \stackrel{}{\underset{\pmb{\eta},~\mathbb{W}}{\mathrm{maximize}}~\underset{k\in\K}{\min}~\frac{1}{N}\sum^{N}_{n=1}A_{k,n}\ln\left(1+B_{k,n} \eta_k[n]\right) }\nonumber\\
\mathrm{s.t.}&\quad \mytrace(\mathbf{A}_k[n]\W[n]) \geq \eta_k[n],~\forall n\in\N,~k\in \K\label{P32_cond1}\\
&\begin{aligned}\label{P32_cond2}\\
\sum^K_{k=1}\alpha_k[n]&\underset{m\in\W}{\min}~\left[1-C_{m,k,n}  \mytrace(\mathbf{B}_m[n]\W[n])  \right]\\
&\geq 1 - \varepsilon,~\forall n\in\N
\end{aligned}\\
&\qquad\quad \W_{l,l}[n] \leq 1,~\forall l\in\L,~n\in\N\label{magnitude_cond}\\
&\qquad\quad\W[n]  \succeq 0,~\forall n\in\N \label{LMI}\\
&\qquad\quad\myrank(\W[n])=1,~\forall n\in\N \label{rank-one}
\end{align}
\end{subequations}
where $\W_{l,l}[n]$ in \eqref{magnitude_cond} refer to the elements alongside the main diagonal of the matrix of $\W[n]$ and is the reformulation of the magnitude constraint \eqref{P31_cond_magnitude}, \eqref{LMI} is the linear matrix inequality (LMI) indicating that $\W[n]\in\mathbb{S}^+$  and this constraint is convex w.r.t the optimization variables. It is worth pointing out that by such reformulation, we have converted nonconvex objective constraints in \eqref{P3_subproblem1} into convex equivalent constraints, but unfortunately $(\P3.2)$ is as hard as $(\P3.1)$ to solve due to the newly introduced nonconvex rank constraint \eqref{rank-one}. Nonetheless, dropping such nonconvex constraint, we can obtain the semi-definite relaxation (SDR) reformulation of \eqref{P3_subproblem2} as 
\begin{subequations}\label{P3_subproblem21}
\begin{align}
(\P3.2.1):& \stackrel{}{\underset{\pmb{\eta},~\mathbb{W}}{\mathrm{maximize}}~\underset{k\in\K}{\min}~\frac{1}{N}\sum^{N}_{n=1}A_{k,n}\ln\left(1+B_{k,n} \eta_k[n]\right) }\nonumber\\
\mathrm{s.t.}&\quad\eqref{P32_cond1}, \eqref{P32_cond2}, \eqref{magnitude_cond}, \eqref{LMI} \label{prob321_cond1}
\end{align}
\end{subequations}
Note that problem $(\P3.2.1)$  is a convex semi-definite programming (\textbf{SDP}), which can be solved in a numerically reliable and efficient manner by applying the interior-point method \cite{SDP2010Quan}. Though \textbf{SDR} is a computationally efficient approximation approach, problem $(\P3.2.1)$ does not necessarily generate rank-one matrices as its optimal solution $\mathbb{W}^\star=\{\W^\star[n],~\forall n\in\N\}$.  Indeed, if $\W^\star[n]$ for $\forall n\in\N$ are of rank one, then we can readily write $\W^\star[n] = \u^\star[n]{\u^\star}^\dagger[n]$ ($\u^\star[n],~\forall n\in\N$ being the eigenvector corresponding the only non-zero eigenvalue of $\W^\star[n]$) extracting not only feasible but also optimal solution $\u^\star[n]$ to problem $(\P3.1)$ which is consequently the optimal solution to problem $(\P3)$. Otherwise, if the rank of $\W^\star[n]$ is larger than one, the \textbf{SDR} method can only generate a tight lower-bound on the optimal objective value of $(\P3.2)$.
To this end, we proceed to tackle this issue by employing an iterative rank minimization approach, namely \textit{rank-penalty SCA} (\textbf{RP-SCA}), as discussed below.

The essential notion of \textbf{RP-SCA} technique lies in the fact that every rank-one matrix has only one non-zero eigenvalue. Hence, instead of making constraints on the rank, we focus on the $L-1$ smallest eigenvalues of $\W[n],\forall n$ forcing them to be all zero via an efficient iterative approach, as discussed in the lemma below.                

\renewcommand{\W}{\ensuremath{\mathbf{W}}}
\renewcommand{\V}{\ensuremath{\mathbf{V}}}
\renewcommand{\C}{\ensuremath{\mathbb{C}}}
\renewcommand{\S}{\ensuremath{\mathbb{S}}}

\begin{lemma}\label{lemma1:rank1refomulaiton}
Let $\W\in\S^+$ be a nonzero PSD matrix, then 
\[ \myrank(\W)=1 \iff \psi I - \V^\dagger\W\V \succeq 0,\]
where $\psi =0$, $\I$ is the identity matrix of size $(L-1)$, and $\V \in \C^{{L}\times{L-1}}$
are the eigenvectors corresponding to the $(L-1)$ smallest eigenvalues of $\W$.

\begin{proof}
Since $\W$ is PSD, it is Hermitian with nonnegative eigenvalues (please see Appendix \ref{Appendix B}).
We assume that the nonnegative eigenvalues of $\W$ are sorted in ascending order as $[\lambda_1, \lambda_2, \cdots, \lambda_L]$ with the $(L-1)$ corresponding normalized eigenvectors $\V=[\v_1, \v_2, \cdots, \v_{L-1}]$. Since the Rayleigh quotient of an eigenvector equals with its associated eigenvalue, i.e., $\v^\dagger_l\W\v_l=\lambda_l,~\forall l$ and eigenvectors of distinct eigenvalues for any Hermitian matrix are orthogonal, thus  $\psi I - \V^\dagger\W\V = \mydiag(\psi-\lambda_1, \psi-\lambda_2, \cdots, \psi-\lambda_{L-1})$ is a diagonal matrix. Therefore, $\W$ is rank one iff all the diagonal elements of $\psi I - \V^\dagger\W\V$ are all zero given $\psi=0$. 

\end{proof}

\end{lemma}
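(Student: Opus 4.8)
The plan is to establish the two directions of the stated equivalence by reducing everything to the spectral decomposition of $\W$, exactly as the statement hints. First I would invoke the fact that a PSD matrix is Hermitian with real nonnegative eigenvalues (deferred to Appendix B), so that $\W$ admits an orthonormal eigenbasis. Write the eigenvalues in ascending order $\lambda_1 \le \lambda_2 \le \cdots \le \lambda_L$, with orthonormal eigenvectors $\v_1,\dots,\v_L$, and set $\V = [\v_1,\dots,\v_{L-1}]$, the span of the $L-1$ smallest eigenvalues. With $\psi = 0$ the claimed matrix inequality reads $-\V^\dagger \W \V \succeq 0$.

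The computational heart of the argument is to evaluate $\V^\dagger \W \V$ explicitly. Because the $\v_l$ are eigenvectors, $\W\v_l = \lambda_l \v_l$, and because eigenvectors can be chosen orthonormal (distinct eigenvalues give orthogonality automatically, repeated eigenvalues allow an orthonormal choice within the eigenspace), we get $\v_i^\dagger \W \v_j = \lambda_j \v_i^\dagger \v_j = \lambda_j \delta_{ij}$. Hence $\V^\dagger \W \V = \mydiag(\lambda_1,\dots,\lambda_{L-1})$, a diagonal matrix, and $\psi \I - \V^\dagger \W \V = \mydiag(-\lambda_1,\dots,-\lambda_{L-1})$. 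A diagonal matrix is PSD iff every diagonal entry is nonnegative; since each $\lambda_l \ge 0$, the condition $-\lambda_l \ge 0$ forces $\lambda_1 = \cdots = \lambda_{L-1} = 0$. That is, $\psi \I - \V^\dagger \W \V \succeq 0$ holds iff the $L-1$ smallest eigenvalues of $\W$ all vanish.

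Finally I would connect this to the rank. Since $\W$ is a nonzero PSD matrix, it has at least one strictly positive eigenvalue, namely $\lambda_L > 0$; and $\myrank(\W)$ equals the number of nonzero eigenvalues. Thus $\myrank(\W) = 1$ iff $\lambda_1 = \cdots = \lambda_{L-1} = 0$, which by the previous paragraph is exactly $\psi \I - \V^\dagger \W \V \succeq 0$. Chaining the two equivalences completes the proof. For the converse direction one reads the same chain backwards: if the matrix inequality holds then all but the largest eigenvalue are zero, so $\W$ has rank one (it cannot have rank zero since $\W \neq 0$).

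The only subtlety — and the step I would be most careful about — is the orthonormality of the eigenvectors in the presence of repeated eigenvalues: the phrase ``eigenvectors of distinct eigenvalues are orthogonal'' is not by itself enough if some $\lambda_l$ coincide. The clean fix is to appeal to the spectral theorem for Hermitian matrices, which guarantees an orthonormal eigenbasis outright, so that $\V^\dagger \V = \I$ and the off-diagonal terms of $\V^\dagger \W \V$ genuinely vanish. Everything else is routine linear algebra once that point is pinned down.
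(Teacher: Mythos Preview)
Your proposal is correct and follows essentially the same approach as the paper's own proof: diagonalize $\V^\dagger\W\V$ via the eigenstructure of $\W$, observe it equals $\mydiag(\lambda_1,\dots,\lambda_{L-1})$, and conclude that the semidefinite condition with $\psi=0$ forces the $L-1$ smallest eigenvalues to vanish. If anything, your version is slightly more careful than the paper's, since you explicitly invoke the spectral theorem to handle repeated eigenvalues and use the hypothesis $\W\neq 0$ to ensure $\lambda_L>0$, points the paper leaves implicit.
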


Using Lemma \ref{lemma1:rank1refomulaiton}, we replace the rank constraint \eqref{rank-one} in $(\P2.2)$ with the corresponding semi-definite constraint and introduce the slack variables $\nu$ and $\pmb{\psi}=\{\psi[n],~n\in\N\}$, then rewrite ($\P3.2$) as

\begin{subequations}\label{P3_subproblem3}
\begin{align}
(\P3.3):& \stackrel{}{\underset{\nu,\pmb{\psi},\pmb{\eta},\mathbb{W}}{\mathrm{maximize}}~\nu - \sum^N_{n=1}\mu \psi[n] }\nonumber\\
\hspace{-5mm}\mathrm{s.t.}&~ \frac{1}{N}\sum^{N}_{n=1}A_{k,n}\ln\left(1+B_{k,n} \eta_k[n]\right) \geq \nu,~\forall k\in\K\\
& \psi[n] I - \V[n]^\dagger\W[n]\V[n] \succeq 0, ~\forall n\in \N\\
&\pmb{\psi} \geq \mathbf{0},~\eqref{prob321_cond1}
\end{align}
\end{subequations}

\renewcommand{\C}{\ensuremath{\mathbb{C}}}
\newcommand{\R}{\ensuremath{\mathbb{R}}}
\renewcommand{\v}{\ensuremath{\mathbf{v}}}

 It is worth mentioning that directly applying rank-one reformulation in Lemma \ref{lemma1:rank1refomulaiton} likely results in an infeasible problem. Therefore, by introducing a slack variable and bringing the rank-penalty to the objective function with weighting factor $\mu$, we attempt to iteratively maximize the original objective value while minimizing the rank penalty term. Since $\W[n]\in \mathbb{S}^+,~\forall n$ then $\V^\dagger\W\V \in \mathbb{S}^+$ ($(\V^\dagger\W\V)^\dagger=\V^\dagger\W\V$). Thus, $\pmb{\psi}$ needs to be sufficiently large and nonnegative to make the problem feasible. Therefore,  less focus initially is on the rank-one requirement, but rank one is satisfied when $\psi=0$. Once the rank-one PSD matrix $\W^\star$ is obtained, the optimal solution of ($\P3.1$) $\u^\star$ can be computed via $\u^\star=\sqrt{\lambda_L}\v_L$, where $\lambda_L\in\R^+$ is the largest eigenvalue of $\W^\star$ and $\v_L \in \C^L$ is the corresponding normalized eigenvector. We summarize the proposed beamforming maximization in Algorithm \ref{algo:beamforming}.

\renewcommand{\W}{\ensuremath{\mathbf{W}}}
\renewcommand{\V}{\ensuremath{\mathbf{V}}}
\renewcommand{\A}{\ensuremath{\mathbf{A}}}
\newcommand{\B}{\ensuremath{\mathbf{B}}}

\begin{figure}[!t]
 \removelatexerror
  \begin{algorithm}[H]
   \SetKwBlock{DoParallel}{}{}
  \caption{ Iterative \textbf{RM-SCA} algorithm for UIRS's beamforming maximization}\label{algo:beamforming}
  1:~\textbf{Initialize}~ set of matrices $\{\A_k[n], \B_m[n], \forall n,k,m \}$, small value $\mu$, parameters $\mu_{max}$ and $\varrho$\\
    2: Solve $(\P3.2.1)$ using \eqref{P3_subproblem21} to obtain $\W^{(0)}_n,~\forall n$ then extract $\V^{(0)}_n,~\forall n$, and set iteration index $i=0$, and small value $\mu$\\
    3:~\textbf{Repeat:} 
    \DoParallel{
    3.1:~$i \gets i + 1$\\
    3.2:~Solve $(\P3.3)$ using \eqref{P3_subproblem3} to find $\mathbb{W}^{(i)}$ and $\pmb{\psi}^{(i)}$\\
    3.3:~Obtain $\V^{(i)}_n$ from $\W^{(i)}_n$\\
    3.4:~$\mu \gets \min\{\mu_{max},\varrho \mu\}$}
    4:~\textbf{Until} Convergence\\
    5:~\textbf{Extract} $\u[n]$ from $\W^{(i)}[n],~\forall n$;\\
    6:~\textbf{Return:} $\{\pmb{\rho}^\star[n] = \|\u[n]\|, \pmb{\phi}^\star[n]=\measuredangle\u[n] \}^{N}_{n=1}$
  \end{algorithm}
\end{figure}

\renewcommand{\W}{\ensuremath{\mathcal{W}}}
Now we present a low-complex alternative design to balance mAEE performance and computational complexity. Note that with the known scheduled UE per time slot, the fixed transmit powers and UAVs' trajectories, we can maximize the quality of the received signal at Bob via coherently combing signals from different paths, meanwhile degrading the leakage to Willies. Given UE $k$ is scheduled per time slot, we can rewrite \eqref{ap-uirs-ue_ch_gain}
as
\begin{align}\label{cascadedchannelgain}
  &h_{ark}[n] = \tilde{h}_{ark}[n] \nonumber\\
  &\times\sum^L_{l=1} \rho_l[n]\mathrm{e}^{j(\phi_l[n] + \beta^k_l[n] - \theta^k_l[n])},~\forall k\in \K,~n\in \N
\end{align}
Note that improving received signal energy at Bob via improving the channel gain \eqref{cascadedchannelgain} leads to mAEE improvement according to \eqref{rk_lb} and \eqref{opt_prob}. Thus, by properly aligning the phases of different paths, we can improve the channel quality by setting $\phi_l[n] + \beta^k_l[n] - \theta^k_l[n]=\omega,~\rho^\star_l[n]=1.~\forall l\in\L,~n\in\N$,where $0 \leq \omega \leq 2\pi$, the optimized IRS beamforming matrices can be obtained as 
\begin{align}\label{IRSbeamforming_opt}
    \Phi^\star[n] = \mydiag \left(\{\rho^\star_l[n]\mathrm{e}^{j\phi^\star_l[n]}\}^L_{l=1}\right),~ \forall l\in\L,~n\in\N
\end{align} 
with
\[
   \hspace{-3mm}\phi^\star_l[n] = \omega + \theta^k_l[n] - \beta^k_l[n],~~~\rho^\star_l[n] = 1,
 \]
   
Here, the magnitude of the IRS reflected beam needs to be set unity, i.e., $\rho_l[n]=1,~\forall l\in\L,~n\in\N$; otherwise  $\rho_l[n]$ can be increased, resulting in further improvement of $h_{ark}[n]$ and then our objective function in terms of the mAEE. This demonstrates the sub-optimality of the low-complex design. Consequently, \eqref{cascadedchannelgain} is maximized and accordingly $\bar{R}^{lb}_k[n]$ given by \eqref{rk_lb} can be simplified as
\renewcommand{\C}{\ensuremath{\mathrm{C}}}

\begin{align}
  \bar{R}^{lb}_k[n] \hspace{-1mm}=\hspace{-1mm}\mathrm{W}\log_2\hspace{-1mm}\left[\hspace{-1mm}1\hspace{-1mm}+\hspace{-1mm}\frac{p_a[n]\left(\frac{L^2h_{r0}[n]\mathrm{e}^{-{\kappa(\|\q_r[n] - \q_a \| + \|\q_r[n] - \q_k \|)}} }{\|\q_r[n] - \q_a \|^{{\rho}}\|\q_r[n] - \q_k \|^{{\rho}}}\right)}{\frac{1}{2}\hat{p}_j[n]\frac{h_{j0}[n]\mathrm{e}^{-{\kappa \|\q_j[n] - \q_k \|}}}{\|\q_j[n] - \q_k \|^{\rho}} +1}\hspace{-1mm}\right],
\end{align}
where $h_{r0}[n] \treq \frac{\lambda^2_c}{64\pi^3 \sigma^2_k[n]}$ and $h_{j0}[n] \treq \frac{\lambda^2_c}{16\pi^2 \sigma^2_k[n]}$. Furthermore, following the above beamforming design, the minimum detection error rate given by \eqref{zeta_min} can be rewritten as
\begin{align}\label{zeta_min_recast}
    \zeta^\star_{m,k}[n] &= 1 - \frac{p_a[n]\frac{\tilde{L}[n]h_{r0}\exp\left(-\kappa(\|\q_r[n] - \q_a \| + \|\q_r[n] - \q_m \|)\right)} {\|\q_r[n] - \q_a \|^{{\rho}}\|\q_r[n] - \q_m \|^{{\rho}}}}
    {\hat{p}_j[n]\frac{h_{j0}\exp\left(-\kappa \|\q_j[n] - \q_m \|\right)}{\|\q_j[n] - \q_m \|^{\rho}}},\nonumber\\
    &\qquad\forall n\in\N,~k\in\K,~m\in\W
\end{align}
where 
\[\tilde{L}[n] \treq \bigg\|\sum^L_{n=1}\exp\left(j(\phi^\star_l[n] - \theta^m_l[n] + \beta^m_l[n] )\right)\bigg\|^2\]

Since \eqref{zeta_min_recast} is too complicated, we consider using its restrictive conservative lower-bound, leading to the reformulation of \eqref{covert_cond} as
 \begin{align}
\widetilde{\C7}:~\sum^{K}_{k=1} \alpha_k[n] &\min_{m \in \W} \bigg[1 - \frac{p_a[n]}{\hat{p}_j[n]}\nonumber\\
&\times\frac{\frac{L^2h_{r0}\exp\left(-\kappa(\|\q_r[n] - \q_a \| + \|\q_r[n] - \q_m \|)\right)} {\|\q_r[n] - \q_a \|^{{\rho}}\|\q_r[n] - \q_m \|^{{\rho}}}}
    {\frac{h_{j0}\exp\left(-\kappa \|\q_j[n] - \q_m \|\right)}{\|\q_j[n] - \q_m \|^{\rho}}}\bigg] \nonumber\\
    &\geq 1-\varepsilon,~ \forall n\in\N,~k\in\K \label{covert_cond_recasted1}
 \end{align}
In the sequel, we utilise above reformulations for trajectory and velocity optimization.
\subsection{Sub-problem IV: Joint UCJ's Trajectory and Velocity Optimization}
We focus on jointly optimizing the UCJ's trajectory $\q_j$ and velocity $\v_j$ while keeping the transmit powers, UIRS's beamforming matrix, and  user scheduling $\pmb{\alpha}$ fixed. The corresponding sub-problem is given as problem $(\P4)$ (shown on top of the next page), 
\begin{figure*}[t]
\begin{subequations}
\begin{align}
(\P4):& \stackrel{}{\underset{\q_j,\v_j}{\mathrm{maximize}}~\underset{k\in\K}{\min}~\frac{1}{N}\sum^N_{n=1}\frac{A_{k,n}\ln\left(1+\frac{B_{k,n}}{C_{k,n}\frac{\exp(-\kappa\|q_j[n] - \q_k\|)}{\|\q_j[n]-\q_k\|^\rho} + 1}\right)}{P_o\left(1+c_0\|\v_j[n]\|^2\right) + c_1 \|\v_j[n]\|^3  +  P_i\left(\sqrt{1+c^2_2\|\v_j[n]\|^4} -c_2\v_j[n]\|^2\right)^{\frac{1}{2}} + P_{f,r}[n]}} \nonumber\\
&\mathrm{s.t.}~~~~~\eqref{ucj_flight_conds},~\sum^K_{k=1}\alpha_k[n]\underset{m\in\W}{\min}~\left\{1-D_{m,k,n}\|\q_j[n]-\q_m\|^\rho \exp(\kappa \|\q_j[n] - \q_m\|)\right\} \geq 1-\varepsilon,~\forall n\in\N, m\in\W 
\label{p4_cond2}
\end{align}
\end{subequations}
\noindent\rule{\textwidth}{.5pt}
\end{figure*}
where
\[A_{k,n} \hspace{-1mm}=\hspace{-1mm} \frac{\alpha_k[n]W}{\ln(2)},~B_{k,n}\hspace{-1mm}=\hspace{-1mm}\frac{p_a[n] g_{ark}[n]}{\sigma^2_k},~C_{k,n} \hspace{-1mm}=\hspace{-1mm}\left(\frac{\lambda_c}{4\pi}\right)^2  \frac{\hat{p}_j[n]}{2\sigma^2_k}\]
\[D_{m,k,n}= \frac{p_a[n] g_{arm}[n]}{p_j[n] \left(\frac{\lambda_c}{4\pi}\right)^2},~\forall n\in\N, k\in \K, m\in \W\]
Problem $(\P4)$ is nonconvex  being in the form of minimum of sum-of-fractional-nonlinear-programming. To facilitate the development of optimization framework, we first mention the following lemma.

\newcommand{\x}{\ensuremath{\mathbf{x}}}
\newcommand{\y}{\ensuremath{\mathbf{y}}}
\newcommand{\X}{\ensuremath{\mathcal{X}}}
\newcommand{\Y}{\ensuremath{\mathcal{Y}}}

\begin{lemma}\label{lemma2:sum-of-ratios}
Given $N$ pairs of non-negative functions $f_n(\x)$ and positive functions $g_n(\x)$, as well as having a nonempty constraint set $\mathcal{X}$, the general average-of-ratios fractional programming (FP) maximization can be represented in the form of
\begin{align}\label{FP:general}
&\underset{\x}{\mathrm{maximize}}~~\frac{1}{N}\sum^N_{n=1}\frac{f_n(\x)}{g_n(\x)} \nonumber\\
&~\mathrm{s.t.}~~~\x \in \X
\end{align}
The above problem is in general nonconvex, thereby too difficult to solve. However, if we could obtain a tight concave lower-bounds on $f_n(\x),~\forall n$, represented by $f^{lb}_n(\y),~\forall n$, and the tight convex upper-bounds of $g_n(\x),~\forall n$, denoted as $g^{up}_n(\x),~\forall n$, such that $f_n(\x) \geq f^{lb}_n(\y)~\text{and}~g_n(\x) \leq g^{up}_n(\y), \forall n$, then by applying the Dinkelbach-based quadratic transformation introduced in \cite{FP2018Shen}, \eqref{FP:general} can be  approximately reformulated  as a convex optimization problem given by
\begin{align}\label{FP:cvx}
&\underset{\y, \pmb{\gamma}}{\mathrm{maximize}}~~\frac{1}{N}\sum^N_{n=1}\left(2\gamma^2_n\sqrt{f^{lb}_n(\y)} - \gamma^2_n g^{up}_n(\y)\right) \nonumber\\
&~\mathrm{s.t.}~~~\y \in \Y\subseteq\X
\end{align}
where $\pmb{\gamma}=\{\gamma_{n}\}^N_{n=1}$ are auxiliary variables whose optimal values with fixed $\y$ can be obtained as $\pmb{\gamma}^\star = \frac{\sqrt{f^{lb}_n(\y)}}{g^{up}_n(\y)}$. It is also worth stressing that the optimal value of problem \eqref{FP:general} is no less than that of \eqref{FP:cvx}. Since problem \eqref{FP:cvx} is a convex optimization problem; thus, it can be solved efficiently by alternatively updating $\pmb{\gamma}$ and $\y$ starting from a feasible point. Consequently, having solved \eqref{FP:cvx}, we can achieve an approximate solution of the original problem given by \eqref{FP:general} but with guaranteed convergence.
\end{lemma}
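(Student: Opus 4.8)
The plan is to establish Lemma~\ref{lemma2:sum-of-ratios} in three stages: (i) a \emph{minorization} step that replaces each ratio in~\eqref{FP:general} by a surrogate built from the concave lower bound $f^{lb}_n$ and the convex upper bound $g^{up}_n$; (ii) a per-ratio \emph{quadratic transform} that decouples numerator and denominator through the auxiliary variables $\pmb{\gamma}$, following~\cite{FP2018Shen}; and (iii) a verification that~\eqref{FP:cvx} is blockwise convex, so that alternating maximization is well defined, monotone and convergent, and that its optimal value is a lower bound on that of~\eqref{FP:general}. For stage (i) I would fix $\mathbf{x}\in\Y$ and use $0\le f^{lb}_n(\mathbf{x})\le f_n(\mathbf{x})$ together with $0< g_n(\mathbf{x})\le g^{up}_n(\mathbf{x})$: monotonicity of $a\mapsto a/t$ (nondecreasing for $a\ge0$) and of $t\mapsto a/t$ (nonincreasing on $t>0$) gives the pointwise bound $f_n(\mathbf{x})/g_n(\mathbf{x})\ge f^{lb}_n(\mathbf{x})/g^{up}_n(\mathbf{x})\ge 0$. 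Summing over $n$, the surrogate objective $\widetilde{F}(\mathbf{x})\treq\frac{1}{N}\sum_{n=1}^{N}f^{lb}_n(\mathbf{x})/g^{up}_n(\mathbf{x})$ minorizes the true objective on $\Y$, and since $\Y\subseteq\X$,
\[
\sup_{\mathbf{y}\in\Y}\widetilde{F}(\mathbf{y})\;\le\;\sup_{\mathbf{y}\in\Y}\frac{1}{N}\sum_{n=1}^{N}\frac{f_n(\mathbf{y})}{g_n(\mathbf{y})}\;\le\;\sup_{\mathbf{x}\in\X}\frac{1}{N}\sum_{n=1}^{N}\frac{f_n(\mathbf{x})}{g_n(\mathbf{x})}.
\]

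For stage (ii) I would observe that, for fixed $\mathbf{y}\in\Y$ and each $n$, the map $\gamma_n\mapsto 2\gamma_n\sqrt{f^{lb}_n(\mathbf{y})}-\gamma_n^2 g^{up}_n(\mathbf{y})$ is a strictly concave quadratic in $\gamma_n$ (because $g^{up}_n(\mathbf{y})>0$), uniquely maximized at $\gamma^\star_n=\sqrt{f^{lb}_n(\mathbf{y})}/g^{up}_n(\mathbf{y})\ge0$ with maximum value exactly $f^{lb}_n(\mathbf{y})/g^{up}_n(\mathbf{y})$, which reproduces the closed-form $\pmb{\gamma}^\star$ stated in the lemma. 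Maximizing over $\pmb{\gamma}$ first, the objective of~\eqref{FP:cvx} equals $\widetilde{F}(\mathbf{y})$ for every $\mathbf{y}\in\Y$, so the optimal value of~\eqref{FP:cvx} equals $\sup_{\Y}\widetilde{F}$ and any maximizer of~\eqref{FP:cvx} projects onto a maximizer of the surrogate problem. Combined with the displayed chain above, this shows that the optimal value of~\eqref{FP:general} is no smaller than that of~\eqref{FP:cvx}.

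For stage (iii) I would verify the algorithmic claims. For fixed $\pmb{\gamma}\succeq\mathbf{0}$, each term $2\gamma_n\sqrt{f^{lb}_n(\mathbf{y})}$ is concave in $\mathbf{y}$ (a nonnegative multiple of the composition of the concave nondecreasing map $\sqrt{\cdot}$ with the concave nonnegative $f^{lb}_n$, cf.~\cite{Boyd2004}) and $-\gamma_n^2 g^{up}_n(\mathbf{y})$ is concave; since $\Y$ is convex, the $\mathbf{y}$-update is a convex program, while for fixed $\mathbf{y}$ the $\pmb{\gamma}$-update is the separable closed-form maximization of stage (ii). Alternating the two blocks yields a nondecreasing sequence of objective values; when $\Y$ is compact (as it is in every instance of interest, where the trajectory, velocity and resource constraints define bounded feasible sets) this sequence is bounded and converges, and standard block-coordinate-ascent arguments identify its limit points as stationary points of~\eqref{FP:cvx}, giving an approximate solution of~\eqref{FP:general} with guaranteed convergence.

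The main obstacle lies in stage (iii): the objective of~\eqref{FP:cvx} is \emph{not} jointly concave in $(\mathbf{y},\pmb{\gamma})$ (the coupling term $2\gamma_n\sqrt{f^{lb}_n(\mathbf{y})}$ has an indefinite Hessian), so ``convex'' must be read in the blockwise sense, and the convergence argument must lean on the \emph{exactness} of each block maximizer (the $\pmb{\gamma}$-step in closed form, the $\mathbf{y}$-step from a convex solver) rather than on joint concavity. One also has to check that $f^{lb}_n\ge0$ holds throughout $\Y$, so that $\sqrt{f^{lb}_n}$ and the concave-composition rule remain legitimate there, and that $\Y$ is a genuine convex inner approximation of $\X$ once the numerator/denominator surrogates are substituted into any constraints coupling $\mathbf{y}$.
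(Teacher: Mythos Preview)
The paper does not supply a proof of this lemma; it is stated as a recipe and defers to~\cite{FP2018Shen} for the underlying quadratic transform. Your three-stage argument (minorization by the concave/convex surrogates, the per-ratio quadratic transform with closed-form $\gamma_n^\star=\sqrt{f^{lb}_n}/g^{up}_n$ yielding exactly $f^{lb}_n/g^{up}_n$, and blockwise concavity with monotone alternating maximization) is precisely the standard justification from~\cite{FP2018Shen} combined with an SCA bounding step, and it is correct.

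Two remarks worth recording. First, the displayed objective in~\eqref{FP:cvx} contains a typo: $2\gamma_n^2\sqrt{f^{lb}_n(\mathbf{y})}$ should read $2\gamma_n\sqrt{f^{lb}_n(\mathbf{y})}$, since with $\gamma_n^2$ in both terms the expression is linear in $\gamma_n^2$ and admits no interior maximizer; your stage~(ii) computation (and the stated $\gamma_n^\star$) only goes through with the corrected form, and the paper's later use of the lemma in $(\P4.2)$ and $(\P5.2)$ indeed writes $2\gamma_{k,n}\sqrt{\cdot}$. Second, your observation that~\eqref{FP:cvx} is not jointly concave in $(\mathbf{y},\pmb{\gamma})$ and that ``convex'' must be read blockwise is exactly right and matches how~\cite{FP2018Shen} frames the guarantee; the paper's Algorithms~\ref{Trj_UCJ_algo} and~\ref{Trj_UIRS_algo} implement precisely this alternating scheme.
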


\newcommand{\w}{\ensuremath{\mathbf{w}}}
\renewcommand{\t}{\ensuremath{\mathbf{t}}}

Using Lemma \ref{lemma2:sum-of-ratios}, we first convert each ratio of summation terms in the objective function of $(\P4)$ to be in the form of non-negative concave over positive convex functions. To this end, we take non-negative slack variables  $\w=\{w_k[n],~\forall n\in\N, k\in\K\}$, $\t = \{t[n],~\forall n\in \N\}$ such that
\[w_k[n] = \frac{\exp(-\kappa\|q_j[n] - \q_k[n]\|)}{\|\q_j[n]-\q_k[n]\|^\rho},~\forall n\in\N, k\in\K\] 
and \[t[n] = \left(\sqrt{1+c^2_2\|\v_j[n]\|^4} -c_2\|\v_j[n]\|^2\right)^{\frac{1}{2}},~\forall n\in\N\]
Since 
\[f_4(w[n])\treq A_{k,n}\ln(1+\frac{B_{k,n}}{C_{k,n} w_k[n]+1})\] is convex w.r.t $w_k[n]$, according to \cite[Lemma 1]{mamaghani2020_tvt}, we can apply the first order restrictive approximation to obtain a global concave lower-bound at the given local point $\w^{lo}=\{w^{lo}_k[n], \forall n\in\N, k\in\K\}$ as 
\begin{align}
f_4(w_k[n]) &\geq f_4(w^{lo}_k[n]) \nonumber\\
&\hspace{-15mm}- \frac{A_{k,n}B_{k,n}C_{k,n}(w_k[n]\hspace{-1mm}-\hspace{-1mm}w^{lo}_k[n])}{(C_{k,n} w^{lo}_k[n]\hspace{-1mm}+\hspace{-1mm}1)(C_{k,n} w^{lo}_k[n]\hspace{-1mm}+\hspace{-1mm}B_{k,n}\hspace{-1mm}+\hspace{-1mm}1)}\hspace{-1mm}\treq\hspace{-1mm} f^{lb}_4(w_k[n]),
\end{align}
We reformulate problem ($\P4$), taking the new non-negative slack variables $\v=\{v_k[n], \forall n\in\N, k\in\K\}$, as
\begin{subequations}\label{subprob4.1}
\begin{align}
(\mathrm{P4.1}):& \stackrel{}{\underset{\q_j,\v_j, \w, \v, \t}{\mathrm{maximize}}~~~
\begin{aligned}
\underset{k\in\K}{\min}~\frac{1}{N}\sum^N_{n=1}\bigg(2\gamma_{k,n}&\sqrt{f^{lb}_4(w_k[n])}  \nonumber\\
&\hspace{-5mm}- \gamma^2_{k,n}  g^{up}_4(\v_j[n],t[n])\bigg)
\end{aligned}}\nonumber\\
\mathrm{s.t.}&\quad
\begin{aligned}\label{p41_cond1}
\eqref{p4_cond2},&~\ln(w_k[n]) + \rho\ln(v[n])\\
&+\kappa v[n] \geq 0,~\forall n\in\N, k\in\K 
\end{aligned}\\
&\quad v_k[n] \leq \|\q_j[n] - \q_k \|,~\forall n\in\N, k\in\K \label{p41_cst2}\\
&\quad t^2[n] + c_2\|\v_j[n]\|^2 \geq \frac{1}{t^2[n]},~\forall n\in \N \label{p41_cst3}\\
&\quad \|\q_r[n] - \q_j[n]\| \geq D_s,~\forall n\in\N\label{p41_cst5}
\end{align}
\end{subequations}
where $g^{up}_4(\v_j[n],t[n])$ is a convex function given by
\begin{align}
   g^{up}_4(\v_j[n],t[n]) &=  \Big(P_o\left(1+c_0\|\v_j[n]\|^2\right) + c_1 \|\v_j[n]\|^3 \nonumber\\
   &+  P_it[n] + P_{f,r}[n]\Big),~\forall n\in\N
\end{align}
Note that inequality constraints \eqref{p41_cond1}, \eqref{p41_cst2}, and \eqref{p41_cst3} must be met with equality at the optimal point on the grounds that the value of the objective function can be otherwise increased  without violating any constraints. Note that the convexity of constraint \eqref{p41_cond1} follows from the fact that $\log(x), x>0$ is a concave function of $x$ and the sum operator preserves the convexity. Further, knowing that $h_1(x)=x^\rho\exp(\kappa x), x \geq 0$ with $\rho \geq 2$ is a convex function whose extended value extension is non-decreasing w.r.t $x$, plus $h_2(\q_j[n])=\|\q_j[n]-\q_m\|$ is convex as any norm of affine function is convex, one can conclude that the composition function $h_1 \circ h_2 (x)$ is convex \cite{Boyd2004}. With the new reformulation, although $(\mathrm{P4.1})$ is now in comparably good shape, some newly introduced constraints such as \eqref{p41_cst2}, \eqref{p41_cst3} and  \eqref{p41_cst5} are still non-convex. But by substituting their corresponding tight approximate convex constraints via applying the restrictive Taylor approximation at local points $\{\q^{lo}_j[n], t^{lo}[n], \v^{lo}_j[n], \forall n\}$ and introducing a slack variable $\eta_j$, we can reach the convex reformulation of ($\P4.1$) given by
\begin{subequations}\label{subprob_trjUCJ_cvx}
\begin{align}
(\P4.2):& \stackrel{}{\underset{\eta_j, \q_j,\v_j, \w, \v, \t}{\mathrm{maximize}}~~~\eta_j}\nonumber\\
\mathrm{s.t.}&\quad\eqref{p41_cond1},~\begin{aligned}
&\frac{1}{N}\sum^N_{n=1}\bigg(2\gamma_{k,n}\sqrt{f^{lb}_4(w_k[n])}  \\
&\hspace{-15mm}- \gamma^2_{k,n}  g^{up}_4(\v_j[n],t[n])\bigg) \geq \eta_j,~\forall k\in\K\\
\end{aligned}\\
&\begin{aligned}\label{p41_cst2_cvx}
& v^2_k[n] \leq - \|\q_j[n]-\q_k\|^2 \\
&\hspace{-15mm}+ 2(\q_j[n]-\q_k)^\mathrm{T}(\q_j[n]-\q_k),~\forall n\in\N, k\in\K \\
\end{aligned}\\
&\begin{aligned}\label{p41_cst3_cvx}
-(t^{lo}[n])^2 &+2t[n]t^{lo}[n] + c_2\Big(-\|\v^{lo}_j[n]\|^2 \\
&\hspace{-15mm}+ 2(\v^{lo}_j[n])^\mathrm{T}\v_j[n]\Big) \geq  \frac{1}{t^2[n]};~\forall n\in\N \\
\end{aligned}\\
&\begin{aligned}\label{p41_cst5_cvx}
 &2(\q_j[n]-\q_r[n])^\mathrm{T}(\q_j[n]-\q_r[n])\\
 &\hspace{10mm}- \|\q_j[n]-\q_r[n]\|^2, \geq D^2_s,~\forall n\in\N 
\end{aligned}
\end{align}
\end{subequations}
Note that sub-problem $(\P4.2)$,  for a given $\{\gamma_{k,n},~k\in\K,~n\in\N\}$, is equivalent convex approximate representation of sum-of-ratios sub-problem $\mathrm{P4}$ in FP form. Thus, by applying SCA we can solve problem $(\P4.2)$ as summarized in Algorithm \ref{Trj_UCJ_algo}.

\begin{figure}[!t]
 \removelatexerror
 \resizebox{\columnwidth}{!}{%
 \begin{algorithm}[H]
\SetAlgoLined 
\small
\KwResult{$\q^\star_j$, $\v^\star_j$}
\textbf{Initialize} feasible point $(\q^{(0)}_j, \v^{(0)}_j, \w^{(0)}. \t^{(0)})$, set iteration index $s=0$, and define $\pmb{\gamma} = \{\gamma_{k,n},~\forall k, n\}$ such that
$\gamma_{k,n} = \frac{\sqrt{f^{lb}_4(w^{(0)}_k[n]})}{g^{up}_4\left(\v^{(0)}_j[n],t^{(0)}[n]\right)}$, and calculate $\eta^{(0)}_j = \underset{k\in\K}{\min}~\frac{1}{N}\sum^N_{n=1}\frac{{f_4(w^{(0)}_k[n]})}{g^{up}_4\left(\v^{(0)}_j[n],t^{(0)}[n]\right)}$\\
 \While{\textit{true}}{
   $s \gets s + 1$;\\
   Given $\left(\pmb{\gamma}, \q^{(s-1)}_j, \v^{(s-1)}_j, \w^{(s-1)}, \t^{(s-1)}\right)$, solve $(\P4.2)$ using  \eqref{subprob_trjUCJ_cvx}, and obtain $\left(\q^{(s)}_j, \v^{(s)}_j, \w^{(s)}, \t^{(s)}, \eta^{(s)}_j\right)$; \\
  Update $\pmb{\gamma} \gets \frac{\sqrt{f^{lb}_4(\w^{(s)})}}{g^{up}_4(\v^{(s)}_j,\t^{(s)})}$; \\
  \If{$\Big\|\frac{\eta^{(s)}_j - \eta^{(s-1)}_j}{\eta^{(s-1)}_j}\Big\| \leq \epsilon_j$}{
  $(\q^\star_j , \v^\star_j) = (\q^{(s)}_j, \v^{(s)}_j)$;\\
  \textit{break};}
}
 \caption{\label{Trj_UCJ_algo} \small Proposed algorithm to approximately solve subproblem $(\P4)$ alternatively} 
\end{algorithm}}
\end{figure}

\subsection{Sub-problem V: Joint UIRS's Trajectory and Velocity Optimization}
\renewcommand{\x}{\ensuremath{\mathbf{x}}}
\renewcommand{\y}{\ensuremath{\mathbf{y}}}
\renewcommand{\s}{\ensuremath{\mathbf{s}}}
\renewcommand{\t}{\ensuremath{\mathbf{t}}}
\renewcommand{\C}{\ensuremath{\mathrm{C}}}
We now devise the UIRS's both trajectory $\q_r$ and velocity $\v_r$ while keeping the other variables fixed. Thus, the corresponding mAEE optimization sub-problem $(\P5)$ is introduced with constraints in  (\ref{p5_cond2})
\begin{figure*}
\begin{subequations}\label{subprob_trjUIRS}
\begin{align}
(\P5):& \stackrel{}{\underset{\q_r,\v_r}{\mathrm{maximize}}~~~\underset{k\in\K}{\min}~\frac{1}{N}\sum^N_{n=1}\frac{A_{k,n}\ln\left(1+B_{k,n}\frac{\exp\left(-\kappa(\|\q_r[n] - q_a\| + \|\q_r[n] - \q_k\|)\right)}{\|\q_r[n] - q_a\|^\rho\|\q_r[n] - q_k\|^\rho}\right)}{P_o\left(1+c_0\|\v_r[n]\|^2\right) + c_1 \|\v_r[n]\|^3  +  P_i\left(\sqrt{1+c^2_2\|\v_r[n]\|^4} -c_2\v_r[n]\|^2\right)^{\frac{1}{2}} + P_{f,j}[n]}} \nonumber\\
\mathrm{s.t.}&\quad\eqref{uirs_flight_conds},~\sum^K_{k=1}\alpha_k[n]\underset{m\in\W}{\min}~\left\{1-C_{m,n}\frac{\exp(-\kappa(\|\q_r[n] - q_a\| + \|\q_r[n] - \q_k\|))}{\|\q_r[n] - q_a\|^\rho\|\q_r[n] - q_k\|^\rho}\right\} \geq 1-\varepsilon,~\forall n\in\N, m\in\W \label{p5_cond2}
\end{align}
\end{subequations}
\noindent\rule{\textwidth}{.5pt}
\end{figure*}
where 
\[A_{k,n} = \frac{\alpha_k[n]W}{\ln 2},~B_{k,n} = \frac{p_a[n]L^2}{0.5\hat{p}_j[n]G_{jk}[n]+\sigma^2_k}\left(\frac{\lambda_c}{8\pi\sqrt{\pi} }\right)^2\]
\[C_{m,k,n}= \frac{p_a[n]L^2}{\hat{p}_j[n]G_{jm}[n]}\left(\frac{\lambda_c}{8\pi\sqrt{\pi}}\right)^2,~\forall n\in\N, k\in \K, m\in \W\]
Problem $(\P5)$ is nonconvex  being in the form of min-of-sum-of-fractional-nonlinear-programming. 
Similar to the approach taken to deal $(\P4)$, we can tackle $(\P5)$ by converting the FP to the quadratic subtraction form. To this end, we first mention a fruitful lemma, given below.
\begin{lemma} \label{cvx_log_frac_exp}
Let $\Upsilon(x,y;a,b,c) = \ln\left(1+a\frac{\exp(-b(x + y))}{x^cy^c}\right)$ in the domain of $x,y> 0$ with constants $a,b,c \geq 0$ be a bi-variate function. At a given local point $(x^{lo}, y^{lo})$, the following inequality holds with tightness.
\begin{align}\label{upsilon_lb}
    &\Upsilon(x,y;a,b,c)\geq \ln\left(1+a\frac{\exp(-b(x^{lo} + y^{lo}))}{(x^{lo})^c(y^{lo})^c}\right) \nonumber\\
    &-\frac{a\,\left(c+b\,x^{lo}\right)(x-x^{lo})}{x^{lo}\,\left(a+(x^{lo})^c\,(y^{lo})^c\,{\mathrm{e}}^{b\,\left(x^{lo}+y^{lo}\right)}\right)} \nonumber\\
    &-\frac{a\,\left(c+b\,y^{lo}\right)(y-y^{lo} )}{y^{lo}\,\left(a+(x^{lo})^c\,(y^{lo})^c\,{\mathrm{e}}^{b\,\left(x^{lo}+y^{lo}\right)}\right)}\nonumber\\
    &\treq \Upsilon^{lb}(x,y;x^{lo},y^{lo},a,b,c),
\end{align}
\begin{proof}
We commence with computing the hessian of $\Upsilon(x,y)$, i.e., $\nabla^2 \Upsilon(x,y)$. Since $\Upsilon(x,y)$ is symmetric, we shall only calculate the second order derivatives as

\begin{align}
    \pdv[2]{\Upsilon(x,y)}{x} &= \frac{
    a^2c + a(xy)^c\exp(b (x + y))\left((bx+c)^2 + c\right)}{x^2\left(a + (xy)^c\exp(b (x + y))\right)^2},\\
     \pdv[2]{\Upsilon(x,y)}{x}{y} &= \pdv[2]{\Upsilon(x,y)}{y}{x} = \frac{a(xy)^c(c+bx)(c+by)}{xy(a + (xy)^c\exp(b(x+y)))^2},
\end{align}
Next, we derive determinant of $\nabla^2 \Upsilon(x,y)$ as
\begin{align}
    &\det(\nabla^2 \Upsilon(x,y)) \hspace{-1mm}=\hspace{-1mm} \pdv[2]{\Upsilon(x,y)}{x} \pdv[2]{\Upsilon(x,y)}{y} \hspace{-1mm}-\hspace{-1mm}  \pdv{\Upsilon(x,y)}{x}{y} \pdv{\Upsilon(x,y)}{y}{x},\\
    &\hspace{-3mm}=\hspace{-1mm}\frac{a^2c(ac \hspace{-1mm}+\hspace{-1mm} (xy)^c \exp(b(x\hspace{-1mm}+\hspace{-1mm}y)))\left[2c^2 \hspace{-1mm}+\hspace{-1mm} b^2(x^2\hspace{-1mm}+\hspace{-1mm}y^2) \hspace{-1mm}+\hspace{-1mm} c(2b(x\hspace{-1mm}+\hspace{-1mm}y)\hspace{-1mm}+\hspace{-1mm}1)\right]}{(xy)^2(a \hspace{-1mm}+\hspace{-1mm} (xy)^c \exp(b(x\hspace{-1mm}+\hspace{-1mm}y)))},
\end{align}
It can be readily verified that $\pdv[2]{\Upsilon(x,y)}{x} \geq 0$ and  $\mathbf{det}(\nabla^2 \Upsilon(x,y)) \geq 0$, implying that the hessian matrix of $\Upsilon(x,y)$ is positive semi-definite, i.e., 
\begin{align}
    \nabla^2 \Upsilon(x,y) = \begin{bmatrix}
 \pdv[2]{\Upsilon(x,y)}{x} & \pdv[2]{\Upsilon(x,y)}{x}{y}\\
\pdv[2]{\Upsilon(x,y)}{y}{x}  & \pdv[2]{\Upsilon(x,y)}{y}
\end{bmatrix} \succeq 0,
\end{align}
Therefore, following the second-order condition law \cite{Boyd2004}, the function $\Upsilon(x,y;a,b,c)$ is convex whose first-order Taylor approximation provides a global lower-bound at the given local point, as in \eqref{upsilon_lb}; thus, the proof is completed.
\end{proof}
\end{lemma}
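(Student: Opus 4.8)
The plan is to establish that $\Upsilon(\cdot,\cdot\,;a,b,c)$ is convex on the open positive orthant $\{x>0,\ y>0\}$; the asserted bound \eqref{upsilon_lb} is then nothing but the first‑order (tangent‑plane) underestimator of a differentiable convex function, an immediate consequence of the first‑order convexity condition \cite{Boyd2004}, and automatically tight at the expansion point $(x^{lo},y^{lo})$. The quickest route to convexity is to absorb the rational factor into the exponent: writing $a\,x^{-c}y^{-c}\mathrm{e}^{-b(x+y)}=\mathrm{e}^{E(x,y)}$ with $E(x,y)\treq\ln a-b(x+y)-c\ln x-c\ln y$, we get $\Upsilon(x,y)=\ln\!\big(1+\mathrm{e}^{E(x,y)}\big)$. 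Here $E$ is convex, being affine in $(x,y)$ plus $-c(\ln x+\ln y)$, and $-\ln$ is convex with $c\geq0$; and $t\mapsto\ln(1+\mathrm{e}^{t})$ is convex and non‑decreasing. Since a convex non‑decreasing function composed with a convex function is convex, $\Upsilon$ is convex.

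If one prefers the explicit, Hessian‑based route (which is what the displayed formulae in the statement record), I would instead set $\Upsilon=\ln(1+P)$ with $P\treq a(xy)^{-c}\mathrm{e}^{-b(x+y)}>0$ and use $P_x=-P\,(b+c/x)$, $P_y=-P\,(b+c/y)$. With $\phi\treq b+c/x$ and $\psi\treq b+c/y$ this gives $\Upsilon_x=-P\phi/(1+P)$ and, after one more differentiation,
\[
\Upsilon_{xx}=\frac{P\phi^{2}}{(1+P)^{2}}+\frac{Pc/x^{2}}{1+P},\qquad
\Upsilon_{xy}=\frac{P\phi\psi}{(1+P)^{2}},
\]
together with the symmetric $\Upsilon_{yy}$. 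Then $\Upsilon_{xx}\geq0$ is clear, and in $\Upsilon_{xx}\Upsilon_{yy}-\Upsilon_{xy}^{2}$ the common term $P^{2}\phi^{2}\psi^{2}/(1+P)^{4}$ cancels, leaving only manifestly non‑negative summands proportional to $c$ and $c^{2}$, so $\det\nabla^{2}\Upsilon\geq0$ and hence $\nabla^{2}\Upsilon\succeq0$; clearing denominators turns these two quantities into exactly the Hessian expressions displayed in the statement.

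To conclude, I would write the tangent‑plane inequality $\Upsilon(x,y)\geq\Upsilon(x^{lo},y^{lo})+\Upsilon_x(x^{lo},y^{lo})(x-x^{lo})+\Upsilon_y(x^{lo},y^{lo})(y-y^{lo})$ and substitute the gradient at $(x^{lo},y^{lo})$, namely $\Upsilon_x(x^{lo},y^{lo})=-a(c+bx^{lo})\big/\big(x^{lo}(a+(x^{lo})^{c}(y^{lo})^{c}\mathrm{e}^{b(x^{lo}+y^{lo})})\big)$ and its $y$‑counterpart; this matches \eqref{upsilon_lb} term by term, and putting $(x,y)=(x^{lo},y^{lo})$ yields equality, i.e.\ tightness. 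The one place that needs care is the Hessian route, where the cancellation exposing $\det\nabla^{2}\Upsilon\geq0$ is easy to miss unless the computation is organised around $P_x=-P(b+c/x)$; via the composition argument there is essentially no obstacle at all, provided one notices that the negative exponents $x^{-c},y^{-c}$ make $E$ convex (rather than concave), which is precisely what lets the convex, non‑decreasing outer map preserve convexity.
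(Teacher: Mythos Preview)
Your proposal is correct, and your second (Hessian) route is essentially the paper's own proof: compute $\Upsilon_{xx}$, $\Upsilon_{xy}$, show $\Upsilon_{xx}\geq 0$ and $\det\nabla^2\Upsilon\geq 0$, conclude $\nabla^2\Upsilon\succeq 0$, and invoke the first‑order condition for convex functions to obtain \eqref{upsilon_lb} with equality at the expansion point.

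Your first route, however, is genuinely different and worth highlighting. By writing $\Upsilon(x,y)=\ln\!\big(1+\mathrm{e}^{E(x,y)}\big)$ with $E(x,y)=\ln a-b(x+y)-c\ln x-c\ln y$, you reduce convexity of $\Upsilon$ to two standard facts: $E$ is convex on $x,y>0$ (affine plus $-c\ln(\cdot)$ terms with $c\geq 0$), and the softplus map $t\mapsto\ln(1+\mathrm{e}^t)$ is convex and non‑decreasing. This composition argument avoids all Hessian bookkeeping and the delicate cancellation in $\det\nabla^2\Upsilon$ that the paper's computation relies on; it also makes transparent why the lemma holds for any $c\geq 0$ and any $b\geq 0$ simultaneously. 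The paper's approach, by contrast, yields the explicit second‑order entries, which can be useful if one later needs quantitative curvature information, but for the purpose of establishing \eqref{upsilon_lb} your composition argument is shorter and more robust. One minor caveat: the substitution $a\,x^{-c}y^{-c}\mathrm{e}^{-b(x+y)}=\mathrm{e}^{E}$ tacitly assumes $a>0$; the boundary case $a=0$ gives $\Upsilon\equiv 0$ and the bound is trivial, so this is harmless but worth a one‑line remark.
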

By taking slack block variables $\x=\{x[n], \forall n\}$ and $\y=\{y_k[n] \forall k, n\}$, $\s=\{s_k[n], \forall k, n\}$, $\u=\{u[n],~\forall n\}$, and leveraging Lemmas \ref{lemma2:sum-of-ratios} and \ref{cvx_log_frac_exp}, $(\P5)$ can be roughly restated as
\begin{subequations}
\begin{align}
(\P5.1):& \stackrel{}{\underset{\q_r,\v_r, \x, \y, \u, \s}{\mathrm{maximize}}~~~\eta_r}\nonumber\\
\mathrm{s.t.}&\quad\begin{aligned}
\frac{1}{N}&\sum^N_{n=1}\left(2\gamma_{k,n}f^{lb}_5 - \gamma^2_{k,n}  g^{up}_5(\v_r[n],t[n])\right)\\
&\geq\eta_r,~\forall k\in\K \label{P51_cond1}\\
\end{aligned}\\
\hspace{-10mm}&\eqref{uirs_flight_conds},~\sum^K_{k=1}\alpha_k[n]s_k[n] \geq 1-\varepsilon,~\forall k\in\K, n\in\N \label{P51_cond2}\\
\hspace{-10mm}&\begin{aligned}\label{P51_cond3}
&\ln\left(\frac{1-s_k[n]}{C_{m,k,n}}\right) + \rho \left[\ln(x[n]) + \ln(y_k[n])\right] \\
&+\kappa(x[n] + y_k[n]) \geq 0,~\forall m\in\W, k\in\K, n\in\N\\ 
\end{aligned}\\
\hspace{-10mm}&x[n] \geq \|\q_r[n] - q_a\|,~\forall n\in\N \label{P51_cond4} \\
\hspace{-10mm}&y_k[n] \geq \|\q_r[n] - q_k\|,~\forall k\in\K, n\in\N \label{P51_cond5}\\
\hspace{-10mm}&u^2[n] + 2c_2\|v_r[n]\|^2 \geq \frac{1}{u^2[n]},~\forall n\in\N \label{P51_cond6}\\
\hspace{-10mm}&\|\q_r[n] - \q^{lo}_j[n]\| \geq D_s,~\forall n\in\N \label{P51_cond7}
\end{align}
\end{subequations}
where
\begin{align*}
    f^{lb}_5\hspace{-1mm}=\hspace{-1mm} \sqrt{A_{k,n}\Upsilon^{lb}(x[n],y_k[n];x^{lo}[n],y^{lo}_k[n],B_{k,n},\kappa,\rho)}
\end{align*}
\begin{align*}
    g^{up}_5(\v_r[n],u[n]) &= P_o\left(1+c_0\|\v_r[n]\|^2\right) \nonumber\\
    &+ c_1 \|\v_r[n]\|^3  +  P_i u[n] + P_{f,j}[n],
\end{align*}
The objective function of problem $(\P5.1)$ is now concave, since the square root of an affine function is concave and non-decreasing, therefore each term of summation is concave being in the form of concave-minus-convex, and owing to the fact that sum preserves convexity and the min function is concave and non-decreasing on every argument, thus the combination function is concave. Nonetheless, some nonconvex constraints are introduced with the above reformulation, i.e., \eqref{P51_cond6} and \eqref{P51_cond7}, leading problem $(\P5.1)$ to be non-convex.  Hence, by substituting these nonconvex constraints by their convex approximates, we can achieve the convex reformulation of $(\P5.1)$ represented by
\begin{subequations}\label{subprob_trjUIRS_cvx}
\begin{align}
(\P5.2):& \stackrel{}{\underset{\eta_r, \q_r,\v_r, \x, \y, \u, \s}{\mathrm{maximize}}~\eta_r}\\
\hspace{-10mm}\mathrm{s.t.}&\quad\eqref{P51_cond1}, \eqref{P51_cond2}, \eqref{P51_cond3}, \eqref{P51_cond4}, \eqref{P51_cond5} \label{P52_cond1}\\
&\hspace{-10mm} \begin{aligned}\label{P52_cond2}
&-(u^{lo}[n])^2 +2u[n]u^{lo}[n] + 2c_2\Big(-\|\v^{lo}_r[n]\|^2 \\
&+ 2(\v^{lo}_r[n])^\mathrm{T}\v_r[n]\Big) \geq  \frac{1}{u^2[n]},~\forall n\in\N \\
\end{aligned}\\
\hspace{-10mm}& \begin{aligned}\label{P52_cond3}
&2(\q^{lo}_r[n]-\q^{lo}_j[n])^\mathrm{T}(\q_r[n]-\q^{lo}_j[n]) \\
& - \|\q_r[n]-\q^{lo}_j[n]\|^2  \geq D^2_s,~\forall n\in\N\\
\end{aligned} 
\end{align}
\end{subequations}
where $(\q^{lo}_j, \q^{lo}_r, \v^{lo}_r, \u^{lo})$ is the given local point. Since $(\P5.2)$ is convex, it can be efficiently solved via interior-point method using CVX, providing an approximate solution to problem $(\P5)$, which is summarized in Algorithm \ref{Trj_UIRS_algo}.

\begin{figure}[!t]
 \removelatexerror
 \resizebox{\columnwidth}{!}{%
 \begin{algorithm}[H]
\SetAlgoLined 
\small
\KwResult{$\q^\star_r$, $\v^\star_r$}
\textbf{Initialize} feasible point $(\q^{(0)}_r, \v^{(0)}_r, \x^{(0)}, \y^{(0)}, \u^{(0)})$, set iteration index $t=0$,\\
define $\pmb{\gamma} \treq \{\gamma_{k,n},~\forall k, n\}$ such that
$\gamma_{k,n}\treq\frac{\sqrt{A_{k,n}\Upsilon(x^{(0)}[n],y^{(0)}_k[n];B_{k,n},\kappa,\rho)}}{g^{cvx}_5(\v^{(0)}_r[n],u^{(0)}[n])},$
and let $\eta^{(0)}_r = \underset{k\in\K}{\min}~\frac{1}{N}\sum^N_{n=1}
\frac{A_{k,n}\Upsilon(x^{(0)}[n],y^{(0)}_k[n];B_{k,n},\kappa,\rho)}{g^{cvx}_5(\v^{(0)}_r[n],u^{(0)}[n])}$\\
 \While{\textit{true}}{
   t $\gets$ t + 1\\
   Given $\left(\pmb{\gamma}, \q^{(t-1)}_r, \v^{(t-1)}_r, \x^{(t-1)}. \y^{(t-1)}, \u^{(t-1)}\right)$, solve $(\P5)$ using  \eqref{subprob_trjUIRS_cvx}\\
   Obtain $\left(\q^{(t)}_r, \v^{(t)}_r, \x^{(t)}. \y^{(t)}, \u^{(t)}, \eta^{(t)}_r\right)$ \\
   Update quadratic conversion coefficients as $\pmb{\gamma} \gets \frac{\sqrt{A_{k,n}\Upsilon^{lb}(x^{(m)}[n],y^{(m)}_k[n];x^{(m)}[n],y^{(m)}_k[n],B_{k,n},\kappa,\rho)}}{g^{cvx}_5(\v^{(m)}_r[n],u^{(m)}[n])}$ \\
   \If{$\Big\|\frac{\eta^{(t)}_r - \eta^{(t-1)}_r}{\eta^{(t-1)}_r}\Big\| \leq \epsilon_r$}{
   $(\q^\star_r , \v^\star_r) = (\q^{(t)}_r, \v^{(t)}_r)$;\\
   \textit{break};}
  
}
 \caption{\label{Trj_UIRS_algo} \small Proposed algorithm to approximately solve subproblem $(\P5)$} 
\end{algorithm}}
\end{figure}

\subsection{Overall Algorithm}
Now we are ready to put together all the subsolutions and propose a low-complex sequential algorithm (see Algorithm \ref{myalgorithm_overal}) to improve mAEE of the considered scenario subject to  the covertness requirement.
\begin{figure}[!t]
 \removelatexerror
 \resizebox{\columnwidth}{!}{%
  \begin{algorithm}[H]
  \small
  \caption{\small Overall proposed iterative algorithm for mAEE optimization}\label{myalgorithm_overal}
  1:~\textbf{Initialize}~
  a feasible point $\left(\pmb{\alpha}^{(0)},\Pow^{(0)}_a, \pmb{\Phi}^{(0)}, \Pow^{(0)}_j, \Q^{(0)}_r, \Q^{(0)}_j, \pmb{\Phi}^{(0)}\right)$, and set iteration index $i=0$;\\
  2:~\textbf{Repeat:}~ $i \gets i+1$;\\
  3:~Solve $(\P1)$ using \eqref{usrsch_subprob_cvx}, updating $\pmb{\alpha}^{(i+1)}$;\\
  4:~Given $\pmb{\alpha}^{(i+1)}$, solve $(\P2)$ using \eqref{jointpow_cvx}, updating $\Pow^{(i+1)} = \{\Pow^{(i+1)}_a, \hat{\Pow}^{(i+1)}_j\}$\\
  5:~Given $\left(\pmb{\alpha}^{(i+1)}, \Pow^{(i+1)}\right)$, solve $(\P4)$ via Algorithm \ref{Trj_UCJ_algo}, updating $\Q^{(i+1)}_j = \{\q^{(i+1)}_j, \v^{(i+1)}_j\}$\\
  6:~Given $\left(\pmb{\alpha}^{(i+1)}, \Pow^{(i+1)}, \Q^{(i+1)}_j\right)$, solve $(\P5)$ via Algorithm \ref{Trj_UIRS_algo}, updating $\Q^{(i+1)}_r = \{\q^{(i+1)}_r, \v^{(i+1)}_r\}$\\
7:~Given $\left(\Q^{(i+1)}_r, \Q^{(i+1)}_j\right)$, calculate  $\pmb{\Phi}^{(i+1)}$ using \eqref{IRSbeamforming_opt}\\
  8: Given $\left(\pmb{\alpha}^{(i+1)}, \Pow^{(i+1)}, \Q^{(i+1)}_j, \Q^{(i+1)}_r, \pmb{\Phi}^{(i+1)}\right)$, calculate the value of mAEE in \eqref{opt_prob}\\ 
  9:~\textbf{Until} fractional increase of mAEE gets below the terminating threshold $\epsilon_o$
  \end{algorithm}}
\end{figure}

\subsection{Complexity and Convergence Analysis}
It can be mathematically proved that both the inner FP optimization and outer alternating optimization of the proposed BSCA-based algorithm are guaranteed to converge with any feasible initialization. Since the feasible solution set of $\mathrm{(P)}$ is compact, its objective value is non-decreasing over iteration index, and that the optimal value of mAEE is upper bounded by a finite value. The detail is omitted for brevity, but the interested readers can refer to \cite{TatarMamaghani2021, FP2018Shen}.  Further, letting $\gamma_1$, $\gamma_4$, $\gamma_5$ be the maximum iteration number required for convergence of subproblems (\P1.1), (\P4.2), and, (\P5.2), and then the worst-case computational complexity for each sub-problem and the complexity of proposed overall algorithm are obtained in Table \ref{table:complexity}. It is worth noticing that the overall Algorithm \ref{myalgorithm_overal}'s complexity is in polynomial time order, demonstrating the efficiency of our proposed algorithm and suitability to the energy-constrained UAV-IoT scenarios.
\begin{table}[t]
\caption{Complexity analysis}
\centering
\resizebox{\columnwidth}{!}{%
\begin{tabular}{l c} 
 \hline  \hline
 Problem & Complexity \\ [0.5ex] 
 \hline 
 User Scheduling Subproblem (\P1.1) & $\O\left(\gamma_1(NK+1)(2NK+3N+K+2)^\frac{3}{2}\right)$\\
 Joint Power Allocation Subproblem (\P2.2) & $\O\left((N(K+2)+1)^2(NK(K-1)+3N+K+1)^\frac{3}{2} \right)$\\
 SDP relaxation IRS Subproblem (\P3.2.1) & $\O\left(N^4(K+L+2)^4(N(K+L))^{\frac{1}{2}} \right)$\\
  RM-SCA for IRS optimization (\P3.3) & $\O\left(\log({\mu_{max}/\varrho})(N(K+L+4)+K)^4(N(K+L+1)+1)^{\frac{1}{2}} \right)$\\
  Alternative IRS beamforming Design \eqref{IRSbeamforming_opt} & $\O\left(LN\right)$\\
   UCJ's trajectory design Subproblem (\P4.2)&$\O\left(\gamma_4(2NK+5N+1)^2(NK^2+2NK+6N+K)^\frac{3}{2}\right)$\\
   UIRS's trajectory design Subproblem (\P5.2)&$\O\left(\gamma_5(2NK+6N+1)^2(NK^2+3NK+7N+K)^\frac{3}{2}\right)$\\
     Overall Complexity of Algorithm \ref{myalgorithm_overal} & $\O\left((\gamma_4+ \gamma_5) N^\frac{7}{2}K^5 + \gamma_1(NK)^\frac{5}{2} + LN \right)$
\\[1ex] 
 \hline  \hline
\end{tabular}}
\label{table:complexity}
\end{table}

\section{Numerical results and discussion}\label{sec:numerical}
In this section, we evaluate the performance of our proposed optimization algorithm to improve the mAEE performance metric for the considered UIRS-assisted THz covert communication  system with UCJ's cooperative jamming. To demonstrate the effectiveness of our design, we compare it with some benchmarks listed below.

\begin{itemize}
      \item  \textbf{Proposed JTCD}: Proposed joint trajectory and communication design for the mAEE improvement according to Algorithm \ref{myalgorithm_overal}.
    
     \item \textbf{Benchmark I - CD}: With fixed UAVs' trajectory and velocity block variables, only communication design is taken into account including user scheduling and transmit powers optimization $(\mathbf{P}^\star_a, \mathbf{P}^\star_j, \pmb{\alpha}^\star)$.
     
     \item \textbf{Benchmark II - TD}: Keeping the communication resources fixed,  joint trajectory and velocity design of both UAVs are considered  $(\mathbf{Q}^\star_r=\{\q^\star_r, \v^\star_r\},  \mathbf{Q}^\star_j=\{\q^\star_j, \v^\star_j\})$.
      
     \item \textbf{Benchmark III - IFTR}: Non-optimal initial feasible trajectory and resource allocations.
\end{itemize}

Unless otherwise stated, all simulation parameters are set as given follows: $v^{max}_r=v^{max}_j=25$ m.s\textsuperscript{-1}, $a^{max}_r=v^{max}_j=6$ m.s\textsuperscript{-2}, \textcolor{black}{$\rho = 2.3$}, $f_c = 0.3$ THz, $\kappa = 3.2094\times10^{-4}$ using \eqref{kappa_eq}, $p^{max}_a = p^{max}_j = 1$ W, $p_{tot} = 40$ W, $\sigma^2_k = -180,~\forall k$, $\q^a =[$0$, $0$, $0$]$, $\q^I_r =[$100$, $0$, $50$]$, $\q^I_j =[$80$ , $0$, $50$]$, $P_o = 79.856$ W and $P_i = 88.63$  \cite{Zeng2019b}, $K=5$ UEs are uniformly distributed inside a disk centered at $\q_a$ with inner and outer radii $R_i=100$ m and \textcolor{black}{$R_o=200$ m}, respectively, $T = 30$ s, $\delta_t = 0.1$ s, $(\epsilon_o, \epsilon_r, \epsilon_j) = (10^{-1}, 10^{-3}, 10^{-3})$. $L_x=5$, $L_y = 6$, $\delta_x = \delta_y = 1$ mm, 
$(c_0, c_1, c_2) = (2.0833\times 10^{-4}, 0.0092,  0.0308)$ \cite{long2020reflections}.

The initial feasible trajectory and velocity of UAVs are obtained as in \cite{mamaghani2021terahertz}. UEs are randomly selected such that user selection constraint $\C1$ is satisfied, and $p_a[n] = \frac{p_{tot}}{4T}, \hat{p}_j[n] = \frac{p_{tot}}{2T}, \forall n$ are set fixed  values such that constraints $\C3$ and $\C5$ are met initially. Note that one can readily obtain a feasible point of $\mathbf{P}$ by setting AP's transmit power to zero (or some non-zero but very small value to avoid numerical problems) and maximum AN's transmit powers to the average of the network power budget. 

\begin{figure*}[t!]
 \centering
     \begin{subfigure}[t]{\columnwidth}
        \centering
        \includegraphics[width=\textwidth]{./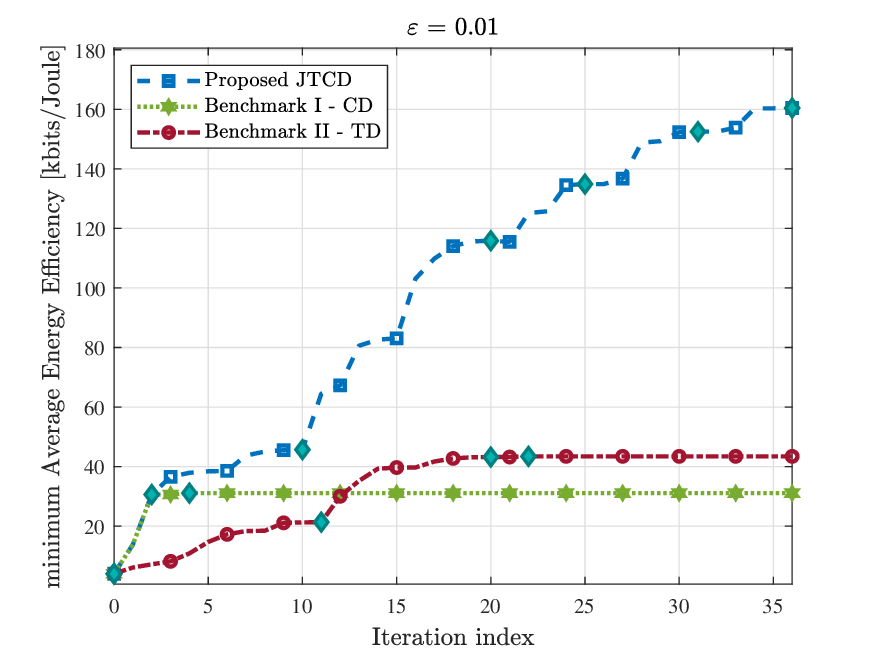}
    \end{subfigure}
    ~ 
 \begin{subfigure}[t]{\columnwidth}
        \centering
        \includegraphics[width= \textwidth]{./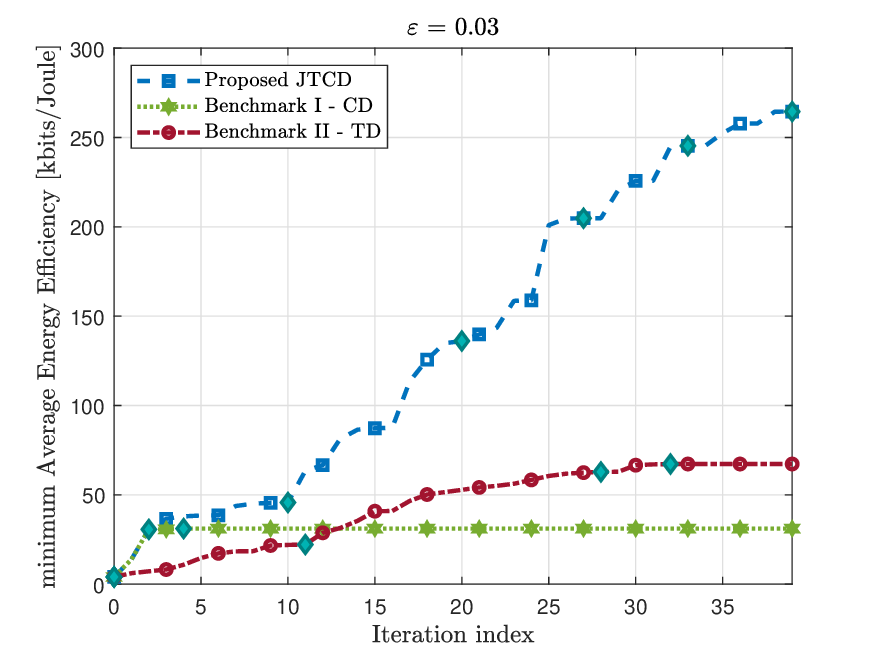}
\end{subfigure}
     \caption{ \label{sim:figs12} mAEE vs Iteration index.} 
\end{figure*}
In Fig. \ref{sim:figs12}, we plot mAEE versus iteration indices for different levels of covertness to demonstrate the convergence of the proposed iterative algorithm and verify the correctness of the analysis, i.e., the objective value of the optimization problem is non-decreasing over the iteration index. Note that the objective value obtained following the outer iterations are marked as blue diamonds on the curves. 
Although both the CD and TD schemes have fast convergence, the superiority of the JTCD  compared to the other benchmarks is crystal clear; specifically, the proposed JTCD reaches the mAEE performance approximately quadruple as high as the CD counterpart at the end of the overall $6$ iterations.  We also observe from Fig. \ref{sim:figs12} that as $\varepsilon$ increases, i.e.,  the covertness constraint gets loosened, the adopted trajectory and resource allocations can be adjusted appropriately to acquire significantly higher mAEE for all schemes.



  \begin{figure}
  \centering
  \begin{subfigure}{0.45\columnwidth}
  \includegraphics[width=\textwidth]{./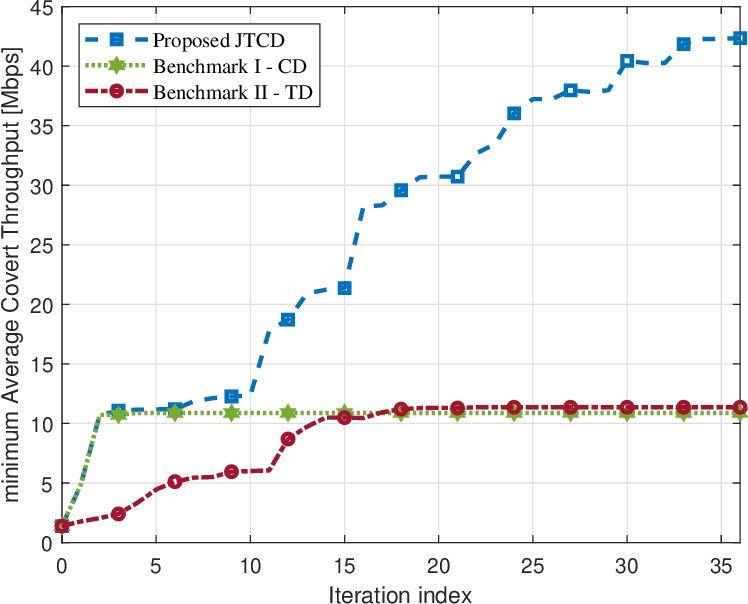}
  \end{subfigure}
  \hfill
  \begin{subfigure}{0.45\columnwidth}
  \includegraphics[width=\textwidth]{./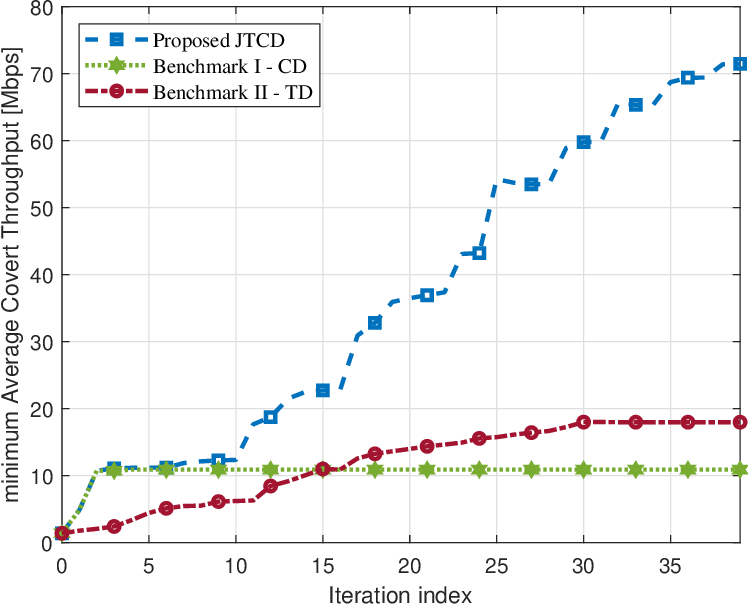}
  \end{subfigure} 
  \begin{subfigure}{0.45\columnwidth} 
  \includegraphics[width=\textwidth]{./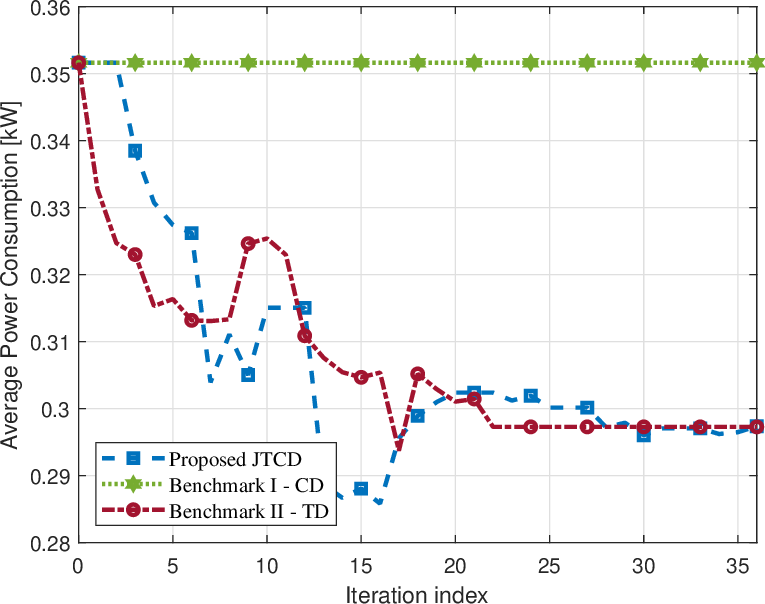} 
  \caption{$\varepsilon = 0.01$.} 
  \end{subfigure}  
  \hfill 
  \begin{subfigure}{0.45\columnwidth} 
  \includegraphics[width=\textwidth]{./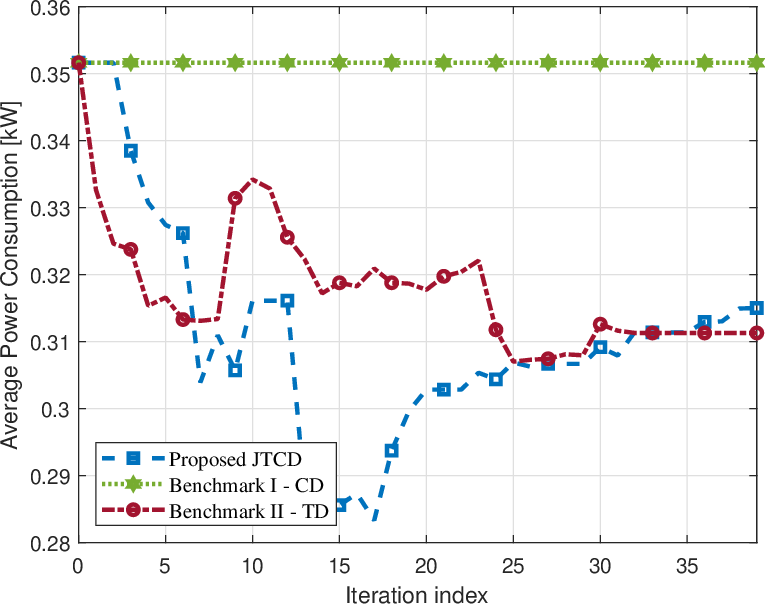} 
  \caption{ $\varepsilon = 0.03$.} 
  \end{subfigure}
     \caption{\label{sim:figs34} \textcolor{black}{mACT and APC vs Iteration index.}} 
  \end{figure}

\textcolor{black}{Fig. \ref{sim:figs34} illustrates how the minimum average covert throughput (mACT) and the average power consumption (APC) arising from the UAVs propulsion vary over iteration numbers with different covertness constraints, i.e., $\varepsilon = \{0.01, 0.03\}$. Depending on the level of required covertness, we observe that the TD scheme can achieve comparably higher mACT performance than that of CD. Specifically, TD can achieve $11.36$ Mbps with the covertness requirement of $\varepsilon = 0.01$ and $17.96$ Mbps with $\varepsilon = 0.03$, while CD obtains relatively similar mACT performance, i.e., $10.88$ Mbps, regardless of the level of covertness, according to our setup. This implies the significance of the trajectory optimization and performance improvement brought by the flexible 3D network design. Further, however, the covertness level increases (or equivalently, the value of $\varepsilon$ decreases), the mACT decreases due to stricter constraints. Hence, we can observe from the figure that the benefits of the collaborative design as shown in JTCD curves are well pronounced compared to the benchmark schemes. Last but not least, the APC follows, as expected, an overall decreasing, but not monotonically, trend over the iterations for both $\varepsilon = 0.01$ and $\varepsilon = 0.03$ in order to improve the mAEE.}




\begin{figure}[t]
\centering
\includegraphics[width=1.2 \columnwidth]{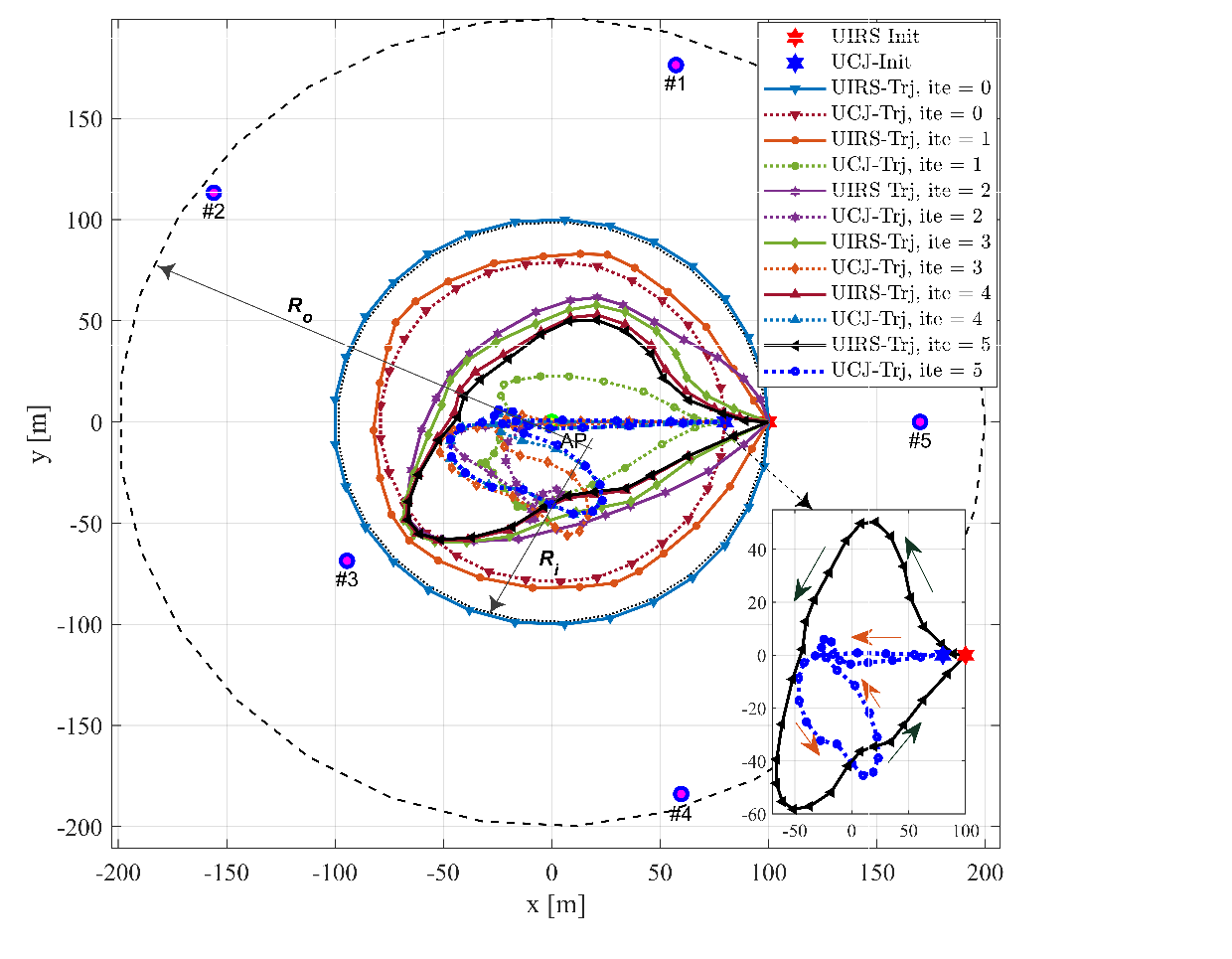}
\caption{\small \textcolor{black}{UIRS and UCJ's optimized trajectories according to the Proposed JTCD scheme.}}
\label{sim:fig7}
\end{figure}

\begin{figure}[t]
\centering
\includegraphics[width=1.2\columnwidth]{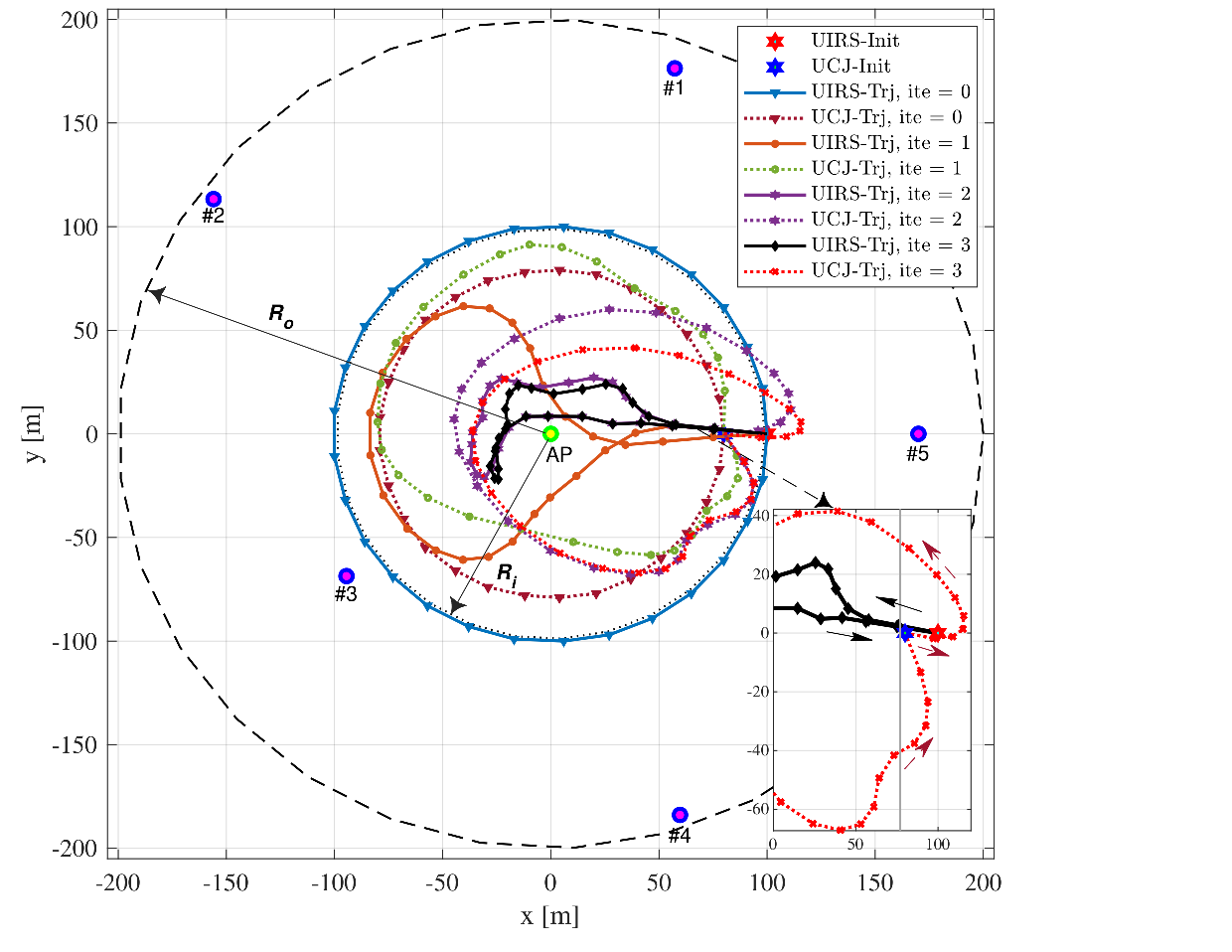}
\caption{\textcolor{black}{\small UIRS and UCJ's optimized trajectories according to Benchmark II.}}
\label{sim:fig8}
\end{figure}

\begin{figure}[t!]
 \centering
    \begin{subfigure}[t]{\columnwidth}
        \centering
        \includegraphics[width= \columnwidth]{./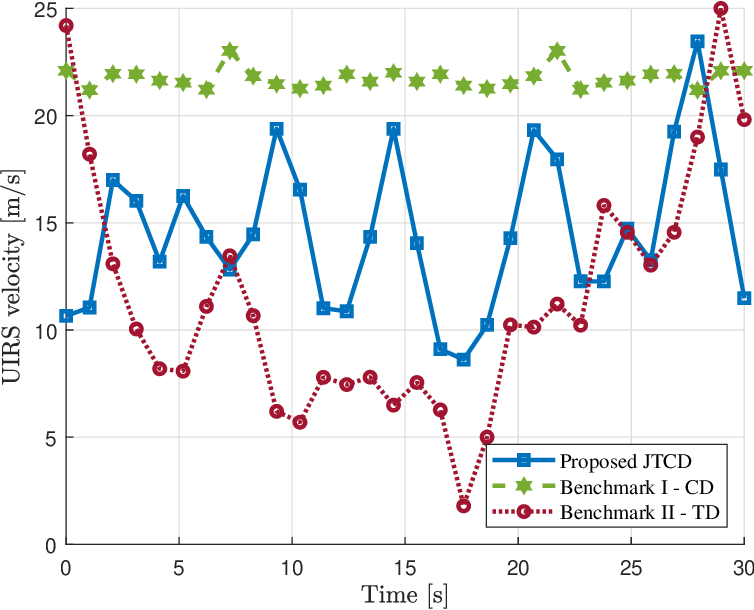}
        \caption{UIRS  optimized velocity.}
        \label{sim:fig9}
    \end{subfigure} 
    ~ 
 \begin{subfigure}[t]{\columnwidth}
        \centering
        \includegraphics[width= \columnwidth]{./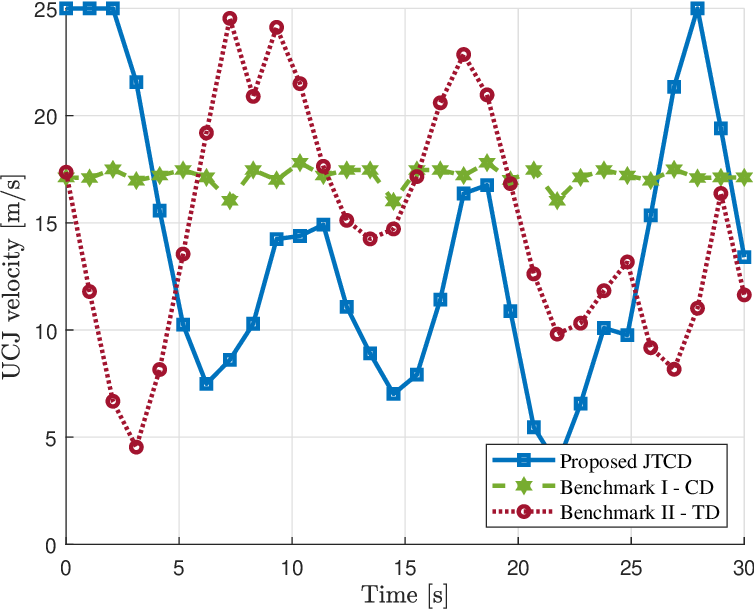}
        \caption{UCJ optimized velocity.}
        \label{sim:fig10}
\end{subfigure}
 \caption{UIRS and UCJ's velocity according to different schemes with $T=30$s and $\varepsilon = 0.01$.} 
\end{figure}

In Figs. \ref{sim:fig7} and  \ref{sim:fig8}, we plot the 2D view of optimized trajectories of UIRS and UCJ using JTCD and TD schemes, respectively. Initial feasible trajectories are labeled as $ite = 0$, which also belong to the CD scheme. The corresponding UIRS and UCJ's velocity are shown in Figs. \ref{sim:fig9} and \ref{sim:fig10}. We observe that over the iteration index, the path planning properly shapes with a relatively complicated form than the initial circular one to improve the system's mAEE performance. The trajectory variation between two consecutive iterations gets negligible as the algorithm approaches convergence.

Notice that UIRS should adjust its path flying towards the best location to obtain AP's data with low transmit power and then beamform towards the scheduled UE Bob at the given time slot while satisfying covertness requirement in terms of transmission detection failure of the strongest unscheduled user Willie. Therefore, for energy-efficient covert data transmission purposes, the UIRS's trajectory initially shrinks relatively towards Bob with the shortest distance, which can be verified in Fig. \ref{sim:fig13}; nonetheless,  adaptively gets adjusted when the UCJ's trajectory is updated.  Further, in Fig. \ref{sim:fig8}, the UCJ's designed path based on TD differs from the one obtained using the JTCD scheme due to a different set of the scheduled UEs over the time slots. However, in both scenarios, UCJ attempts to find a path with the highest detrimental effect on the Willies' detection rate while compensating such effect at Bob for the sake of mAEE improvement. Additionally, we can observe in Figs. \ref{sim:fig9} and \ref{sim:fig10} that maintaining approximately fixed velocity or hovering at some particular locations for a while are not appropriate for energy-efficient design, which is fundamentally different from conventional designs focusing on solely covert data rate enhancement, e.g., \cite{Zhou2021d}. Indeed, both UAVs must follow specific velocity adjustment patterns to reduce the mechanical power consumption while improving the covert throughput.
\begin{figure}[t]
\centering
\includegraphics[width= \columnwidth]{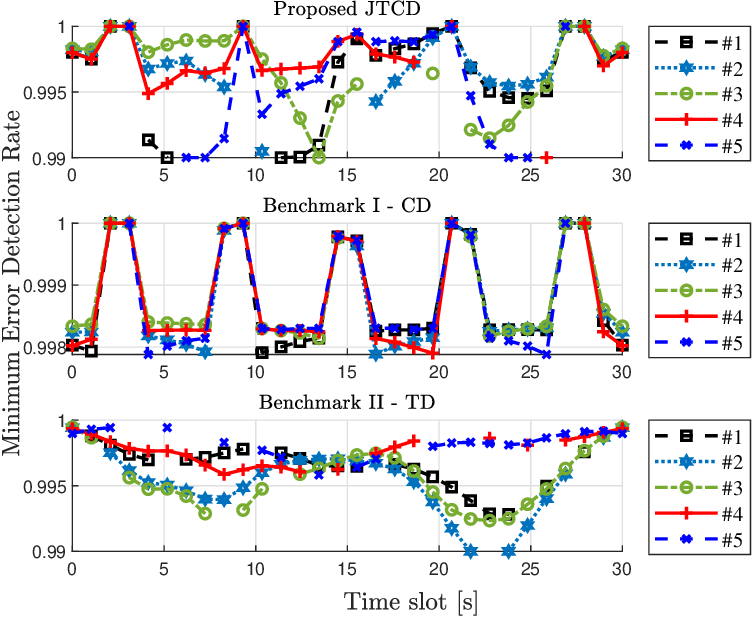}
\caption{\small  $\zeta^{\star}_{m,k}$ and selected UEs according to different algorithms.}
\label{sim:fig13}
\end{figure}
Fig. \ref{sim:fig13} depicts the minimum detection error rate of each UE against the time slot for different algorithms. The absence of some curves at a particular time slot, e.g., UE $\#1$ curve between time slot $T=6$s and $T=11$s inclusive for the JTCD scheme, indicates the scheduled UE Bob. We observe that the minimum detection error rates stratify the covertness requirement for all scenarios, i.e., $\zeta^\star_{m,k} \geq 1 - \varepsilon, \forall m, k$ with $\varepsilon = 0.01$ and the strongest detector Willie can achieve a minimum total detection error rate of no less than the required covertness. Further, with the fixed circular trajectory, we can see that mainly the closest users are scheduled for covert communication according to the CD scheme. However, when it comes to the JTCD scenario, based upon the level of AN transmission and the locations of UIRS and UCJ, a different set of UEs can be selected for energy-efficient covert transmission.

\begin{figure}[t]
\centering
\includegraphics[width= \columnwidth]{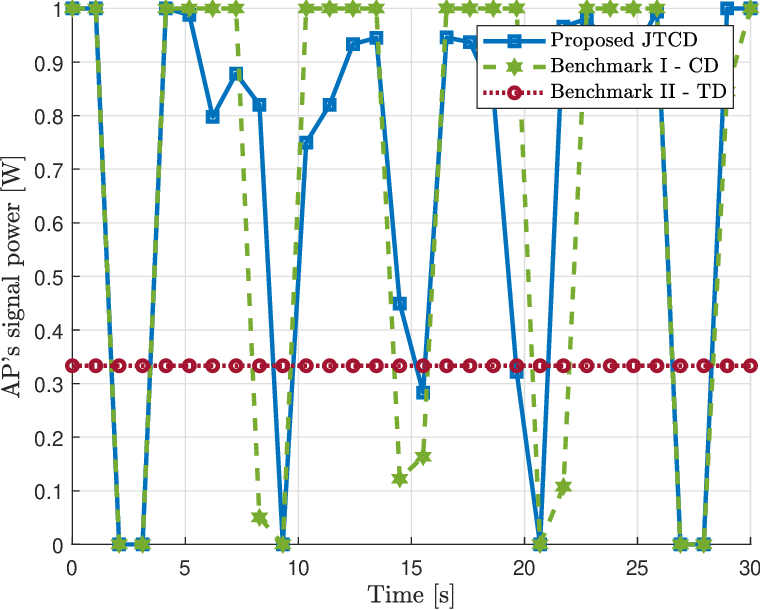}
\caption{\small AP's signal transmission power vs. time.}
\label{sim:fig11}
\end{figure}

\begin{figure}[t]
\centering
\includegraphics[width= \columnwidth]{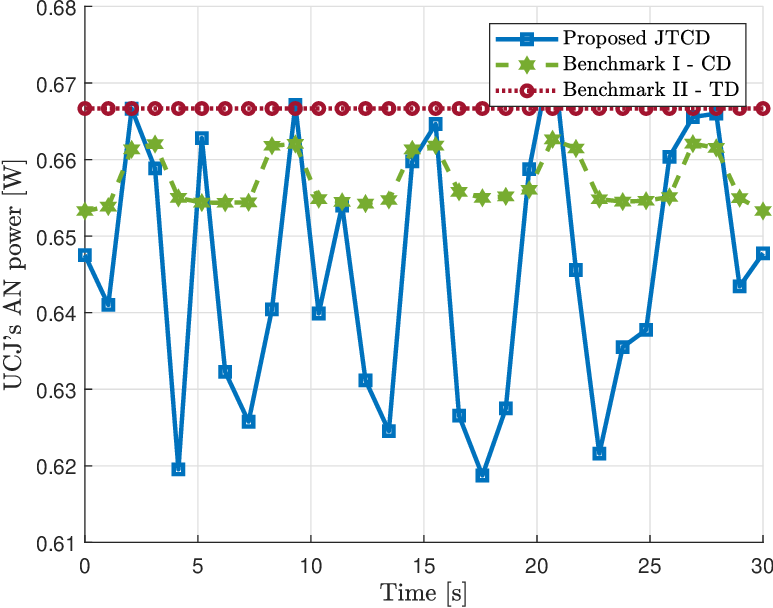}
\caption{\small  Maximum UCJ's AN transmission power vs. time.}
\label{sim:fig12}
\end{figure}
In Figs. \ref{sim:fig11} and \ref{sim:fig12}, we examine the power allocation amongst AP's signal transmission and UCJ's AN transmission vs. time slot. Note that mAEE is an increasing function of AP's signal transmission power, while decreasing as AN transmission power increases. However, by proper resource and trajectory design, the level of maximum AN can be decreased without violating the covertness requirement. As an example, consider power allocation obtained according to Benchmark I with fixed circular UAVs' trajectories. Initially, UE $\#5$ is scheduled as Bob due to closeness to the UIRS according to Fig. \ref{sim:fig13} and so AP transmits signals with its maximum instantaneous power; however, as the UIRS gets farther from Bob or closer to Willie with the strongest detector, i.e., UE $\#1$, AP's power drops while maximum AN transmission gets increased to satisfy the covertness requirement.

In the experiment depicted in Fig. \ref{sim:fig14}, we compare how the mAEE performance varies with the increase of  IRS elements for different scenarios. First, we note that the more the IRS elements, the higher the mAEE performance, owing to the fact that mAEE is an increasing function of $L$. However, by virtue of our worst-case design, an mAEE ceiling occurs at large $L$ due to the covertness constraint. Plus, such an increasing trend can be justified based on the fact that the IRS consists of a passive reflecting structure and does not cost any specific energy whilst improving the covert transmission rate. We see that the proposed JTCD scheme, Benchmark I - CD, and Benchmark II - TD, can increase mAEE by approximately $22$, $9$, and $5$ times,  compared to the initial feasible point (Benchmark III - IFTR).  When solely comparing Benchmark I and II schemes, Benchmark I - CD is the winner. This can be justified that although the trajectory design improves mAEE even with less number of IRS elements, resource allocation with a larger number of IRS elements dominates the final mAEE. Overall, we see that our proposed JTCD scheme outperforms other counterparts significantly for practical numbers of IRS elements. 
\begin{figure}[t]
\centering
\includegraphics[width= 1.1\columnwidth]{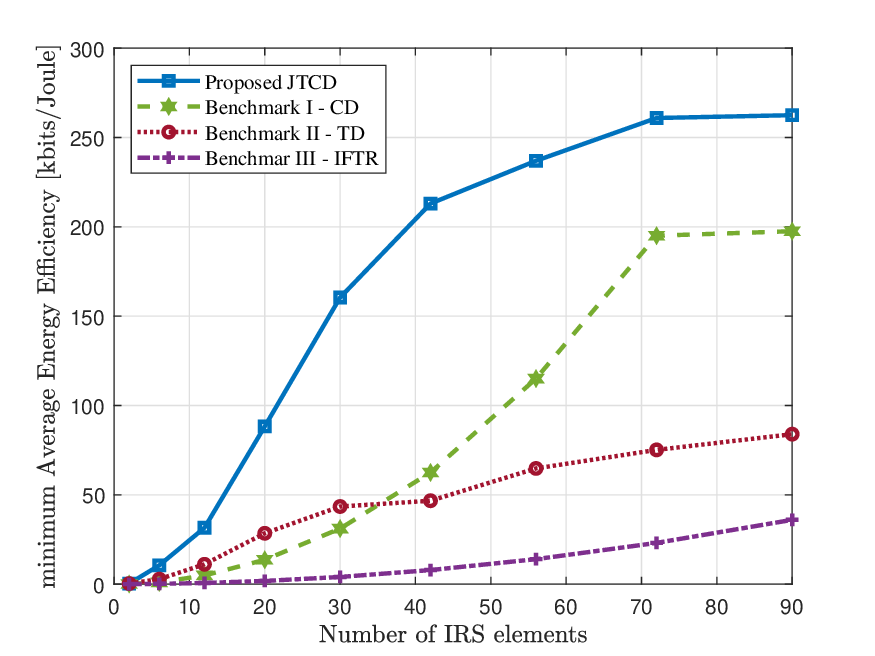}
\caption{\small  Effect of number of IRS elements $L = L_x \times L_y$.}
\label{sim:fig14}
\end{figure}

\begin{figure}[t]
\centering
\includegraphics[width=\columnwidth]{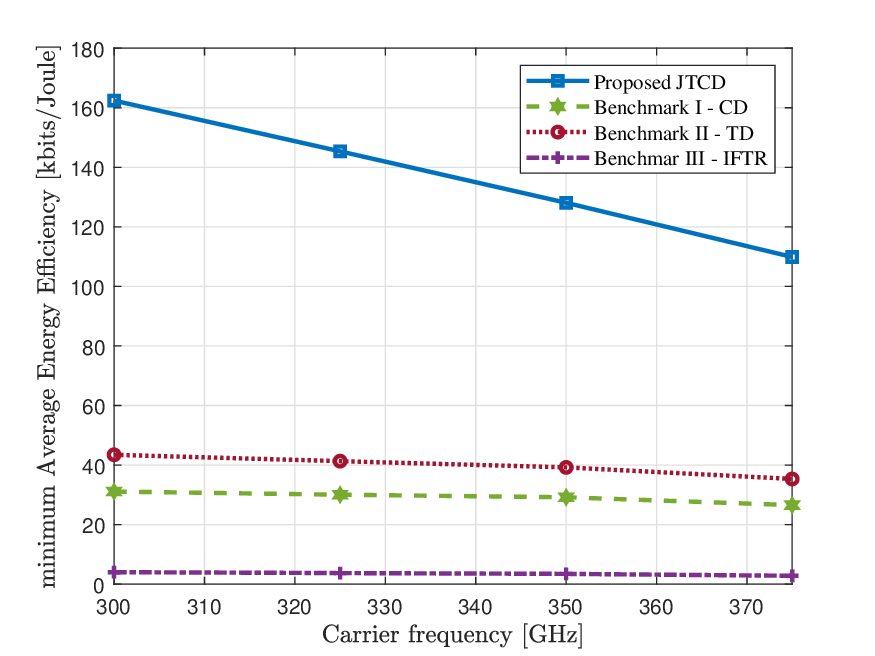}
\caption{\small  Effect of carrier frequency on mAEE.}
\label{sim:fig15}
\end{figure}

Fig. \ref{sim:fig15} illustrates mAEE performance vs. carrier frequency, which also impacts the molecular absorption coefficient of THz links. As carrier frequency increases, more propagation loss and higher molecular absorption occur,  bringing two impacts: one is on the reduced achievable covert rate and degraded mAEE performance, the other is on the diminished strength of the received signal at not only Bob, but also Willies. The former degrades the overall mAEE performance while the latter loosens the covertness constraint.

\section{Conclusions}\label{sec:conclusion}
This article addressed the energy-efficient design of a multi-UAV wireless communication system for covert data dissemination in the B5G UAV-IoT network operating at THz bands. Specifically, the UIRS is used for the passive covert data relaying from the AP to the scheduled UE, and the UCJ is employed for efficiently AN injection, degrading unscheduled UEs' detection performance. We showed analytically that properly employing a THz-operated UIRS-UCJ system can significantly improve communications reliability, capacity, coverage, and covertness, thanks to the dynamic nature of such a system. Then, the transmission detection performance from the perspective of unscheduled UEs as potential adversaries was explored, and the analytical closed-form expressions were derived to evaluate covertness. Further, to improve the overall system performance, we developed a low-complex algorithm to iteratively solve a sequence of convex optimization problems for maximizing the mAEE performance subject to the covertness requirement via the joint design of the user scheduling, network transmission power, IRS beamforming, and UAVs' trajectory planning.  Our examination revealed significant outperformance of our proposed JTCD scheme compared to the other schemes in terms of both mAEE and mACT. Future work may focus on practical THz channel modeling and imperfect IRS phase shift and amplitude design for such a UAV-IoT system.


\appendices
\numberwithin{equation}{section}
\makeatletter 
\newcommand{\section@cntformat}{Appendix \thesection:\ }
\makeatother

\section{Joint design of network power allocation and user scheduling \ref{lemma1:rank1refomulaiton}}\label{Appendix A}

\renewcommand{\w}{\ensuremath{\mathbf{w}}}
\renewcommand{\s}{\ensuremath{\mathbf{v}}}
\renewcommand{\r}{\ensuremath{\mathbf{r}}}
\renewcommand{\C}{\ensuremath{\mathrm{C}}}
Here, we try to jointly optimize network transmission powers and user scheduling variables, i.e., $(\Pow_a, \hat{\Pow}_j, \pmb{\alpha})$. Therefore, by taking slack variables $\w=\{w_k[n], \forall n\in\N, k\in\K\}$, $\s=\{s_k[n], \forall n\in\N, k\in\K\}$ , $\r=\{r[n], \forall n\in\N\}$, we can represent $(\P)$ equivalently as
\begin{subequations}
\begin{align}
(\widetilde{\P3}):& \stackrel{}{\underset{\psi, \pmb{\alpha}, \Pow_a, \hat{\Pow}_j, \w, \s, \r}{\mathrm{maximize}}~\psi } \nonumber\\
\mathrm{s.t.}&\quad\frac{1}{N}\sum^{N}_{n=1}A_n\alpha_k[n]w_k[n] \geq \psi,~\forall k\in\K\label{ph3_cond1}\\
&\hspace{-10mm}\ln\left(1+\frac{B_{k,n} p_a[n]}{C_{k,n} \hat{p}_j[n] + 1}\right) \geq w_k[n],~\forall k\in\K, n\in\N \label{ph3_cond2}\\ 
&\hspace{-10mm} \sum^{K}_{k=1} \alpha_k[n]s_k[n] \geq 1 - \varepsilon,~\forall n\in\N\label{ph3_cond3} \\
&\hspace{-10mm} 1- s_k[n] \hspace{-1mm}\geq\hspace{-1mm} \underset{m \in \W}{\max}\left\{\hspace{-1mm}D_{n,k,m} \frac{p^2_a[n]}{r[n]}\hspace{-1mm}\right\},~\forall k\in\K, n\in\N \label{ph3_cond4}\\
&\hspace{-10mm} r[n] \leq p_a[n] \hat{p}_j[n],~\forall n\in\N \label{ph3_cond5}\\
&\hspace{-10mm}\C4~\&~\C6 \label{ph3_cond6}
\end{align}
\end{subequations}
where $B_{k,n}$, $C_{k,n}$, $D_{n,k,m}$ are defined in $(\P2)$. Further, we have
\[A_n = \frac{\mathrm{W}}{\ln(2)(P_{f,r}[n] + P_{f,j}[n])},~\forall n\in\N\]
We note that quadratic-over-linear is a convex function \cite{Boyd2004} and max of some convex functions is convex too, thus constraint \eqref{ph3_cond4} is convex. Nonetheless, problem $(\widetilde{\P3})$ is not convex due to nonconvex constraints \eqref{ph3_cond1}, \eqref{ph3_cond2}, \eqref{ph3_cond3}, \eqref{ph3_cond5}. To this end, we first mention lemma below to handle such challenging problem.
\begin{lemma}\label{xy}
Define the bi-variate function $f(x,y) = xy$ with $x\neq y$ which is a log-concave function. The global lower-bound and upper-bound of which at the given point $(x^{lo}, y^{lo})$ can be obtained as
\begin{align}\label{xy_lb}
    f(x,y) &\geq \frac{1}{4}\left( -(x^{lo}+ y^{lo})^2 + 2(x^{lo}+ y^{lo})(x+y) - (x-y)^2\right)\nonumber\\
    &\treq f^{lb}_{ccv}(x,y,x^{lo},y^{lo}),
\end{align}
\begin{align}\label{xy_ub}
    f(x,y) &\leq  \frac{1}{4}\left((x+y)^2  +(x^{lo} - y^{lo})^2 - 2(x^{lo}- y^{lo})(x-y)\right)\nonumber\\
    &\treq f^{ub}_{cvx}(x,y,x^{lo},y^{lo}),
\end{align}
\end{lemma}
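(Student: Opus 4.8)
The plan is to reduce the whole claim to a single polarization identity,
\[
xy \;=\; \tfrac14\big((x+y)^2-(x-y)^2\big),
\]
which rewrites the indefinite bilinear map $f(x,y)=xy$ as a difference of two convex quadratics. Each of the two bounds in \eqref{xy_lb}--\eqref{xy_ub} will then be produced by linearising \emph{exactly one} of these two squared terms at the reference point $(x^{lo},y^{lo})$ and leaving the other term intact; convexity of $u\mapsto u^2$ supplies a supporting hyperplane in each case.

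For the lower bound $f^{lb}_{ccv}$, I would use that the first-order Taylor expansion of the convex map $u\mapsto u^2$ is a global underestimator, so $(x+y)^2 \ge (x^{lo}+y^{lo})^2 + 2(x^{lo}+y^{lo})\big[(x+y)-(x^{lo}+y^{lo})\big] = 2(x^{lo}+y^{lo})(x+y) - (x^{lo}+y^{lo})^2$. Substituting this underestimator for $(x+y)^2$ in the identity while keeping $-(x-y)^2$ unchanged, and dividing by $4$, reproduces $f^{lb}_{ccv}$ after a routine regrouping; the surviving term $-(x-y)^2$ is concave, which is exactly why the resulting bound is concave (subscript ``ccv'').

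For the upper bound $f^{ub}_{cvx}$, I would instead linearise the \emph{other} square: convexity gives $(x-y)^2 \ge 2(x^{lo}-y^{lo})(x-y) - (x^{lo}-y^{lo})^2$, i.e.\ $-(x-y)^2 \le (x^{lo}-y^{lo})^2 - 2(x^{lo}-y^{lo})(x-y)$. Substituting this into the identity, keeping $(x+y)^2$ intact, and dividing by $4$ yields $f^{ub}_{cvx}$; since only the convex term $(x+y)^2$ survives alongside an affine remainder, this bound is convex (subscript ``cvx''). Tightness at the anchor is then immediate: evaluating either right-hand side at $(x,y)=(x^{lo},y^{lo})$ collapses the linearised part to its exact value and returns $x^{lo}y^{lo}=f(x^{lo},y^{lo})$.

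I do not expect a genuine obstacle here; the only care needed is the bookkeeping that brings the two substituted expressions into the displayed verbatim forms, and the remark that these inequalities in fact hold for \emph{all} real $x,y$. In particular the ``$x\neq y$'' hypothesis and the log-concavity observation in the statement (valid since $\log(xy)=\log x+\log y$ is concave on the positive orthant) are incidental to the derivation — they only serve to motivate how the lemma is later invoked in $(\widetilde{\P3})$ to handle the products $\alpha_k[n]w_k[n]$, $\alpha_k[n]s_k[n]$, and $p_a[n]\hat p_j[n]$.
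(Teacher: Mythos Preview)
Your proposal is correct and follows essentially the same approach as the paper: the paper's proof also invokes the difference-of-quadratics identity $xy=\tfrac14[(x+y)^2-(x-y)^2]$ and then applies the first-order (supporting-hyperplane) approximation to one of the two squared terms to obtain each bound. Your write-up is in fact more explicit than the paper's brief sketch, including the tightness check at $(x^{lo},y^{lo})$ and the observation that the $x\neq y$ and log-concavity remarks are not actually used in the derivation.
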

\begin{proof}
Using the difference of quadratic reformulation, i.e., $xy = \frac{1}{4}\left[(x+y)^2 - (x-y)^2\right]$ and properly employing the first order condition law \cite{Boyd2004} via applying restrictive approximation to the first or second quadratic terms, one can reach the aforementioned bounds.  
\end{proof}

Using Lemma \ref{xy} and the way we tackled binary use scheduling in $(\P1.1)$, we can reformulate $(\widetilde{\P3})$ as a convex optimization problem given by
\begin{subequations}
\begin{align}
(\widetilde{\P3.1}):& \stackrel{}{\underset{\psi, \pmb{\alpha}, \Pow_a, \hat{\Pow}_j, \w, \s, \r, \eta}{\mathrm{maximize}}~\psi - \eta \mu } \nonumber\\
\hspace{-15mm}\mathrm{s.t.}&~\begin{aligned}\label{ph31_cond1}
&\frac{1}{N}\sum^{N}_{n=1}A_nf^{lb}_{ccv}(\alpha_k[n],w_k[n],\alpha^{lo}_k[n],w^{lo}_k[n])\\
&\geq \psi,~\forall k\in\K\\
\end{aligned}\\
&\hspace{-15mm} \begin{aligned}\label{ph31_cond2}
&\ln\left(1+B_n p_a[n] + C_n \hat{p}_j[n]\right)-f_{lb}(\hat{p}_j[n],\hat{p}^{lo}_j[n]) \\
&\geq w_k[n],~\forall k\in\K, n\in\N \\ 
\end{aligned}\\
&\hspace{-15mm} \sum^{K}_{k=1}
f^{lb}_{ccv}(\alpha_k[n],s_k[n],\alpha^{lo}_k[n],s^{lo}_k[n]) \hspace{-1mm} \geq \hspace{-1mm} 1 - \varepsilon,~\forall n\in\N\label{ph31_cond3} \\
&\hspace{-15mm}f^{lb}_{ccv}(p_a[n],\hat{p}_j[n],p^{lo}_a[n],\hat{p}^{lo}_j[n]) \geq r[n],~\forall n\in\N\\
&\hspace{-15mm}\eqref{ph3_cond4}~\&~\eqref{ph3_cond6}
\end{align}
\end{subequations}
where $\alpha^{lo}_k[n], w^{lo}_k[n], s^{lo}_k[n], \forall k, n$ are the given local points. Note that too many approximations are used to obtain the convex problem of $(\widetilde{\P3.1})$.

\section{Proof of Lemma \ref{lemma1:rank1refomulaiton}}\label{Appendix B}
\renewcommand{\A}{\ensuremath{\mathbf{A}}}
\renewcommand{\R}{\ensuremath{\mathbf{R}}}
\renewcommand{\S}{\ensuremath{\mathbf{S}}}
\renewcommand{\B}{\ensuremath{\mathbf{B}}}
\renewcommand{\v}{\ensuremath{\mathbf{v}}}
\renewcommand{\T}{\ensuremath{\mathbf{T}}}

Given $\A$ be a complex square PSD matrix, we want to prove that $\A$ is Hermitian with nonnegative real eigenvalues. To this end, we first prove that every complex matrix can be written $\A$ can be written as $\A=\R+j\S$ where $\R$ and $\S$ are two Hermitian matrices. We can commence by rewriting $\A$ as
\begin{align}
    \A = \frac{1}{2}(\A+\A^\dagger)+\frac{1}{2}(\A-\A^\dagger), 
\end{align}
The first term of RHS, i.e. $\R = \frac{1}{2}(\A+\A^\dagger)$ is Hermitian, since $\R^\dagger=\R$ while the second term $\T = \frac{1}{2}(\A-\A^\dagger)$ is skew-Hermitian, i.e., $\T^\dagger=-\T$. Notice that if $\T$ is skew-Hermitian, then $\S=-j\T$ must be Hermitian, since $\S^\dagger=(-j\T)^\dagger=-j\T=\S$. Therefore $\A=\R+j\S$ where $\R$ and $\S$ are both Hermitian. This representation is indeed unique. 
We use contradiction approach to prove the uniqueness of this decomposition.  Suppose that $\A$ can also be written as $\A=\R^\dagger+j\S^\dagger$ with $\R^\dagger\neq \R$ and $\S^\dagger\neq \S$ being two Hermitian matrices. By subtracting the two forms of representations from each other we obtain $\R-\R^\dagger = -j (\S-\S^\dagger)$. Since $\R, \R^\dagger, \S, \S^\dagger$ are assumed to be Hermitian matrices, thus, $\R-\R^\dagger$ and $\S-\S^\dagger$ must also be Hermitian. However, here we have Hermitian matrix $\R-\R^\dagger$ being equal to a skew-Hermitian matrix $-j (\S-\S^\dagger)$ which is possible only if only possible if  $\R-\R^\dagger= -j(\S-\S^\dagger)=0$ implying that $\R-\R^\dagger$ and $\S=\S^\dagger$. So, the Hermitian decomposition of a complex square matrix $\A=\R+j\S$ is unique. 
Assuming that $\A\in\mathbb{S}^+$ then by the definition $\v^\dagger \A \v \geq 0$ for an arbitrary vector $\v$. Letting $\v$ be the normalized eigenvector of matrix $\S$ with corresponding eigenvalue $\lambda_v$, then $\v^\dagger \A \v  = \v^\dagger \B \v  + j \lambda_v$. The LHS must be a non-negative real value according to the definition of PSD, thus, this is only possible if $\lambda_v=0$ resulting the fact that $\A=\B$ implying that $\A$ is Hermitian. Thus, any complex square PSD matrix is Hermitian with non-negative real eigenvalues. This completes the proof.

\bibliographystyle{IEEEtran}
\bibliography{library}

\begin{IEEEbiography}[{\includegraphics[width=1in,height=1.25in,clip,keepaspectratio]{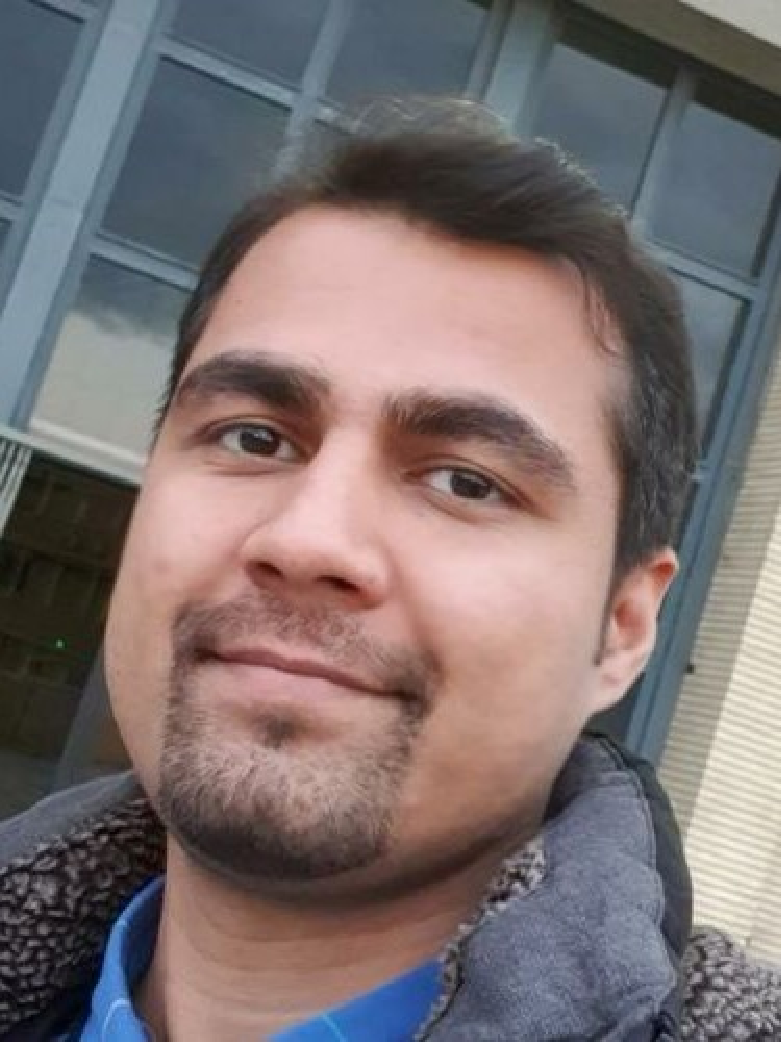}}]{Milad Tatar Mamaghani}(GS'20) was born in Tabriz, Iran, on May 12, 1994. He earned dual B.Sc. (Hons) degrees in electrical engineering fields - Telecommunications and Control - from the Amirkabir University of Technology, Tehran, Iran, in 2016 and 2018, respectively. He is currently pursuing the Ph.D. degree with the Department of Electrical and Computer Systems Engineering, Monash University, Melbourne, Australia. He is the author of several papers published in prestigious journals, and has served as a volunteer reviewer of various reputable publication venues such as TWC, TIFS, TVT, TCOM, TCCN, TMC, ISJ, Access, WCL, etc. His research interests mainly focus on beyond 5G wireless communications and networking, physical-layer security, UAV communications, optimization, and artificial intelligence. He is a member of the IEEE Communications Society and the IEEE Signal Processing Society.
\end{IEEEbiography}

 \vskip 0pt plus -1fil

\begin{IEEEbiography}[{\includegraphics[width=1in,height=1.25in,clip,keepaspectratio]{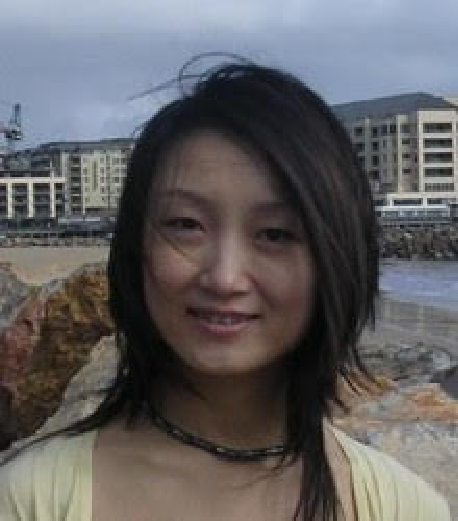}}]%
{Yi Hong}(S'00--M'05--SM'10)
is currently an Associate Professor at the Department of Electrical and Computer Systems Eng.,
Monash University, Melbourne, Australia.
She obtained her Ph.D. degree in Electrical Engineering and Telecommunications 
from the University of New South Wales (UNSW), Sydney, and received   
the {\em NICTA-ACoRN Earlier Career Researcher Award} at the 2007 {\em Australian Communication
Theory Workshop}, Adelaide, Australia. She served on the Australian Research Council College of Experts (2018-2020). Yi Hong is currently an Associate Editor for {\em IEEE Transactions on Green Communications and Networking}, and was the Associate Editor for {\em IEEE Wireless Communications Letters} and {\em Transactions on Emerging Telecommunications Technologies (ETT)}.
She was the General Co-Chair of {\em IEEE Information Theory Workshop} 2014, Hobart; the Technical Program Committee Chair of
{\em Australian Communications Theory Workshop} 2011, Melbourne; and the Publicity Chair
at the {\em IEEE Information Theory Workshop} 2009, Sicily. She was a Technical Program Committee member for
many IEEE leading conferences. Her research interests include
communication theory, coding and information theory with applications to telecommunication engineering.
\end{IEEEbiography}

\end{document}